\newtheorem{theorem}{Theorem}[section]
\newtheorem{proposition}[theorem]{Proposition}
\newtheorem{lemma}[theorem]{Lemma}
\DeclareMathOperator{\C}{\mathbb{C}}
\newcommand{\OC}{\mathcal{O}}
\newcommand{\bb}[1]{\mathbb{#1}}
\newtheorem*{theorem*}{Theorem}
\newtheorem*{proposition*}{Proposition}
\theoremstyle{plain}
\newcommand{\Z}{\mathbb{Z}}
\theoremstyle{definition}
\newtheorem{definition}[theorem]{Definition}
\newtheorem{example}[theorem]{Example}
\newcommand{\xb}{\boldsymbol{x}}
\newcommand{\db}{\boldsymbol{d}}
\newcommand{\ub}{\boldsymbol{u}}
\title{Quantum Difference Equations for Grassmannians}
\author{Xingyu Cheng, Reese Lance, Nikhil Nagabandi, Andrey Smirnov}
\date{}
\begin{document}

\maketitle

\begin{abstract}
    We consider quantum difference equation (QDE) for equivariant quantum K-theory of the Grassmannian. In this paper we obtain a solution to the QDE and use the solution to asymptotically derive the Bethe ansatz equations. In the limit, we obtain similar results for the cohomological analogue. For both cases, we describe the nonequivariant solutions as well. As an application, we identify the quantum K-theory ring of $\mathrm{Gr}(k,n)$ with a quantum 5 vertex XXZ integrable spin chain.
\end{abstract}
\tableofcontents

\section{Introduction}

The intersection of quantum geometry and the study of quantum integrable systems is a rich area with active research from both mathematicians and physicists. Originally, a correspondence between quantum integrable systems and quantum K-theory was observed in a physical context by Nekrasov and Shatashvili \cite{nekrasov2009}, \cite{nekrasovq2009}, \cite{nekrasov2010}. In a mathematical context, a significant amount of work has been done to study this connection for the quantum K-theory of symplectic algebraic varieties \cite{givental2001toda}, \cite{maulikokounkov}, \cite{Okounkov2022-tf}, mainly through a representation theory lens. 
In particular, the quantum K-theory of the cotangent bundle over flag varieties has been explored \cite{rimanyitarasovvarchenko}. One of the simplest cases, the quantum K-theory of the cotangent bundle over the Grassmannian was recently studied in \cite{pushkarsmirnovzeitlin}. The quantum K-theory and quantum cohomology rings of the bundle relate to the $\mathrm{XXZ}$ and $\mathrm{XXX}$ spin chains, respectively. \cite{pushkarsmirnovzeitlin} expanded on this connection, including the description of how the operator of multiplication by the weighted exterior algebra of the tautological bundle in the quantum K-theory coincides with the Baxter Q-operator for the XXZ spin chain. This operator appears in the quantum difference equation they consider and give solutions to. In this paper, we establish analogous results for $Gr(k,n)$, focused on the quantum cohomology and K theory rings. We rely on the numerous geometric results found for these spaces (e.g. the Pieri rule) \cite{givental2000wdvvequationquantumktheory}, \cite{buch2001quantumcohomologygrassmannians}, \cite{mihalcea2004equivariantquantumschubertcalculus}, \cite{Buch_2011}.

Mihalcea \cite{mihalcealectures} studies a relation between integrable systems and the quantum K-theory of flag varieties. Our paper focuses on the (torus) equivariant quantum K-theory of the Grassmannian $\mathrm{Gr}(k,n)$ of dimension $k$ subspaces inside $\C^n$. The main result of this paper is the solution to a quantum difference equation (QDE) for this space. With quantum parameter $z$ the QDE is 
\begin{equation}\label{k-th-eqn-intro}
\Psi(qz) \OC(-1) = M(z) \Psi(z).
\end{equation}
Here $\OC(-1)$ denotes tensoring by the determinant of the tautological subbundle of $\mathrm{Gr}(k,n)$ acting as in regular K-theory, $M(z)$ is the operator of quantum multiplication by $\OC(-1)$ in K-theory, and $q$ is a shift parameter. This operator $M(z)$ commutes with the Hamiltonian from quantum XXZ spin chains, providing a link between the geometry of $\mathrm{Gr}(k,n)$ and integrable systems. To find the Hamiltonian, we begin by constructing the transfer matrix $\mathrm{T}(x)$ from the $R$ matrix. The $R$ matrix (\ref{ktheoryrmatrix}) is associated with the $\mathrm{XXZ}$ spin chain with $n$ particles. Define the monodromy matrix $ \mathrm{L}(x)=R_{a1} (x/u_1)\cdots R_{an} (x/u_n)$, where $a$ denotes the auxiliary space. Then writing $\mathrm{L}(x)$ as
$$\mathrm{L}(x)= \begin{pmatrix}
    A(x) & B(x) \\ 
    C(x) & D(x)
\end{pmatrix},$$
we perform a ``twisted" trace to obtain the transfer matrix, defined as $\mathrm{T}(x) = zA(x)+D(x)$. Expanding $\ln (\mathrm{T}(x))$ asymptotically yields a family of commuting Hamiltonians, which coincide with those commuting with $M(z)$. As a corollary, coefficients of $\mathrm T (x)$ are operators of quantum multiplication by K-theory classes. This identifies corresponding Bethe algebra with the K-theory ring.

To construct the solutions of the QDE, we take inspiration from \cite{aganagic2017quasimapcountsbetheeigenfunctions}, to construct the solutions from partition functions associated to the space. These partition functions (more in Section \ref{quantumktheorysection}) correspond to the double Grothendieck polynomials $G_\lambda (\xb; \ub )$. With these we have the following solution to the QDE:
\begin{theorem*}[Theorem \ref{mainthm}]
For each partition $\lambda$ in the $k \times (n-k)$ rectangle,  the following class in $\mathrm{K}_T(\mathrm{Gr}(k,n))[[z]]$ solves QDE (\ref{k-th-eqn-intro}):
\begin{equation} 
\label{eqn: mainthm}
     \Psi_\lambda(x_1,\dots,x_k,z)
    =\sum_{(d_1, \ldots, d_n) \geq 0} \Phi(x_1q^{d_1},\dots, x_kq^{d_k}, \ub )\, G_{\lambda} \left( x_1q^{d_1},\dots, x_kq^{d_k}; \ub\right)\, z^{|\boldsymbol{d}|}, 
\end{equation}
where $\Phi$ is defined in Theorem (\ref{mainthm}). 
\end{theorem*}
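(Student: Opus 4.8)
The plan is to substitute the series \eqref{eqn: mainthm} directly into \eqref{k-th-eqn-intro} and check the identity order by order in $z$, in the Chern-root presentation of $\mathrm K_T(\mathrm{Gr}(k,n))$. There $\OC(-1)=\det\mathcal S$ acts by ordinary multiplication by the monomial $x_1\cdots x_k$, while $M(z)=\sum_{m\ge 0}z^m M_m$ is quantum multiplication by the same class, with $M_0$ the classical product and the $M_m$, $m\ge 1$, the quantum corrections; its matrix in the Schubert (equivalently fixed-point) basis is available explicitly, either from the transfer-matrix description $\mathrm T(x)=zA(x)+D(x)$ recalled above---whose coefficients, as noted, are operators of quantum multiplication by K-theory classes---or from the quantum K-theoretic Pieri/Chevalley rule for $\mathrm{Gr}(k,n)$ \cite{Buch_2011,buch2001quantumcohomologygrassmannians}. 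Write $\Psi_\lambda(x,z)=\sum_{\db\ge 0} z^{|\db|}\Psi_{\lambda,\db}(x)$ with $\Psi_{\lambda,\db}(x)=\Phi(x_1q^{d_1},\dots,x_kq^{d_k};\ub)\,G_\lambda(x_1q^{d_1},\dots,x_kq^{d_k};\ub)$. The substitution $z\mapsto qz$ multiplies the $z^{|\db|}$-term by $q^{|\db|}$, and since $q^{|\db|}x_1\cdots x_k=(x_1q^{d_1})\cdots(x_kq^{d_k})$, the left side of \eqref{k-th-eqn-intro} becomes $\sum_\db z^{|\db|}\,(y_1\cdots y_k)\,\Phi(y;\ub)\,G_\lambda(y;\ub)$, where $y_i:=x_iq^{d_i}$---that is, each summand, read in the shifted variables, is the original summand multiplied by the monomial $y_1\cdots y_k$.

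The heart of the argument is a $q$-difference relation for the prefactor $\Phi$ of Theorem~\ref{mainthm}: an explicit identity $\Phi(\dots,x_iq,\dots;\ub)=\rho_i(x;\ub)\,\Phi(\dots,x_i,\dots;\ub)$ for a rational function $\rho_i$---this is precisely the property for which $\Phi$ is built, and $\rho_i$ is the factor that will turn into a Bethe equation---together with the standard $q$-difference behaviour of the double Grothendieck polynomials $G_\lambda(\cdot;\ub)$. Combining these yields the ratio $\Psi_{\lambda,\db+e_i}/\Psi_{\lambda,\db}$ of consecutive summands (incrementing the $i$-th component of $\db$) as an explicit rational function of the $y_j$, so that, after clearing denominators, $\Psi_\lambda$ is seen to satisfy a first-order $q$-difference equation in $z$ with leading ($\db=0$) term $\Phi(x;\ub)G_\lambda(x;\ub)$. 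The assertion of the theorem is then that this equation coincides with \eqref{k-th-eqn-intro}.

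Closing the argument requires matching the two: comparing, term by term, the $z$-series generated by the $q$-shifts of $\Phi\,G_\lambda$ against the explicit matrix of $M(z)$ in the Schubert basis. I expect this to be cleanest via a contour-integral (residue) representation of $\Psi_\lambda$, in the spirit of \cite{aganagic2017quasimapcountsbetheeigenfunctions}: shifting the integration variables by $q$ is exactly what converts the $\Psi_{\lambda,\db}$-recursion into the operator identity $\Psi_\lambda(qz)\OC(-1)=M(z)\Psi_\lambda(z)$, with the residues collected along the way reproducing the quantum corrections $M_m$, $m\ge 1$.

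The main obstacle is this last matching step: not merely that the $z$-series produced by the shifts of $\Phi\,G_\lambda$ has the right shape, but that its coefficients agree \emph{exactly} with $M(z)$; the residue/abelianization description is what is supposed to make this bookkeeping manageable. Subsidiary technical points that need care are: fixing the normalization of $\OC(-1)$ (tautological sub- versus quotient bundle, hence $x_i$ versus $x_i^{-1}$) so that the monomial produced on the left above is indeed $y_1\cdots y_k$; checking that each $z$-coefficient of \eqref{eqn: mainthm}---a priori only a symmetric Laurent polynomial (or power series) in the $x_i$---descends to an honest class in $\mathrm K_T(\mathrm{Gr}(k,n))$; and controlling the formal convergence in $q$ of the infinite products hidden inside $\Phi$.
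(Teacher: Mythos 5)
Your opening reduction is exactly right and matches the paper: substituting $z\mapsto qz$ and absorbing $q^{|\db|}$ into the monomial turns the left side into $\sum_{\db} z^{|\db|}\,(y_1\cdots y_k)\,\Phi(\boldsymbol{y};\ub)\,G_\lambda(\boldsymbol{y};\ub)$ with $y_i=x_iq^{d_i}$. But from there the proposal has a genuine gap: everything you call ``the main obstacle'' --- matching the resulting series against $M(z)$ --- is precisely where the content of the proof lives, and the mechanism you sketch (a $q$-difference relation $\Phi(\dots,x_iq,\dots)=\rho_i\,\Phi(\dots,x_i,\dots)$ plus generic ``$q$-difference behaviour'' of $G_\lambda$, to be closed by an unspecified residue computation) is not what does the work. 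The key missing ingredient is the Pieri-type combinatorial identity $x_1\cdots x_k\,G_\lambda = u_{r_1}\cdots u_{r_k}\sum_{\nu}(-1)^{|\nu/\lambda|}G_\nu$, the sum running over rook strips $\nu/\lambda$ (Proposition \ref{pierigkprop}), valid for arbitrary arguments and hence applicable at $\boldsymbol{y}=\xb q^{\db}$. When $\lambda_1\neq n-k$ all the $\nu$ stay inside the $k\times(n-k)$ rectangle, this sum is literally $M(z)G_\lambda$ by the quantum Pieri rule (Proposition \ref{kpieri}, which has no $z$-correction in that case), and the theorem follows with no $q$-shift of $\Phi$ needed at all.

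The case $\lambda_1=n-k$ is where the quantum corrections $-z\,\OC_{\hat{\overline{\nu}}}$ appear, and your proposal contains no mechanism for producing them. In the paper they arise from a cancellation you would have to discover: the Pieri expansion of $x_1\cdots x_k\,G_\lambda$ produces ``overflow'' terms $G_{\nu^a}$ indexed by partitions sticking out of the rectangle (Lemma \ref{gkpieri}), and one must show that $\sum_{\db} z^{|\db|}\Phi(\xb q^{\db})\bigl(-G_{\nu^a}+zG_{\hat{\overline{\nu}}}\bigr)=0$ for each $\nu$. This is done by expanding the bialternant determinants along their first row, killing the $d_\eta=0$ boundary terms by equivariant localization (since $(y_\eta\mid\ub)^n$ vanishes at every fixed point when $d_\eta=0$), and then using the explicit $q$-shift identity for the Pochhammer products (Lemma \ref{pochsplit}) to exhibit each minor's contribution as a telescoping sum under $d_\eta\to d_\eta+1$. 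So the $q$-difference property of $\Phi$ you correctly anticipate is used, but only in this boundary case and only to cancel the out-of-rectangle Grothendieck polynomials against the $z$-corrections of $M(z)$ --- not to set up a recursion that is then ``matched'' to $M(z)$ wholesale. Without the rook-strip identity, the case split on $\lambda_1$, and the telescoping cancellation, the argument as proposed cannot be completed.
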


In principle, this same solution can be found by taking the limit $h\to \infty$ of the QDE solution of \cite{rimanyitarasovvarchenko}, \cite{aganagic2017quasimapcountsbetheeigenfunctions}, \cite{pushkarsmirnovzeitlin}. However, our solution is derived using only combinations of Grothendieck polynomials, which is a novel approach that is also more basic and direct. The case of the Grassmannian is more symmetric and richer in properties than the case of the cotangent bundle of the Grassmannian. So a systematic study of this case from the first principles is needed. 

For the proof of our solution we use the Pieri rule, and a combinatorial relation among double Grothendieck polynomials to write the difference of the left and right sides of Equation \ref{eqn: mainthm} into telescoping sums.

From the solution $\Psi(z)$, the Bethe Ansatz equations and Bethe vectors can be obtained via the saddle point method. These provide another link to physics, and the roots of the Bethe equations lead to eigenvalues and eigenvectors of $M(z)$.

We also explore the analogous quantum differential equation (qde) in quantum equivariant cohomology given by
\begin{equation}
    \epsilon z\frac{\partial \psi(z)}{\partial z} = M^{C } (z) \psi(z),
\end{equation}
where $\epsilon \in \C^\times$ is a parameter and $M^C (z)$ denotes the operation of quantum multiplication by $-c_1 (\mathcal O(-1))$ in $\mathrm{QH}_{T}^* (\textrm{Gr}(k,n))$. In this scenario we will give solutions for $\textrm{Gr}(1,n) = \mathbb P^{n-1}$, and then use an application of geometric Satake correspondence due to \cite{cotti2024satakecorrespondenceequivariantquantum} to derive the solutions for $\textrm{Gr}(k,n)$. 

Our QDE extends to further directions of both physics and geometry. Our solution can be viewed a generalization of the $J$-function. Ueda and Yoshida describe the $J$-function and connections to 3d mirror symmetry in \cite{ueda3d}. Also from our QDE, one can derive the solutions to the qKZ equations. From this, one can derive similar geometric results found involving an affine Weyl group action on the quantum K-theory ring \cite{gorbounov2025quantumktheorygrassmanniansyangbaxter}, which extends the Seidel action \cite{LLSYseidel}.

\subsection{Layout of the paper}


In Section \ref{background}, we provide background and fix notation. Then we describe the details that will be used for the cohomology case, including the representatives of basis elements and the Pieri rule. Then the K theory case is described similarly, but with the addition of a combinatorial rule for the double Grothendieck polynomials.

We begin with the qde for cohomology  $\mathrm{QH}^*_T(\mathrm{Gr}(k,n))$ in Section \ref{quantumcohomologysection}. We define the solutions to our qde in terms of the Gamma function and the partition functions. To find the partition functions, we first find the $R$ matrix for the space geometrically. The partition functions coincide with the factorial Schur polynomials. For the solution to the qde, we first find solutions for $\mathrm{Gr}(1,n)$, then using an application of geometric Satake, the $\mathrm{Gr}(k,n)$ solution is obtained as certain $k$-wedge products of solutions for $\mathrm{Gr}(k,n)$. We finish the section describing how to asymptotically obtain the Bethe ansatz equations and what the solution looks like in the nonequivariant case.

In Section \ref{quantumktheorysection}, we describe the QDE for the K-theory case $\mathrm{QK}_T(\mathrm{Gr}(k,n))$, and the representatives of each piece of the equation. The  rest of the section is organized similarly to Section \ref{quantumcohomologysection}. We geometrically find the $R$-matrix and find the partition functions used for the solution for this space. We then describe the explicit form of the solution $\Psi(z)$ and then prove this is a solution the QDE. Then we asymptotically obtain the Bethe ansatz equations and describe the nonequivariant case.

\subsection{Acknowledgments}

All 4 authors were supported by NSF-DMS 2401380.

\section{Notations and background} \label{background}
The main space of study is $\textrm{Gr}(k,n)$, the Grassmannian of $k$-planes in $\C^n$. 
A partition $\lambda =(\lambda_1,\dots, \lambda_k)$ is a decreasing sequence of nonnegative integers: $\lambda_1 \geq \lambda_2\geq \cdots \geq \lambda_k \geq 0$. Each partition $\lambda=(\lambda_1,\dots,\lambda_k)$ corresponds to a Young diagram contained in a $k \times (n-k)$ rectangle. We use the terms partition and Young diagram interchangeably. Each of these partitions has a corresponding $k$-$subset$ of $\{1,\dots,n\}$, often referred to as a \emph{frame}. 

The $k$-subset corresponding to $\lambda$ is obtained by the following procedure: begin drawing a path on the $k\times(n-k)$ rectangle. Start at the bottom left corner of the rectangle and travel east along the bottom border of the Young diagram until the rightmost point. Then move vertically until there is a top edge in the diagram. Repeat this procedure and stop when the top right corner of the rectangle is reached (if at the ceiling of the rectangle, move east). The path should have $n$ steps. Label each step of the path with $1,\dots,n$, starting at the bottom left corner. The labels on the vertical steps form the $k$-subset. We will denote $k$-subsets with $r$.

For example, consider the case $k=5,n=9$ and $\lambda = (4,2,1,1,0)$. Graphically, the conversion is drawn in Figure \ref{fig: convert partition to subset}. The Young diagram corresponds to our partition $\lambda = (4,2,1,1,0)$, and the blue arrows display the described path. Taking the vertical step labels gives $k$-subset $r = \{ 1, 3, 4, 6, 9 \}$.

\begin{figure}[ht]
    \centering
    \begin{tikzpicture}[scale=.8]
\draw[<->] (-1,0) -- node[left] {$k=5$} (-1,4);
\draw[<->] (0,5) -- node[above] {$n-k=4$} (5,5);
\draw[dotted] (0,-1) -- (4,-1) -- (4,4) -- (0,4) -- (0,-1);
\draw (0,4) -- (4,4);
\draw (0,3) -- (4,3);
\draw (0,2) -- (1,2);
\draw (0,1) -- (1,1);
\draw (0,0) -- (1,0);
\draw (0,-1) -- (0,4);
\draw (1,0) -- (1,4);
\draw (2,2) -- (2,4);
\draw (3,3) -- (3,4);
\draw (4,3) -- (4,4);
 \draw[thick,blue,->] (0,-1) -- node[midway,right,font=\scriptsize, color=violet,xshift = -1mm]  {1} (0,0);
  \draw[thick,blue,->] (0,0)  -- node[midway,below,font=\scriptsize, color=violet,yshift = 0.5mm] {2} (1,0);
  \draw[thick,blue,->] (1,0)  -- node[midway,right,font=\scriptsize, color=violet,xshift = -1mm] {3} (1,1);
  \draw[thick,blue,->] (1,1)  -- node[midway,right,font=\scriptsize, color=violet,xshift = -1mm] {4} (1,2);
  \draw[thick,blue,->] (1,2)  -- node[midway,below,font=\scriptsize, color=violet,yshift = 1mm] {5} (2,2);
  \draw[thick,blue,->] (2,2)  -- node[midway,right,font=\scriptsize, color=violet,xshift = -1mm] {6} (2,3);
  \draw[thick,blue,->] (2,3)  -- node[midway,below,font=\scriptsize, color=violet,yshift = 1mm] {7} (3,3);
  \draw[thick,blue,->] (3,3)  -- node[midway,below,font=\scriptsize, color=violet,yshift = 1mm] {8} (4,3);
  \draw[thick,blue,->] (4,3)  -- node[midway,right,font=\scriptsize, color=violet,xshift = -1mm] {9} (4,4);
\end{tikzpicture}
    \caption{The Young diagram for partition $\lambda = (4,2,1,1,0)$. The labels of the blue path's vertical steps give the $k$-subset $r = \{ 1, 3, 4, 6, 9 \}$.}
    \label{fig: convert partition to subset}
\end{figure}
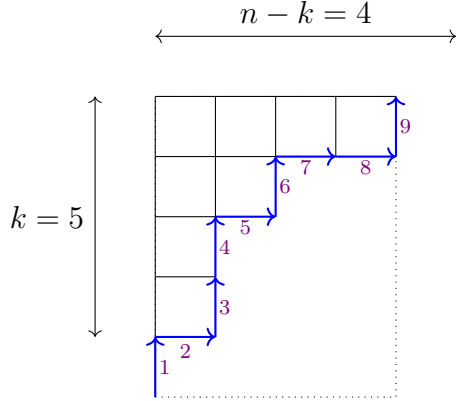

Algebraically, this correspondence is written
\begin{equation}\label{orderings}
r_i = \lambda_{(k+1-i)} +i.\end{equation}
The size of a partition $\lambda$ is defined as the number of boxes in its Young diagram, and is denoted $|\lambda |$. For a partition $\nu$ containing $\lambda$, the skew diagram $\nu/\lambda$ is called a \emph{rook strip} if each row and column contains at most one box. Equivalently, a rook strip may be defined as a skew diagram that is both a horizontal strip and a vertical strip. 
\vspace{.5cm}

The partitions of the $k \times (n-k)$ rectangle are ordered lexicographically and are labeled $\lambda^1, \dots, \lambda^{\binom{n}{k}}$. The corresponding $k$-subsets are labeled $r^1,\dots, r^{\binom{n}{k}}$.

\begin{example}
For $\textrm{Gr}(2,4)$ we have the following partitions and corresponding $k$-subsets:

\begin{align*}
    \lambda^1 = (0,0), \lambda^2 = (1,0), \lambda^3 &= (1,1), \lambda^4 =(2,0), \lambda^5 = (2,1), \lambda^6 = (2,2)\\
    r^1 = \{1,2\},r^2 = \{1,3\},r^3 &= \{2,3\},r^4 = \{1,4\},r^5 = \{2,4\},r^6 = \{3,4\}
\end{align*}
\end{example}
We also introduce the following notations: $\xb=(x_1,\dots,x_k),\ub=(u_1,\dots, u_n)$, and $\boldsymbol{d}=(d_1,\dots,d_k)$.

Over $\textrm{Gr}(k,n)$ there is a tautological sequence 
$$
0 \to \mathcal S \to \mathcal O_{\textrm{Gr}(k,n)}^{\oplus n} \to \mathcal Q \to 0
$$
where $\mathcal S$ is the tautological (sub)bundle and $\mathcal Q$ the tautological quotient bundle. We have that $\mathcal O (-1) = \det \mathcal S$ and $\mathcal O(1) = \det \mathcal Q$.

$\textrm{Gr}(k,n)$ admits a natural $T=(\C^\times)^n$-action induced by the action of $T$ on $\C^n$:
$$
(a_1, \ldots, a_n) \cdot (z_1, \ldots, z_n) = (a_1 z_1, \dots, a_n z_n).
$$
As a $T$-module, $\C^n$ decomposes as a direct sum of weight spaces $\C^n = \bigoplus_{i=1}^n \C_{u_i}$, where $u_i (a_1, \ldots, a_n) = a_i$. With this action we have equivariant versions of the above vector bundles. In this view, we can also view the $u_i$'s as equivariant Chern roots of the (equivariant) trivial bundle $\OC_{\mathrm{Gr}(k,n)}^{\oplus n}$. For the rest of the paper, $\OC(-1)$ will refer to the equivariant determinant bundle, unless otherwise stated.

\subsection{Cohomology}

Consider the complete flag $F_\bullet = (0 = F_0 \subset F_1 \subset \ldots \subset F_n = \C^n)$. The Schubert variety corresponding to a partition $\lambda = (\lambda_1 \geq \lambda_2 \geq \ldots \geq \lambda_k)$ is 
$$
X_\lambda (F_\bullet) = \{ V \in X \mid \dim (V \cap F_{k + i - \lambda_i}) \geq i \text{ for all $i \in [k]$} \},
$$
where $[k] := \{1,\dots, k\}$. The class of the Schubert variety is denoted by $\sigma_\lambda$, and does not depend on $F_\bullet$. The cohomology $\mathrm{H}^* (\mathrm{Gr}(k,n))$ (with $\Z$-coefficients) has a basis given by the Schubert classes $\sigma_\lambda$ for partitions $\lambda$ for partitions $\lambda$ in the $k \times (n-k)$ box. This basis will be referred to as the \emph{Schubert basis.}

The quantum cohomology of the Grassmannian $\textrm{Gr}(k,n)$ is an algebra, denoted as $\textrm{QH}^*(\textrm{Gr}(k,n))$. As a module, $\textrm{QH}^* (\textrm{Gr}(k,n)) = \textrm{H}^* (\textrm{Gr}(k,n)) \otimes_\Z \Z[z]$, where $z$ is the quantum parameter. The product structure of $\textrm{QH}^*(\textrm{Gr}(k,n))$ is given by a sum over partitions $\nu$ and degrees $d\geq 0$:
\begin{equation}
\label{eqn: quantum relations}
\sigma_\lambda * \sigma_\mu = \sum_{\nu,d \geq 0} z^d c_{\lambda,\mu} ^\nu \sigma_\nu
\end{equation}
where the coefficients $c^\nu_{\lambda, \mu}$ are given by certain genus 0 Gromov-Witten invariants (see \cite{BERTRAM1997289} and \cite{buch2001quantumcohomologygrassmannians}).

We will be mainly concerned with the equivariant versions of the above. The ring structure of $\textrm{QH}^*_T$ is defined very similarly to the usual quantum cohomology, except we replace $c^\nu_{\lambda, \mu}$ with certain genus 0 3-point equivariant Gromov-Witten invariants (see \cite{givental1996equivariantgromovwitten}). In this situation, $\mathrm{QH}_T^* (\mathrm{Gr}(k,n))$ will be an algebra over $\textrm{H}^*_T(\textrm{pt}) = \bb{Z}[u_1,\dots, u_n]$. $\textrm{H}^*_T (\textrm{Gr}(k,n))$ has a basis given by the equivariant versions of Schubert classes, which we also denote by $\sigma_\lambda$. Such classes can be represented as factorial Schur polynomials.

\subsubsection{Factorial Schur polynomials}
Let $x_1,\dots, x_k$ denote the Chern roots of the dual tautological subbundle, $\cal S^\vee$. It is known that $\textrm{H}^*_T(\textrm{Gr}(k,n))$ is isomorphic to the ring of polynomials symmetric in $x_1,\dots,x_k$ modulo certain relations. 

\begin{definition}
The \textit{factorial Schur polynomial} associated to the partition $\lambda$ in variables $\xb ,\ub$ is defined by the bialternant formula
\begin{equation} \label{factschur}
s_\lambda(\xb ;\ub) = \frac{\det\left( (x_i\, | \,\ub)^{\lambda_j+k-j} \right)_{1 \leq i, j \leq k}}{\prod_{i<j} (x_i-x_j)},
\end{equation}
where $ (x \,|\, \ub)^{b} =(x-u_1)\cdots (x-u_b)$ is the \emph{falling factorial}. 
\end{definition}

The equivariant Schubert classes can be represented as the factorial Schur polynomials. We may recover the ordinary Schur polynomials from these factorial Schur polynomials by sending each $u_i \to 0$.

The factorial Schur polynomials give the following presentation for the equivariant cohomology ring:
\begin{equation}
    \textrm{H}^*_T(\textrm{Gr}(k,n)) \cong \frac{\bb{Z}[\xb, \ub]^{S_k}}{\langle s_\lambda(\xb ;\ub) : \lambda \not \subseteq k \times (n-k) \text{ rectangle} \rangle}.
\end{equation}
where $\bb{Z}[\xb, \ub] ^{S_k}$ refers to the Laurent polynomials that are symmetric in the $x_1,\dots,x_k$, but not necessarily in the $u$'s.
For the equivariant quantum cohomology ring $\mathrm{QH}^*_T (\mathrm{Gr}(k,n))$, we add in the quantum multiplication relations given by equation \ref{eqn: quantum relations}. For more background on the equivariant quantum cohomology ring of the Grassmannian, we refer to \cite{buch2001quantumcohomologygrassmannians}, \cite{mihalcea2004equivariantquantumschubertcalculus}. 

\begin{example}
    Here are the factorial Schur polynomials corresponding to partitions of the $2\times 2$ box. For the Schur polynomials, take the limit $u_i \to 0$.
    \begin{align*}
        s_{(0,0)}(x_1,x_2) &= 1\\
        s_{(1,0)}(x_1,x_2) &= x_1-u_1+x_2-u_2\\
        s_{(1,1)}(x_1,x_2) &= (x_1-u_1)(x_2-u_1)\\
        s_{(2,0)}(x_1,x_2) &= \left( x_{{1}}-u_{{1}} \right)  \left( x_{{1}}-u_{{2}} \right) +
 \left( x_{{1}}-u_{{1}} \right)  \left( x_{{2}}-u_{{3}} \right) +
 \left( x_{{2}}-u_{{2}} \right)  \left( x_{{2}}-u_{{3}} \right) 
\\
        s_{(2,1)}(x_1,x_2) &= (x_1-u_1)(x_2-u_1)(x_1-u_2+x_2-u_3)\\
        s_{(2,2)}(x_1,x_2) &= (x_1-u_1)(x_1-u_2)(x_2-u_1)(x_2-u_2)\\
    \end{align*}
\end{example}

\subsection{K-theory}
To define the equivariant quantum K-theory ring for Grassmannians, we follow the treatment of \cite{mihalcealectures},\cite{Buch_2011}, \cite{gorbounov2025quantumktheorygrassmanniansyangbaxter}, with $z$ as the quantum parameter instead of $q$.
 
Denote the representation ring of $T$ by $\text{Rep}_T$. Abusing notation, we identify $u_i = [\bb{C}_{u_i}]$ and so we write the representation ring as the Laurent polynomial ring $\text{Rep}_T = \bb{Z}[u_1 ^{\pm 1} ,\dots, u_n ^{\pm 1}]$. These $u_i$'s are also the K-theoretic Chern roots of the trivial bundle.

The $T$-equivariant Grothendieck ring $\textrm{K}_T(\textrm{Gr}(k,n))$ is simply the extension by scalars of the usual K-theory ring, i.e. $\mathrm{K}_T (\mathrm{Gr}(k,n)) = \mathrm{K} ( \mathrm{Gr}(k,n)) \otimes_\Z \mathrm{Rep}_T$, with the multiplication extended linearly. $\mathrm{K}_T ( \mathrm{Gr}(k,n) )$ has a basis given by the $\binom{n}{k}$ equivariant Schubert structure sheaves $\OC_\lambda \coloneq [\OC_{X_\lambda}]$, for each $\lambda$ in the $k\times (n-k)$ rectangle. We refer to this basis as the \emph{Schubert basis} as well. Hence, as a $\mathrm{Rep}_T$-module, 
$$
\textrm{K}_T(\textrm{Gr}(k,n))=\bigoplus_\lambda \text{Rep}_T \OC_\lambda.
$$ 
The determinant tautological subbundle $\OC(-1)$ can be written as $\OC(-1) = 1-\OC_{(1)}.$
 
The $T$-equivariant quantum K-theory ring of the Grassmannian is an algebra over $\text{Rep}_T [[z]]$. As a $\text{Rep}_T [[z]]$ module,
$$
\mathrm{QK}_T(\mathrm{Gr}(k,n))  = \mathrm{K}_T(\mathrm{Gr}(k,n)) \otimes \mathrm{Rep}_T [[z]] = \bigoplus_{\lambda} \mathrm{Rep}_T [[z]] \OC_\lambda,
$$
where the sum is over all partitions $\lambda$ in the $k \times (n-k)$ rectangle. We will use $V \star W$ to denote the quantum multiplication of two elements $V, W \in \mathrm{QK}_T ( \mathrm{Gr}(k,n) )$. The quantum multiplication between basis elements is given as a sum 
$$
\OC_\lambda \star \OC_\mu = \sum \limits_{\nu, d \geq 0} N_{\lambda, \mu} ^{\nu, d} z^d \OC_\nu,
$$
where the structure constants $N^{v,d} _{\lambda, \mu}$ are given by K-theoretic equivariant genus 0 3-point Gromov-Witten invariants as described in \cite{Buch_2011} and \cite{givental1996equivariantgromovwitten}.

\subsubsection{Double Grothendieck polynomials}
Abusing notation, let $x_1,\dots,x_k$ be the K-theoretic Chern roots of the tautological subbundle. Similar to the case in cohomology, $\textrm{K}_T(\textrm{Gr}(k,n))$ is isomorphic to the ring of Laurent polynomials modulo certain relations \cite{cotti2024satakecorrespondenceequivariantquantum}. 
\begin{definition}
The \textit{double Grothendieck polynomials} associated to the partition $\lambda$ in variables $\xb,\ub$ is defined by the bialternant formula 
\begin{equation} \label{gdetformula}
    G_{\lambda} (\xb;\ub) = \frac{\det \left(x_j ^{i-1} (x_j \,|\, \ub) ^{\lambda_i +k -i} \right)_{1\leq i,j \leq k}}{\prod_{1\leq i<j\leq k }(x_j -x_i)},
\end{equation}
where $(x\, | \,\ub)$ is the \emph{K-theoretic falling factorial}, defined as $(x \,|\, \ub)^b = (1-x/u_1)\cdots(1-x/u_b).$
\end{definition}
While the notation is the same as the previously defined falling factorial, it will be clear depending on context. Also when clear, the Grothendieck polynomials may be denoted as $G_\lambda (\xb)$ or $G_\lambda$.

In $\mathrm{K}_T$, the representatives of the Schubert structure sheaves are given by the double Grothendieck polynomials. With these polynomials, we give the following presentation for the equivariant K-theory ring:
\begin{equation}
    \textrm{K}_T(\textrm{Gr}(k,n)) \cong \frac{\bb{Z}[\xb^\pm,\ub^\pm]^{S_k}}{\langle G_\lambda(\xb;\ub) : \lambda \not \subseteq k \times (n-k) \text{ rectangle} \rangle},
\end{equation}
where $\bb{Z}[\xb^\pm, \ub^\pm] ^{S_k}$ refers to the Laurent polynomials that are symmetric in the $x_1,\dots,x_k$, but not necessarily in the $u$'s.

\begin{example}
    Here are the double Grothendieck polynomials corresponding to partitions of the $2\times 2$ box. For the stable Grothendieck polynomials, send each $u_i \to 1$ for all $i$.
    \begin{align*}
        G_{(0,0)} (x_1,x_2)&= 1\\
        G_{(1,0)} (x_1,x_2)&= {1-\frac {x_{{1}}x_{{2}}}{u_{{1}}u_{{2}}}}\\
        G_{(1,1)} (x_1,x_2)&= {\frac { \left( u_{{1}}-x_{{1}} \right)  \left( u_{{1}}-x_{{2}}
 \right) }{{u_{{1}}}^{2}}}
\\
        G_{(2,0)}(x_1,x_2) &= {\frac {{x_{{1}}}^{2}x_{{2}}-x_{{2}} \left( u_{{1}}+u_{{2}}+u_{{3}}-x_
{{2}} \right) x_{{1}}+u_{{1}}u_{{2}}u_{{3}}}{u_{{1}}u_{{2}}u_{{3}}}}
\\
        G_{(2,1)} (x_1,x_2)&= {\frac { \left( u_{{1}}-x_{{1}} \right)  \left( u_{{1}}-x_{{2}}
 \right)  \left( u_{{2}}u_{{3}}-x_{{1}}x_{{2}} \right) }{{u_{{1}}}^{2}
u_{{2}}u_{{3}}}}
\\
        G_{(2,2)} (x_1,x_2)&= {\frac { \left( u_{{1}}-x_{{1}} \right)  \left( u_{{2}}-x_{{1}}
 \right)  \left( u_{{1}}-x_{{2}} \right)  \left( u_{{2}}-x_{{2}}
 \right) }{{u_{{1}}}^{2}{u_{{2}}}^{2}}}
    \end{align*}
\end{example}

In this paper we use a combinatorial identity for these double Grothendieck polynomials which holds for all inputs. This identity will be a key ingredient for the derivation of our formulas for the solutions of the K-theoretic QDE.
\begin{proposition}\label{pierigkprop}
Let $\lambda=(\lambda_1,\dots,\lambda_k)$ be a partition. We have the algebraic relation
    \begin{equation}
        x_1\cdots x_k G_\lambda  = u_{r_1}\cdots u_{r_k}\sum \limits_{\nu} (-1)^{|\lambda / \nu|} G_\nu ,
    \end{equation}
    where this sum is over all partitions $\nu$ (not necessarily in the $k\times (n-k)$ rectangle) such that $\nu \supset \lambda$ and $\nu/\lambda$ is a rook strip. $r$ is the corresponding $k$-subset to $\lambda$.
\end{proposition}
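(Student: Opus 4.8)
The plan is to prove the identity by expanding the left-hand side $x_1\cdots x_k\,G_\lambda(\xb;\ub)$ using the bialternant formula \eqref{gdetformula} and manipulating the resulting determinant. Writing $G_\lambda = \det\!\big(x_j^{\,i-1}(x_j\,|\,\ub)^{\lambda_i+k-i}\big)/\prod_{i<j}(x_j-x_i)$, multiplying by $x_1\cdots x_k$ distributes one factor $x_j$ into column $j$ of the matrix, so the numerator becomes $\det\!\big(x_j^{\,i-1}\,x_j\,(x_j\,|\,\ub)^{\lambda_i+k-i}\big)$. The key local computation is to re-expand $x_j\,(x_j\,|\,\ub)^{m}$ in terms of the $(x_j\,|\,\ub)^{m'}$ basis. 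Using the definition $(x\,|\,\ub)^b = (1-x/u_1)\cdots(1-x/u_b)$, one has the telescoping relation $x\cdot(x\,|\,\ub)^{m} = u_{m+1}\big((x\,|\,\ub)^{m} - (x\,|\,\ub)^{m+1}\big)$ — this is the K-theoretic analogue of the Pieri-type step. The subtlety is that the index shift $m \mapsto m+1$ must be tracked against the argument $u_{r_?}$; here the frame indices $r_i = \lambda_{k+1-i}+i$ enter, because $\lambda_i+k-i+1$ is exactly the exponent that appears after incrementing, and $u_{\lambda_i+k-i+1}$ is the relevant scalar, which one rewrites in terms of $u_{r_{k+1-i}}$.

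First I would fix the substitution $x_j(x_j\,|\,\ub)^{\lambda_i+k-i} = u_{\lambda_i+k-i+1}\big((x_j\,|\,\ub)^{\lambda_i+k-i} - (x_j\,|\,\ub)^{\lambda_i+k-i+1}\big)$ in each row $i$ of the numerator determinant. Then by multilinearity of the determinant in its rows, the numerator expands as a sum over subsets $S\subseteq\{1,\dots,k\}$ of rows where we pick the "$-(x_j\,|\,\ub)^{\lambda_i+k-i+1}$" term, times $\prod_{i\in S}(-u_{\lambda_i+k-i+1})\prod_{i\notin S}u_{\lambda_i+k-i}$ — wait, more carefully: each row contributes the scalar $u_{\lambda_i+k-i+1}$ either way, so we factor $\prod_i u_{\lambda_i+k-i+1} = \prod_i u_{r_{k+1-i}} = u_{r_1}\cdots u_{r_k}$ out front (this is where the claimed product of frame variables appears), and the remaining determinant is $\sum_{S}(-1)^{|S|}\det\!\big(M^{(S)}\big)$ where row $i$ of $M^{(S)}$ is $x_j^{\,i-1}(x_j\,|\,\ub)^{\lambda_i+k-i+\mathbf 1[i\in S]}$. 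The exponent vector $(\lambda_i+k-i+\mathbf 1[i\in S])_i$ corresponds — after sorting into decreasing order and correcting the Vandermonde sign — to a partition $\nu$ with $\nu \supseteq \lambda$. One checks that the determinant $\det(M^{(S)})/\prod_{i<j}(x_j-x_i)$ equals $\pm G_\nu$ (or vanishes) depending on whether the shifted exponent sequence is strictly decreasing; the sign is absorbed into reordering rows, and the nonvanishing condition forces $\nu/\lambda$ to contain at most one box per row, i.e.\ to be a horizontal-and-vertical strip — a rook strip. The combinatorial bookkeeping identifies $|S|$ with $|\nu/\lambda|$, yielding the sign $(-1)^{|\lambda/\nu|}=(-1)^{|\nu/\lambda|}$.

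An alternative, cleaner route avoids sign-chasing in the determinant: prove the identity first for $\lambda$ with distinct parts / generic position via the $\mathrm{GL}$-character interpretation (stable Grothendieck polynomials are characters, and multiplication by $e_k = x_1\cdots x_k$ on the Schur/Grothendieck side is governed by the K-theoretic Pieri rule of \cite{Buch_2011}), then observe both sides are polynomial identities in $x_1,\dots,x_k,u_1,\dots,u_n$ and conclude by density/Zariski-closure that it holds for all inputs. The paper emphasizes the identity "holds for all inputs," so I would make sure the final step is a genuine polynomial identity argument rather than a geometric one valid only modulo the ideal of relations.

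The main obstacle I anticipate is the sign and reordering analysis: after the row-wise split, the exponent sequence $(\lambda_i+k-i+\mathbf 1[i\in S])$ need not be decreasing, so one must permute rows to restore the bialternant form, and verifying that the accumulated sign from row permutations exactly cancels so that each surviving term is $+\,G_\nu$ (never $-G_\nu$ beyond the global $(-1)^{|\nu/\lambda|}$), together with pinning down precisely which $S$ give non-degenerate (non-repeated) exponents, is the delicate combinatorial heart of the argument. This is essentially the statement that $\nu/\lambda$ being a rook strip is equivalent to the shifted sequence being a genuine strictly-decreasing partition sequence with no collisions.
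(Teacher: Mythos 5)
The paper's own ``proof'' of Proposition \ref{pierigkprop} is a one-line citation to \cite{wheelerG}, so your attempt at a self-contained determinantal argument is a genuinely different route and would be a worthwhile addition. Your opening moves are correct: the relation $x\,(x\,|\,\ub)^{m}=u_{m+1}\bigl((x\,|\,\ub)^{m}-(x\,|\,\ub)^{m+1}\bigr)$ checks out, and $\prod_i u_{\lambda_i+k-i+1}=\prod_i u_{r_{k+1-i}}=u_{r_1}\cdots u_{r_k}$, so the frame product does factor out exactly as you say. The gap is in the mechanism you propose for why only rook strips survive. First, unlike the factorial Schur bialternant, a collision of exponents does \emph{not} kill $\det M^{(S)}$ here, because row $i$ carries the extra prefactor $x_j^{\,i-1}$: two rows of the form $x_j^{\,i-1}(x_j\,|\,\ub)^{E}$ and $x_j^{\,i}(x_j\,|\,\ub)^{E}$ are not proportional (for $k=2$, $\lambda=(0,0)$, $S=\{2\}$ the determinant is $(x_1|\ub)^1(x_2|\ub)^1(x_2-x_1)\neq 0$). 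Second, no row reordering is ever needed: since each exponent is incremented by at most $1$, the shifted sequence either stays strictly decreasing or has an exact collision, so the ``accumulated sign from row permutations'' you worry about does not arise. Third, and most importantly, ``at most one box per row'' is automatic from $[i\in S]\le 1$ and is \emph{not} the rook-strip condition; what can fail is ``at most one box per column,'' namely when $\lambda_i=\lambda_{i+1}$ and both $i,i+1\in S$. Those terms produce perfectly valid partitions $\nu$ with strictly decreasing exponents and nonvanishing determinants, yet they must not appear in the final answer.

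The correct mechanism is a pairwise cancellation between the non-rook-strip terms and the collision terms, which your writeup never identifies. Concretely, if $\lambda_i=\lambda_{i+1}=a$ and $i+1\in S$, pair $S$ with $S\cup\{i\}$; their signed contributions differ only in row $i$, and
\begin{equation*}
(x_j\,|\,\ub)^{a+k-i+1}-(x_j\,|\,\ub)^{a+k-i}
=-\tfrac{1}{u_{a+k-i+1}}\,x_j\,(x_j\,|\,\ub)^{a+k-i},
\end{equation*}
which is proportional to row $i+1$ of both matrices, so the difference of determinants vanishes. (For runs of three or more equal parts one must organize this into a genuine sign-reversing involution, e.g.\ toggling the smallest offending index, but the two-term identity above is the engine.) Without this step your argument produces extra terms $G_\nu$ for non-rook-strip $\nu$ and fails to dispose of the non-partition exponent vectors. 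Your proposed alternative route also does not close the gap: Buch's Pieri rule holds in $\mathrm{K}_T(\mathrm{Gr}(k,n))$, i.e.\ modulo the ideal generated by the $G_\mu$ with $\mu\not\subseteq k\times(n-k)$, whereas the Proposition is an identity of Laurent polynomials whose right-hand side explicitly involves such $\mu$; you flag this issue yourself but do not resolve it, and ``density'' cannot upgrade a congruence modulo a nonzero ideal to an equality.
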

\begin{proof}
    This follows from section 4.2.1, line (25) of \cite{wheelerG}.
\end{proof}

\section{The quantum cohomology qde} \label{quantumcohomologysection}

The qde for the $\mathrm{QH}^*_T(\mathrm{Gr}(k,n))$ case is 
\begin{equation}
    \epsilon z\frac{\partial \psi(z)}{\partial z} = M^{C } (z) \psi(z),
\end{equation}
where $\epsilon \in \mathbb{C}^\times$. We will use a superscript $1n$ to denote solutions of this qde in the case of $\mathrm{Gr}(1,n) = \bb P^{n-1}$ and no superscript to indicate $\psi$ is a solution for $\mathrm{Gr}(k,n).$ 

We will use partition functions of the space to construct the solution. We will first define $M^C (z)$, and then obtain the $R$-matrix and partition functions. Then we will describe $\psi^{1n}$ and prove it is a solution to the qde. From here, we use the Satake correspondence as stated in Theorem 7.19 of \cite{cotti2024satakecorrespondenceequivariantquantum} to construct the solution $\psi$ to the qde for $\mathrm{QH}^*_T (\mathrm{Gr}(k,n))$.

\subsection{The operator \texorpdfstring{$M^C(z)$}{MC(z)}}
Let $M^C(z) := -c_1(\OC(-1)) *$, where $c_1(\OC(-1)) *$ is the operator of quantum multiplication by $c_1 (\mathcal O(-1))$. Note $M^C$ coincides with $\sigma_1 *$. The formula for this operator is a special case of the Pieri rule (\cite{mihalcea2004equivariantquantumschubertcalculus}) for $\sigma_1 *$: 
\begin{proposition} \label{cpieri}
Let $\mu =(\mu_1,\dots,\mu_k)$ be a partition in the $k \times (n-k)$ rectangle and $r=\{r_1,\dots,r_k\}$ be the corresponding $k$-subset. Then $$M^C(z) \sigma_\mu = \sum_\lambda \sigma_\lambda +\left(\sum _{i=1} ^k u_{r_i}\right)\sigma_\mu + z \sum_\nu \sigma_\nu ,$$ where $\lambda$ ranges over all partitions such that $|\lambda| = |\mu| +1$ and $$n-k \geq \lambda_1 \geq \mu_1 \geq \lambda_2 \geq \mu_2 \geq \cdots \geq \lambda_k \geq \mu_k,$$
and $\nu$ ranges over all partitions with $|\nu| = |\mu|+1-n$ and 
$$\mu_1-1 \geq \nu_1 \geq \mu_2-1 \geq \cdots \geq \mu_k -1 \geq \nu_k \geq 0.$$
\end{proposition}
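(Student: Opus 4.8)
The plan is to read Proposition \ref{cpieri} off from Mihalcea's equivariant quantum Pieri rule \cite{mihalcea2004equivariantquantumschubertcalculus} (building on Bertram's quantum Pieri rule \cite{BERTRAM1997289}), specialized to the divisor class $p=1$, by assembling its three pieces separately: the classical Pieri part, the equivariant self-term, and a single quantum correction. First I would split the quantum product into its classical part and its corrections,
\[
\sigma_1 * \sigma_\mu \;=\; \sigma_1 \cdot \sigma_\mu \;+\; \sum_{d\ge1} z^{d}\,\Delta_d(\mu),
\]
with $\Delta_d(\mu)\in\mathrm{H}^*_T(\mathrm{Gr}(k,n))$ the degree-$d$ Gromov--Witten correction, and then treat $d=0$ and $d\ge 1$ in turn.

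For the classical term I would invoke the equivariant Chevalley--Monk formula --- equivalently the factorial-Schur Pieri rule for the multiplier $s_{(1)}(\xb;\ub)$ --- giving $\sigma_1\cdot\sigma_\mu=\sum_\lambda\sigma_\lambda+c_\mu\sigma_\mu$, where $\lambda$ ranges over the partitions obtained from $\mu$ by adding a single box and staying inside the $k\times(n-k)$ rectangle (exactly those with $|\lambda|=|\mu|+1$ and $n-k\ge\lambda_1\ge\mu_1\ge\cdots\ge\lambda_k\ge\mu_k$), and $c_\mu$ is the restriction of $\sigma_1$ to the $T$-fixed point $p_r$ attached to the $k$-subset $r$ of $\mu$; since the Chern roots of $\mathcal S^\vee$ specialize at $p_r$ to the weights $u_{r_i}$, this restriction is $\sum_{i=1}^k u_{r_i}$. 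I expect this to need a little care to line up the paper's normalization of the classes $\sigma_\lambda$ and of the equivariant parameters $u_i$ --- different references shift $\sigma_1$ by a multiple of $\sum_{i\le k}u_i$ --- so I would pin the convention down on $\mathbb P^{n-1}=\mathrm{Gr}(1,n)$, where the claimed formula must read $\sigma_1*\sigma_{(j)}=\sigma_{(j+1)}+u_{j+1}\sigma_{(j)}$ for $j<n-1$ and $\sigma_1*\sigma_{(n-1)}=u_n\sigma_{(n-1)}+z\,\sigma_{(0)}$.

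For the quantum corrections the decisive input is that $\sigma_1$ is a divisor class, whence by the quantum Pieri rule and its equivariant lift \cite{BERTRAM1997289,mihalcea2004equivariantquantumschubertcalculus} the only nonzero $\Delta_d$ is $\Delta_1$; I would flag the vanishing of the $d\ge2$ Gromov--Witten invariants as the one genuinely nontrivial ingredient --- a bare dimension count does not rule out $d=2$ for Grassmannians such as $\mathrm{Gr}(3,9)$ --- so here I would simply lean on the cited literature. Granting it, the quantum Pieri rule identifies $\Delta_1(\mu)=\sum_\nu\sigma_\nu$ over the partitions $\nu$ with $|\nu|=|\mu|+1-n$ and $\mu_1-1\ge\nu_1\ge\mu_2-1\ge\cdots\ge\mu_k-1\ge\nu_k\ge0$ (equivalently, $\nu$ obtained from $(\mu_1-1,\dots,\mu_k-1)$ by deleting a horizontal strip of size $n-k-1$; in particular there is no quantum term unless $\mu_k\ge1$). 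Combining the three contributions gives the stated identity. The main obstacle is therefore bookkeeping rather than computation: correctly importing Mihalcea's sign/shift conventions for the equivariant diagonal coefficient, and citing the degree-one truncation of the quantum product precisely.
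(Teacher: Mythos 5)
Your proposal is correct and matches the paper's approach: the paper gives no independent proof of Proposition \ref{cpieri}, presenting it purely as the $p=1$ specialization of Mihalcea's equivariant quantum Pieri rule, which is exactly the route you take. Your extra care with the normalization of the diagonal coefficient (checking $x_i|_{p_r}=u_{r_i}$ so that $-c_1(\mathcal O(-1))$ restricts to $\sum_i u_{r_i}$, and sanity-checking on $\mathbb P^{n-1}$) and your flagging that the vanishing of the $d\ge 2$ corrections is a cited theorem rather than a dimension count are both accurate and consistent with the conventions used elsewhere in the paper.
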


With $M^C(z)$ defined, we obtain the $R$-matrix and the partition functions used to construct the solution to the qde.

\subsection{Partition functions for cohomology} \label{cohpar}
We find the $R$ matrix from a transition function of $\bb{P}^1$ as described in \cite{Okounkov2022-tf}. First consider the torus $\bb{C}_a ^\times$ acting on $\bb{P}^1$ as
$$[x:y] \to [ax:y] = [x: a^{-1}y].$$
We have the fixed points $p_1 = [1:0]=0$ and $p_2=[0:1]=\infty$.  We assign the ordering and weights as:
\[
\tikz[baseline=(p1.base), thick, >=stealth]{
  \node[circle, fill, inner sep=1.5pt, label=below:$p_1$] (p1) at (0,0) {};
  \node[circle, fill, inner sep=1.5pt, label=below:$p_2$] (p2) at (4.5,0) {};

  \draw[-] (p1) -- node[pos=0.5, yshift=-0.01ex] {${<}$} (p2);

  \node[above=4mm of p1] {$-u$};
  \node[above=4mm of p2] {$u$};
}
\]
Then we have attracting sets $Attr_{+} (p_1)=\bb{P}^1$ and $Attr_{+}(p_2)=p_2$. From this we have the positive attracting matrix defined with restrictions as $T_{ij}^+ = Attr_+ (p_j) |_{p_i}$:
\begin{equation}\label{firstp1}
    T^+ = \begin{pmatrix}
        Attr_+ (p_1) |_{p_1} &Attr_+ (p_2) |_{p_1}\\
        Attr_+ (p_1) |_{p_2}&Attr_+ (p_2) |_{p_2}
    \end{pmatrix}= \begin{pmatrix}
        1&0 \\ 1 &u
    \end{pmatrix}.
\end{equation}
Then consider the same diagram with the opposite order:
\[
\tikz[baseline=(p1.base), thick, >=stealth]{
  \node[circle, fill, inner sep=1.5pt, label=below:$p_1$] (p1) at (0,0) {};
  \node[circle, fill, inner sep=1.5pt, label=below:$p_2$] (p2) at (4.5,0) {};

  \draw[-] (p1) -- node[pos=0.5, yshift=-0.01ex] {${>}$} (p2);

  \node[above=4mm of p1] {$-u$};
  \node[above=4mm of p2] {$u$};
}
\]
Then we have $Attr_{-} (p_1)=p_1$ and $Attr_{-}(p_2)=\bb{P}^1$. From this we have
\begin{equation}\label{secp1}
    T^- = \begin{pmatrix}
        Attr_- (p_1) |_{p_1} &Attr_- (p_2) |_{p_1}\\
        Attr_- (p_1) |_{p_2}&Attr_- (p_2) |_{p_2}
        \end{pmatrix}= \begin{pmatrix}
        -u&1 \\ 0 &1
    \end{pmatrix}.
\end{equation}
Then we compute using (\ref{firstp1}) and (\ref{secp1}) to get
\begin{equation}
    (T^-)^{-1} \cdot T^+ = \begin{pmatrix}
        0&1\\
        1& u
    \end{pmatrix}.
\end{equation}

This forms the middle block of our $R$ matrix:
$$
R(u)=\begin{pmatrix}
    1&0&0&0\\
    0&0&1&0\\
    0&1&u&0\\
    0&0&0&1
\end{pmatrix}.$$

This satisfies the Yang-Baxter equation 
$$R_{a 1}(u_1-x) R_{a 2}(u_2-x) R_{12} (u_2-u_1)=R_{1 2}(u_2-u_1) R_{a 2}(u_2-x) R_{a 1}(u_1-x). $$
As written, $a$ denotes the auxiliary space and each Lax operator acts on the spaces of the subscript. For instance, $R_{a 1} : \bb{C}_a \otimes \bb{C}_{1}\otimes \bb{C}_{2}\to \bb{C}_a \otimes \bb{C}_{1}\otimes \bb{C}_{2} $ acts as $R \otimes I$, only on the first two and as the identity on the third space.
With these weights, the partition function can be obtained via lattice models. This was done in \cite{aggaC}, and is the same as the factorial Schur polynomial.

\begin{proposition}
The partition function for $\mathrm{H}^*_T(\textrm{Gr}(k,n))$ coincides with the factorial Schur polynomials:
    $$s_\lambda(\xb;\ub) = \frac{\det\left( (x_i\, | \,\ub)^{\lambda_j+k-j} \right)_{1 \leq i, j \leq k}}{\prod_{i<j} (x_i-x_j)},$$
where $ (x \,|\, \ub)^{b} =(x-u_1)\cdots (x-u_b)$ is the falling factorial.
\end{proposition}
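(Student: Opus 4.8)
The plan is to recognize the proposition as an equality between two descriptions of one object. Following \cite{aggaC}, I would realize the ``partition function for $\mathrm{H}^*_T(\mathrm{Gr}(k,n))$'' as the partition function $Z_\lambda(\xb;\ub)$ of the five-vertex square-ice model on a $k\times n$ lattice whose local Boltzmann weights are the nonzero entries of the $R$-matrix constructed above, and then prove $Z_\lambda(\xb;\ub)$ equals the bialternant in (\ref{factschur}). Concretely: fix a grid with $k$ horizontal lines carrying the parameters $x_1,\dots,x_k$ and $n$ vertical lines carrying $u_1,\dots,u_n$; edge states lie in $\{0,1\}$; the vertex in row $i$, column $j$ is weighted by the entry of $R$ at the appropriate shifted argument indexed by its four adjacent edge states; and the boundary is frozen, with all left and bottom external edges set to $0$ and the $n$ top external edges carrying $1$ exactly in the columns belonging to the $k$-subset $r=r(\lambda)$ attached to $\lambda$ via (\ref{orderings}). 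The only nonzero weights are the two frozen diagonal weights, two turning weights equal to $1$, and one weight equal to $u$, so $Z_\lambda$ is visibly a polynomial in $\xb$ of degree $\le\lambda_1$ in each $x_i$.

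The first real step is to show $Z_\lambda\in\Z[\xb,\ub]^{S_k}$ by the standard ``train'' argument: insert an auxiliary $R$-matrix between rows $i$ and $i+1$ at the left boundary, slide it across the whole lattice using the Yang--Baxter equation displayed above, and absorb it at the right boundary; because the external edges on both sides are frozen, the inserted vertex contributes only an overall scalar (in fact $1$ after bookkeeping), and the net effect is precisely the transposition $x_i\leftrightarrow x_{i+1}$ in the remaining lattice. Since adjacent transpositions generate $S_k$, this gives symmetry in $\xb$, and together with the degree bound it pins $Z_\lambda$ down to ``a symmetric-function's worth'' of remaining data.

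The second step is the identification itself. One route is to invoke the vanishing characterization of factorial Schur polynomials (Okounkov; Molev--Sagan): a symmetric polynomial of the above degree is $s_\lambda(\xb;\ub)$ once one checks that specializing $\xb$ at the ``$\mu$-diagonal'' sequence of $u$'s vanishes for every $\mu\not\supseteq\lambda$ and equals the prescribed nonzero value at $\mu=\lambda$. In the lattice model each substitution $x_i\mapsto u_j$ freezes additional vertices, so the state sum either collapses to the empty set of admissible configurations, forcing the required vanishing, or collapses to a single configuration computing the normalization. A more direct route, and the one actually carried out in \cite{aggaC}, is the column transfer-matrix computation combined with the Lindström--Gessel--Viennot lemma for non-intersecting lattice paths: the path endpoints are dictated by $r$, the single-path generating functions are exactly the falling factorials $(x_i\,|\,\ub)^{\lambda_j+k-j}$, and the LGV determinant divided by $\prod_{i<j}(x_i-x_j)$ is precisely the right-hand side of (\ref{factschur}).

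The main obstacle is the last step. The symmetry argument and degree count are routine once the vertex conventions are fixed, but translating the frozen lattice configurations under $x_i\mapsto u_j$ into the exact vanishing-and-normalization data of $s_\lambda$ (equivalently, verifying that the LGV path endpoints and edge weights reproduce the bialternant) has to be done uniformly over all partitions $\lambda$ in the $k\times(n-k)$ rectangle, and this bookkeeping is where the content lies. Since it is already established in \cite{aggaC}, in the write-up I would mainly verify that our $R$-matrix, read off from (\ref{firstp1})--(\ref{secp1}), reproduces their Boltzmann weights, and then cite their computation.
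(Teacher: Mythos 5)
Your proposal is correct and ends up in the same place as the paper, which offers no argument beyond citing \cite{aggaC} for the lattice-model computation; your plan likewise defers the substantive bookkeeping to that reference after verifying the Boltzmann weights match the $R$-matrix from (\ref{firstp1})--(\ref{secp1}). The additional detail you sketch (the Yang--Baxter ``train'' argument for symmetry and the vanishing/LGV identification) is a faithful outline of the standard proof and goes beyond what the paper records, but it is not a different route.
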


\subsection{\texorpdfstring{$\textrm{QH}^*_T(\textrm{Gr}(1,n))$}{QHT} case: solution to qde}
We prove the solution to the qde 
$\epsilon z {\partial }_z \psi(z)^{1n} = M(z)^{C} \psi(z)^{1n}$, using products of Gamma functions to construct the explicit form of the solution. Recall the identity $\Gamma(x+1) = x\Gamma(x)$.

\begin{proposition} \label{qdesolnceq1n}
    The solution the qde for $\mathrm{QH}^*_T(\mathrm{Gr}(1,n))$ is as follows: choose a partition $\lambda$ in the $1\times (n-1)$ rectangle. Then the solution $\psi_\lambda^{1n}$ is a class in $\mathrm{QH}^*_T(\mathrm{Gr}(1,n))$ of the form:

    \begin{equation}
        \psi_\lambda ({x}, \ub, z)^{1n} = z^{\frac{x}{\epsilon}} \sum_{d=0} ^\infty \frac{1}{\prod \limits_{\ell=1} ^n \Gamma \left(\frac{x}{\epsilon}+d-\frac{u_\ell}{\epsilon}+1\right) }s_\lambda (x+d \epsilon) \left(\frac{z}{\epsilon^n}\right)^d
    \end{equation}
\end{proposition}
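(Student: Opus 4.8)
The plan is to verify directly that the proposed series $\psi_\lambda^{1n}(x,\ub,z)$ satisfies the qde $\epsilon z\,\partial_z \psi = M^C(z)\psi$ by computing both sides as formal power series in $z$ and matching coefficients of each power $z^{x/\epsilon + d}$. Since $M^C(z)$ acts as quantum multiplication by $\sigma_1$ on $\mathrm{QH}^*_T(\mathbb{P}^{n-1})$, I would first record the Pieri formula from Proposition \ref{cpieri} in the rank-one case: for the class represented by $s_\lambda(x)$ with $\lambda = (\lambda_1)$ in the $1\times(n-1)$ box and corresponding $1$-subset $r = \{\lambda_1 + 1\}$, we have $M^C(z)\,s_\lambda = s_{\lambda^+} + u_{\lambda_1+1}\, s_\lambda + z\,[\lambda_1 = n-1]\, s_{(0)}$, where $\lambda^+ = (\lambda_1+1)$ when $\lambda_1 < n-1$ (and the first term is absent when $\lambda_1 = n-1$). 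In polynomial terms, the key fact I want is the recursion $(x - u_{\lambda_1+1})\, s_{(\lambda_1)}(x) = s_{(\lambda_1+1)}(x)$ coming from the bialternant/falling-factorial definition, together with the "wrap-around" relation $\prod_{\ell=1}^{n}(x-u_\ell)\, s_{(0)}(x)$ being the relation that gets traded for $z\, s_{(0)}$ in the quantum ring.

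Next I would compute the left-hand side. Writing $\psi_\lambda^{1n} = z^{x/\epsilon}\sum_{d\geq 0} a_d(x)\,(z/\epsilon^n)^d$ with $a_d(x) = s_\lambda(x + d\epsilon)\big/\prod_{\ell=1}^n \Gamma\!\big(\tfrac{x}{\epsilon} + d - \tfrac{u_\ell}{\epsilon} + 1\big)$, applying $\epsilon z\,\partial_z$ brings down a factor $x + d\epsilon$ from the term $z^{x/\epsilon + d}$. So the coefficient of $z^{x/\epsilon}(z/\epsilon^n)^d$ on the left is $(x + d\epsilon)\,a_d(x)$. On the right, $M^C(z)$ contributes three pieces: the classical raising term and the $u_{\lambda_1+1}$ diagonal term act within fixed $d$, while the quantum term $z\cdot(\text{something})$ shifts $d \mapsto d-1$ but also carries the extra $1/\epsilon^n$ relative to how $z$ is normalized in the series — this is exactly why the prefactor $z/\epsilon^n$ appears. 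The crux is then the Gamma-function identity: using $\Gamma(t+1) = t\,\Gamma(t)$,
\begin{equation}
\frac{1}{\Gamma\!\big(\tfrac{x}{\epsilon}+d-\tfrac{u_\ell}{\epsilon}+1\big)} = \frac{\tfrac{x}{\epsilon}+d-\tfrac{u_\ell}{\epsilon}}{\Gamma\!\big(\tfrac{x}{\epsilon}+d-\tfrac{u_\ell}{\epsilon}+1\big)} \cdot \frac{1}{\tfrac{x}{\epsilon}+d-\tfrac{u_\ell}{\epsilon}},
\end{equation}
so shifting $d\mapsto d-1$ in the product $\prod_\ell \Gamma(\cdots)$ multiplies $a_{d-1}$ by $\prod_{\ell=1}^n\big(\tfrac{x}{\epsilon}+d-1-\tfrac{u_\ell}{\epsilon}+1\big) = \prod_{\ell=1}^n \tfrac{1}{\epsilon}\big(x + (d-1)\epsilon - u_\ell + \epsilon\big) = \epsilon^{-n}\prod_\ell\big(x + d\epsilon - u_\ell\big)$, which is $\epsilon^{-n}$ times the "wrap-around" polynomial evaluated at $x + d\epsilon$. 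Combined with the $1/\epsilon^n$ from the series normalization, I expect these to telescope so that the quantum term exactly supplies the part of $(x + d\epsilon)s_\lambda(x+d\epsilon)$ not produced by the classical Pieri terms.

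The main obstacle, I expect, is bookkeeping the interplay between the $\epsilon$-shift in the \emph{argument} of $s_\lambda$ (the $x + d\epsilon$) and the $\epsilon$-shift coming from $\Gamma(t+1) = t\Gamma(t)$ acting on the \emph{index} $x/\epsilon + d$: one must check that the polynomial recursions for factorial Schur polynomials ($(x - u_b)$-type relations and the degree-$n$ wrap-around) are compatible, after the substitution $x \rightsquigarrow x + d\epsilon$, with the Gamma shifts, and that the overall powers of $\epsilon$ match on the nose. Concretely I would fix $d$, expand both sides as polynomials in $x$ (or restrict to the $n$ torus fixed points $x = u_i$ and use equivariant localization, as the commented-out remark in the excerpt suggests, to reduce to a finite check), and verify the identity termwise. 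A secondary point to address is convergence/well-definedness of the series as an element of $\mathrm{QH}^*_T(\mathrm{Gr}(1,n))[[z]]$: since $s_\lambda(x+d\epsilon)$ is polynomial and the Gamma reciprocals are entire, each coefficient of $z^d$ is a well-defined class, so this is routine once the recursion is in place.
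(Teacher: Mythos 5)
Your proposal is correct and follows essentially the same route as the paper's proof: direct coefficient matching using the rank-one Pieri rule, the factorial Schur recursion $(x-u_{\lambda_1+1})s_{(\lambda_1)}(x)=s_{(\lambda_1+1)}(x)$ (which makes the non-quantum case vanish identically), and the Gamma shift $\Gamma(t+1)=t\Gamma(t)$ to telescope the quantum term for $\lambda=(n-1)$, with the $d=0$ boundary term killed because $\prod_{\ell}(x-u_\ell)$ restricts to zero at every fixed point. The only thing to make explicit in a full write-up is that this vanishing of the $d=0$ term is exactly what closes the telescope, which you have correctly flagged via equivariant localization.
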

\begin{proof}
Out of the possible partitions, we have two cases. For convenience we will omit writing $x$ and $\ub$ in $\psi(x,\ub,z)$ for the proof.
\vspace{.5cm}

\emph{Case 1:}  $\lambda \neq (n-1)$
Recall that with Proposition \ref{cpieri}, $M(z)^C s_\lambda = u_{r_1} s_\lambda + s_{\lambda+1}$, where $r$ is the corresponding $1$-subset to $\lambda$ (in this case $\lambda_1+1=r_1$). Then 
$$M(z)^C\psi_\lambda (z)^{1n} =  z^{\frac{x}{\epsilon}} \sum_{d=0} ^\infty \frac{ u_{r_1} s_\lambda (x+d \epsilon) +s_{\lambda+1} (x+d \epsilon) }{\prod_{\ell=1} ^n \Gamma \left(\frac{x}{\epsilon}+d-\frac{u_\ell}{\epsilon}+1\right)}\left(\frac{z}{\epsilon^n}\right)^d.$$
With this, the difference of both sides is
\begin{align}
&\epsilon z \frac{\partial }{\partial z} \psi_\lambda(z)^{1n} - M(z)^{C} \psi_\lambda(z)^{1n} \nonumber\\
&= \epsilon z \cdot   \sum_{d=0} ^\infty \frac{(x/\epsilon +d)\, s_\lambda (x+d \epsilon) }{\prod_{\ell=1} ^n \Gamma \left(\frac{x}{\epsilon}+d-\frac{u_\ell}{\epsilon}+1\right)} z^{\frac{x}{\epsilon}+d-1}{\epsilon^{-nd}}     \\
&-z^{\frac{x}{\epsilon}}     \sum_{d=0} ^\infty \frac{ u_{r_1} s_\lambda (x+d \epsilon) +s_{\lambda+1} (x+d \epsilon) }{\prod_{\ell=1} ^n \Gamma \left(\frac{x}{\epsilon}+d-\frac{u_\ell}{\epsilon}+1\right)}\left(\frac{z}{\epsilon^n}\right)^d \nonumber \\
&=z^{\frac{x}{\epsilon}}     \sum_{d=0} ^\infty \frac{ (x+d\epsilon -u_{r_1}) s_\lambda (x+d \epsilon) -s_{\lambda+1} (x+d \epsilon) }{\prod_{\ell=1} ^n \Gamma \left(\frac{x}{\epsilon}+d-\frac{u_\ell}{\epsilon}+1\right)}\left(\frac{z}{\epsilon^n}\right)^d.\label{clsm}
\end{align}
By the factorial Schur polynomial formula (\ref{factschur}), we have $s_\lambda(x) = \prod_{m=1}^{\lambda_1}(x-u_m)$. From this, the numerator of (\ref{clsm}) simplifies as

\begin{align*}
    (x\!+\!d\epsilon\!-\!u_{r_1}) s_\lambda (x\!+\!d \epsilon) \ -\  &s_{\lambda+1} (x\!+\!d \epsilon)= (x\!+\!d\epsilon\!-\!u_{r_1}) \prod_{m=1}^{\lambda_1}(x\!+\!d\epsilon\!-\!u_m) -\prod_{m=1}^{\lambda_1+1}(x\!+\!d\epsilon\!-\!u_m)\\
    &=(x\!+\!d\epsilon\!-\!u_{r_1}) \prod_{m=1}^{\lambda_1}(x\!+\!d\epsilon\!-\!u_m) -(x\!+\!d\epsilon\!-\!u_{\lambda_1+1}) \prod_{m=1}^{\lambda_1}(x\!+\!d\epsilon\!-\!u_m)\\
    &=(x\!+\!d\epsilon\!-\!u_{r_1}) \prod_{m=1}^{\lambda_1}(x\!+\!d\epsilon\!-\!u_m) -(x\!+\!d\epsilon\!-\!u_{r_1}) \prod_{m=1}^{\lambda_1}(x\!+\!d\epsilon\!-\!u_m)\\
    &=0,
\end{align*}
which results in (\ref{clsm}) reducing to 0.
\vspace{0.5cm}
  
\emph{Case 2:} $\lambda = (n-1)$

The beginning is similar to case 1, but note $M(z)^C s_{n-1}=u_n s_\lambda +z$, and we will show the difference of both sides telescopes to 0. Observe
\begin{align}
      &\epsilon z \frac{\partial }{\partial z} \psi_{\lambda}(z)^{1n} - M(z)^{C} \psi_\lambda(z)^{1n} \nonumber\\
      &= \epsilon z \cdot   \sum_{d=0} ^\infty \frac{(x/\epsilon +d)\, s_{n-1} (x+d \epsilon) }{\prod_{\ell=1} ^n \Gamma \left(\frac{x}{\epsilon}+d-\frac{u_\ell}{\epsilon}+1\right)} z^{\frac{x}{\epsilon}+d-1}{\epsilon^{-nd}}-        z^{\frac{x}{\epsilon}}     \sum_{d=0} ^\infty \frac{ u_{n} s_{n-1} (x+d \epsilon) +z }{\prod_{\ell=1} ^n \Gamma \left(\frac{x}{\epsilon}+d-\frac{u_\ell}{\epsilon}+1\right)}\left(\frac{z}{\epsilon^n}\right)^d \nonumber \\
      &=z^{\frac{x}{\epsilon}}     \sum_{d=0} ^\infty \frac{ (x+d\epsilon-u_n) s_{n-1} (x+d \epsilon) -z }{\prod_{\ell=1} ^n \Gamma \left(\frac{x}{\epsilon}+d-\frac{u_\ell}{\epsilon}+1\right)}\left(\frac{z}{\epsilon^n}\right)^d.\label{case2sum}
  \end{align}
We have from the factorial Schur polynomial formula (\ref{factschur}) that
\begin{align*}
    (x+d\epsilon-u_n) s_{n-1} (x+d \epsilon)&= (x+d\epsilon-u_n) \prod_{m=1} ^{n-1} (x+d\epsilon - u_m)\\
    &= \prod_{m=1} ^{n} (x+d\epsilon - u_m).
\end{align*}
Note that when $d=0$, restricting $x$ to any fixed point (i.e. $x=u_i$ for any $i\in [n]$) makes this product 0. By equivariant localization of the ring, this quantity is equivalent to 0 in this ring. Using this, we further split (\ref{case2sum}) into two terms and eliminate the $d=0$ term of the first sum:
\begin{align}
    z^{\frac{x}{\epsilon}}     &\sum_{d=0} ^\infty \frac{ (x+d\epsilon-u_n) s_{n-1} (x+d \epsilon) -z }{\prod_{\ell=1} ^n \Gamma \left(\frac{x}{\epsilon}+d-\frac{u_\ell}{\epsilon}+1\right)}\left(\frac{z}{\epsilon^n}\right)^d = \nonumber\\
    &=z^{\frac{x}{\epsilon}}     \sum_{d=0} ^\infty \frac{ \prod_{m=1} ^{n} (x+d\epsilon - u_m) }{\prod_{\ell=1} ^n \Gamma \left(\frac{x}{\epsilon}+d-\frac{u_\ell}{\epsilon}+1\right)}\left(\frac{z}{\epsilon^n}\right)^d - z^{\frac{x}{\epsilon}}     \sum_{d=0} ^\infty \frac{ z }{\prod_{\ell=1} ^n \Gamma \left(\frac{x}{\epsilon}+d-\frac{u_\ell}{\epsilon}+1\right)}\left(\frac{z}{\epsilon^n}\right)^d \nonumber\\
    &=z^{\frac{x}{\epsilon}}     \sum_{d=1} ^\infty \frac{ \prod_{m=1} ^{n} (x+d\epsilon - u_m) }{\prod_{\ell=1} ^n \Gamma \left(\frac{x}{\epsilon}+d-\frac{u_\ell}{\epsilon}+1\right)}\left(\frac{z}{\epsilon^n}\right)^d - z^{\frac{x}{\epsilon}}     \sum_{d=0} ^\infty \frac{ z }{\prod_{\ell=1} ^n \Gamma \left(\frac{x}{\epsilon}+d-\frac{u_\ell}{\epsilon}+1\right)}\left(\frac{z}{\epsilon^n}\right)^d. \label{case22}
\end{align}
Using the shift identity of the Gamma function, we have $$\frac{(x+d\epsilon -u_i)}{\Gamma(\frac{x}{\epsilon} +d -\frac{u_i}{\epsilon}+1)} = \frac{\epsilon}{ \Gamma(\frac{x}{\epsilon} +d -\frac{u_i}{\epsilon})}.$$ Then the first term of (\ref{case22}) changes:
\begin{align*}
    &=z^{\frac{x}{\epsilon}}     \sum_{d=1} ^\infty \frac{ \epsilon^n }{\prod_{\ell=1} ^n \Gamma \left(\frac{x}{\epsilon}+d-\frac{u_\ell}{\epsilon}\right)}\left(\frac{z}{\epsilon^n}\right)^d - z^{\frac{x}{\epsilon}}     \sum_{d=0} ^\infty \frac{ z }{\prod_{\ell=1} ^n \Gamma \left(\frac{x}{\epsilon}+d-\frac{u_\ell}{\epsilon}+1\right)}\left(\frac{z}{\epsilon^n}\right)^d\\
    &=z^{\frac{x}{\epsilon}}     \sum_{d=1} ^\infty \frac{ 1 }{\prod_{\ell=1} ^n \Gamma \left(\frac{x}{\epsilon}+d-\frac{u_\ell}{\epsilon}\right)}\left(\frac{z^d}{\epsilon^{n(d-1)}}\right) - z^{\frac{x}{\epsilon}}     \sum_{d=0} ^\infty \frac{ 1 }{\prod_{\ell=1} ^n \Gamma \left(\frac{x}{\epsilon}+d-\frac{u_\ell}{\epsilon}+1\right)}\left(\frac{z^{d+1}}{\epsilon^{nd}}\right).
\end{align*}
Re-indexing the first sum with shift $d\to d+1$ results in the difference of two identical sums, completing the proof.
\end{proof}

\subsection{\texorpdfstring{$\textrm{QH}^*_T(\textrm{Gr}(k,n))$}{QHT} case: solution to qde via geometric Satake}

By \cite{cotti2024satakecorrespondenceequivariantquantum}, there is an isomorphism of $\Z[z]$-modules
$$
\vartheta_{k,n} : \bigwedge^k_{\Z[z]} \mathrm{H}^*_T (\bb P^{n-1}) \to \mathrm{H}_T^* ( \mathrm{Gr}(k,n) ).
$$
Let $\sigma^{\bb P}_p$ denote a class in the cohomology of $\bb P^{n-1}$ and $\sigma^G_\lambda$ denote a Schubert class in the cohomology of the Grassmannian. For $\lambda = (\lambda_1, \ldots, \lambda_k)$, the isomorphism $\vartheta_{k,n}$ takes
$$
\sigma^{\bb P}_{\lambda_k} \wedge \sigma^{\bb P}_{\lambda_{k-1} + 1} \wedge \ldots \wedge \sigma^{\bb P}_{\lambda_1 +k - 1} \mapsto \sigma^G_\lambda.
$$
By solving the qde for $\bb P^{n-1}$, we get solutions that live in $\mathrm{QH}_T^* (\bb P^{n-1})$. If we take the $k$-th wedge power and then map by $\vartheta_{k,n}$, \cite{cotti2024satakecorrespondenceequivariantquantum} showed that the result is now a solution to the qde for $\mathrm{Gr}(k,n)$. Using this Satake correspondence, we are able to construct the solution to the qde in $\mathrm{QH}^*_T(\mathrm{Gr}(k,n))$ from copies of the qde solution for $\mathrm{QH}^*_T(\mathrm{Gr}(1,n))$.

Denote by $\db = (d_1,\dots,d_k)$ and $\xb+\db \epsilon := (x_1+d_1\epsilon, \dots, x_k+d_k \epsilon)$.

\begin{theorem} \label{qhtknthm}
    The solution the qde for $\mathrm{QH}^*_T(\mathrm{Gr}(k,n))$ is as follows: choose a partition $\lambda$ in the $k\times (n-k)$ rectangle. Then the solution $\psi_\lambda$ is a class in $\mathrm{QH}^*_T(\mathrm{Gr}(k,n))$ taking the form:

    \begin{equation}\label{psiCeq}
        \psi_\lambda ({\xb}, \ub, z) = C \sum_{\db} \frac{\prod \limits_{b,c=1}^k \left(\frac{x_c-x_b}{\epsilon}+d_c-d_b+1\right) }{\prod \limits_{b=1} ^k \prod \limits_{\ell=1} ^n \Gamma \left(\frac{x_b}{\epsilon}+d_b-\frac{u_\ell}{\epsilon}+1\right)} s_\lambda (\xb+\db \epsilon) \left(\frac{z}{\epsilon^n}\right)^{|\db|}
    \end{equation}
    where $C= ((-1)^{k-1} z)^{(x_1+\cdots x_k)/\epsilon} \prod_{b,c=1} ^k \left(\frac{x_c}{\epsilon} - \frac{x_b}{\epsilon}  +1\right)^{-1}$ The sum over $\db=(d_1,\dots,d_k)$ means to sum over $d_b \geq 0$ for each $b\in [k]$, and $|\db|=d_1+\cdots d_k$.
\end{theorem}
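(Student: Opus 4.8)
The plan is to deduce the theorem from the geometric Satake correspondence recalled just above — Theorem 7.19 of \cite{cotti2024satakecorrespondenceequivariantquantum} — which says that $\vartheta_{k,n}$ carries the $k$-th wedge of solutions of the $\bb P^{n-1}$ qde to a solution of the $\mathrm{Gr}(k,n)$ qde. Fix a partition $\lambda=(\lambda_1\geq\cdots\geq\lambda_k)$ in the $k\times(n-k)$ rectangle and set $\mu_a=\lambda_{k+1-a}+a-1$ for $a=1,\dots,k$, so that $0\leq\mu_1<\cdots<\mu_k\leq n-1$ and $\sigma^{\bb P}_{\mu_1}\wedge\cdots\wedge\sigma^{\bb P}_{\mu_k}\mapsto\sigma^G_\lambda$ under $\vartheta_{k,n}$. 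Taking the $\bb P^{n-1}$-solutions $\psi^{1n}_{\mu_a}$ from Proposition~\ref{qdesolnceq1n}, the class $\vartheta_{k,n}\big(\psi^{1n}_{\mu_1}\wedge\cdots\wedge\psi^{1n}_{\mu_k}\big)$ automatically solves the $\mathrm{Gr}(k,n)$ qde, so the entire content of the theorem is to show this class equals the closed formula $(\ref{psiCeq})$.

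This is a determinant computation. Unwinding $\vartheta_{k,n}$ on a decomposable wedge (equivalently, restricting to the $T$-fixed points of $\mathrm{Gr}(k,n)$ and using equivariant localization), the Satake image is an explicitly normalized version of the $k\times k$ determinant $\det\big(\psi^{1n}_{\mu_a}(x_b,\ub,z)\big)_{1\le a,b\le k}$ built from the single-variable solutions. Each column of this determinant is a single sum over $d_b\ge0$, so multilinearity of the determinant in its columns pulls all these sums out front and produces the sum over $\db=(d_1,\dots,d_k)$ in $(\ref{psiCeq})$; the per-column factors collect into $\prod_b z^{x_b/\epsilon}(z/\epsilon^n)^{d_b}\big/\prod_{\ell}\Gamma(\tfrac{x_b}{\epsilon}+d_b-\tfrac{u_\ell}{\epsilon}+1)$, which already accounts for the Gamma product and the powers of $z,\epsilon$ in the statement. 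The surviving determinant is $\det\big(s_{(\mu_a)}(x_b+d_b\epsilon)\big)_{a,b}=\det\big((x_b+d_b\epsilon\,|\,\ub)^{\mu_a}\big)_{a,b}$; reversing the order of the rows (a factor $(-1)^{\binom{k}{2}}$) converts the exponents $\mu_a$ into $\lambda_a+k-a$, and the bialternant formula $(\ref{factschur})$ identifies the result with $s_\lambda(\xb+\db\epsilon;\ub)$ times the Vandermonde $\prod_{i<j}\big((x_i+d_i\epsilon)-(x_j+d_j\epsilon)\big)$ of the \emph{shifted} variables. Combining this with the normalization of $\vartheta_{k,n}$ and absorbing all $\db$-independent pieces into a prefactor should produce exactly the combinatorial weight $\prod_{b,c=1}^k\big(\tfrac{x_c-x_b}{\epsilon}+d_c-d_b+1\big)$ and the constant $C$.

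\textbf{The main obstacle} is making that last sentence precise: one must extract from Theorem 7.19 of \cite{cotti2024satakecorrespondenceequivariantquantum} the exact form of $\vartheta_{k,n}$ on wedges — in particular the deformed Vandermonde $\prod_{b,c}(\tfrac{x_c}{\epsilon}-\tfrac{x_b}{\epsilon}+1)$ and the $(-1)^{k-1}$ rescaling of the quantum parameter that together build $C=((-1)^{k-1}z)^{(x_1+\cdots+x_k)/\epsilon}\prod_{b,c}(\tfrac{x_c}{\epsilon}-\tfrac{x_b}{\epsilon}+1)^{-1}$ — keep every sign straight ($(-1)^{\binom{k}{2}}$ from the row reversal, $\prod_{i<j}(x_i-x_j)$ versus $\prod_{i<j}(x_j-x_i)$, the $(-1)^{k-1}$ twist), and check that the resulting ratio of Vandermonde-type products is precisely the symmetric-looking factor $\prod_{b,c}(\tfrac{x_c-x_b}{\epsilon}+d_c-d_b+1)$ rather than a naive linear-in-$\db$ analogue. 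A further technical point: $\psi^{1n}_\mu$ carries the multivalued factor $z^{x/\epsilon}$ and Gamma functions of $x$, so the ``determinant of solutions'' is not literally a determinant of cohomology classes; one interprets it, and the action of $\vartheta_{k,n}$, through fixed-point localization, verifying the identity one fixed point $I=\{i_1<\cdots<i_k\}$ at a time (substitute $\xb=(u_{i_1},\dots,u_{i_k})$), where the $1/\Gamma$ factors conveniently kill the spurious low-degree terms.

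Finally, as an independent check — and a Satake-free alternative — one can verify $(\ref{psiCeq})$ directly in the spirit of the proof of Proposition~\ref{qdesolnceq1n}: apply $\epsilon z\,\partial_z$ and $M^C(z)$ to the right-hand side, use the bialternant formula and the equivariant Pieri rule (Proposition~\ref{cpieri}) to rewrite $M^C(z)\,s_\lambda(\xb+\db\epsilon)$, and show the difference telescopes term by term in $\db$ to zero, with the classical part of Pieri handling generic $\lambda$ and the quantum part ($z\sum_\nu$) handling the boundary case $\lambda_1=n-k$, exactly as in Case~2 there.
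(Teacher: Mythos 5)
Your proposal follows essentially the same route as the paper: invoke the Satake correspondence to realize $\psi_\lambda$ as the (Vandermonde-normalized, $z\mapsto(-1)^{k-1}z$-twisted) determinant $\det\big[\psi^{1n}_{\lambda_i+k-i}(x_j,\ub,(-1)^{k-1}z)\big]/\prod_{i<j}(x_i-x_j)$ of the $\bb P^{n-1}$ solutions from Proposition~\ref{qdesolnceq1n}, pull the $d_b$-sums out by multilinearity (the paper phrases this as a Cauchy product over the permutation expansion), and reassemble the surviving determinant via the bialternant formula into $s_\lambda(\xb+\db\epsilon)$ times the Vandermonde of the shifted variables. Your column-index conventions differ cosmetically from the paper's row-index ones, but the computation is the same.

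The one piece you flag as ``the main obstacle'' and do not carry out is exactly the content of the paper's Lemma~\ref{lemmaratioceq}, and it is the step that produces both the factor $\prod_{b,c}\Gamma(\tfrac{x_c-x_b}{\epsilon}+d_c-d_b+1)$ and the cancellation of signs. Concretely, one needs
\begin{equation*}
\prod_{1\le i<j\le k}\frac{(x_i+d_i\epsilon)-(x_j+d_j\epsilon)}{x_i-x_j}
=(-1)^{(k-1)|\db|}\prod_{b,c=1}^{k}\frac{\Gamma\!\left(\tfrac{x_c-x_b}{\epsilon}+d_c-d_b+1\right)}{\Gamma\!\left(\tfrac{x_c-x_b}{\epsilon}+1\right)},
\end{equation*}
which follows by pairing the $(b,c)$ and $(c,b)$ factors in the double product and applying $\Gamma(x+1)=x\Gamma(x)$ repeatedly; each pair contributes $\big(\tfrac{x_i-x_j}{\epsilon}+d_i-d_j\big)\big/\tfrac{x_i-x_j}{\epsilon}$ times $(-1)^{d_i-d_j}$, and $\prod_{i<j}(-1)^{d_i-d_j}=(-1)^{(k-1)|\db|}$. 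This sign is then absorbed by the $(-1)^{k-1}$ twist of the quantum parameter, i.e. $\big((-1)^{k-1}z\big)^{|\db|}(-1)^{(k-1)|\db|}=z^{|\db|}$ — this is precisely why the twist appears in the Satake correspondence and why $C$ retains only the $((-1)^{k-1}z)^{(x_1+\cdots+x_k)/\epsilon}$ prefactor together with $\prod_{b,c}\Gamma(\tfrac{x_c-x_b}{\epsilon}+1)^{-1}$. So the gap is real but entirely fillable by this short Gamma-function computation; once it is supplied, your argument coincides with the paper's proof. Your closing suggestion of a direct Pieri/telescoping verification is a legitimate Satake-free alternative (it mirrors the paper's K-theoretic proof of Theorem~\ref{mainthm}), but it is not the route the paper takes here and you have not executed it.
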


In order to prove this Theorem we first need the following Lemma:
\begin{lemma} \label{lemmaratioceq}
    \begin{equation}
        \prod \limits_{b=1}^k \prod \limits_{c=1} ^k \frac{\Gamma((\frac{x_c}{\epsilon}+d_c)-(\frac{x_b}{\epsilon}+d_b)+1)}{\Gamma(\frac{x_c}{\epsilon}-\frac{x_b}{\epsilon}+1)}=(-1)^{(k+1)|\db|}\prod \limits_{1\leq i<j\leq k} \frac{ (\frac{x_i}{\epsilon}+d_i)-(\frac{x_j}{\epsilon}+d_j)}{\frac{x_i}{\epsilon}-\frac{x_j}{\epsilon}} 
    \end{equation}
\end{lemma}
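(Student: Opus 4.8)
The plan is to collapse the ordered double product on the left into a product over unordered pairs $\{i,j\}$ with $1\le i<j\le k$, using the reflection formula for the Gamma function, and then to match the result against the right-hand side while keeping a separate tally of signs.

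First I would abbreviate $y_\ell = x_\ell/\epsilon$ and $Y_\ell = y_\ell + d_\ell$, so that the left-hand side reads $\prod_{b,c=1}^{k}\Gamma(Y_c-Y_b+1)/\Gamma(y_c-y_b+1)$. The diagonal terms $b=c$ are each $\Gamma(1)/\Gamma(1)=1$, so only the $b\ne c$ terms contribute, and these pair up: for fixed $i<j$ the factors indexed by $(b,c)=(i,j)$ and $(b,c)=(j,i)$ have product
\[
\frac{\Gamma\big(1+(Y_j-Y_i)\big)\,\Gamma\big(1-(Y_j-Y_i)\big)}{\Gamma\big(1+(y_j-y_i)\big)\,\Gamma\big(1-(y_j-y_i)\big)}.
\]
Applying $\Gamma(1+t)\Gamma(1-t)=\pi t/\sin(\pi t)$ to numerator and denominator turns this into $\frac{Y_j-Y_i}{y_j-y_i}\cdot\frac{\sin(\pi(y_j-y_i))}{\sin(\pi(Y_j-Y_i))}$. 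Since $Y_j-Y_i=(y_j-y_i)+(d_j-d_i)$ with $d_j-d_i\in\Z$, periodicity of sine gives $\sin(\pi(Y_j-Y_i))=(-1)^{d_j-d_i}\sin(\pi(y_j-y_i))$, hence the pair equals $(-1)^{d_i+d_j}\frac{Y_j-Y_i}{y_j-y_i}=(-1)^{d_i+d_j}\frac{Y_i-Y_j}{y_i-y_j}$, which is exactly $(-1)^{d_i+d_j}$ times the $\{i,j\}$-factor on the right. (If one prefers to avoid the reflection formula, one can observe that for $m\in\Z$ the quotient $\Gamma(z+m)/\Gamma(z)$ is a Pochhammer symbol or its reciprocal, and the same pair identity drops out of a short telescoping computation; I would include that as a one-line remark.)

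It then remains only to collect the global sign $\prod_{1\le i<j\le k}(-1)^{d_i+d_j}$. Each index $\ell\in[k]$ occurs in exactly $k-1$ of the $\binom{k}{2}$ pairs, so this product equals $(-1)^{(k-1)\sum_{\ell}d_\ell}=(-1)^{(k-1)|\db|}$, and $(-1)^{(k-1)|\db|}=(-1)^{(k+1)|\db|}$ since the exponents differ by $2|\db|$. Multiplying the pair identities together then yields the stated formula.

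The argument is essentially routine; the only points needing care are (i) that the reflection-formula step is legitimate, which is clear because for generic $x_\ell$ both sides are equal rational functions of the $y_\ell$ once the integer shifts $d_j-d_i$ are resolved, so no genuine analytic issue arises, and (ii) the sign bookkeeping over all $\binom{k}{2}$ pairs, which is where an off-by-one in the parity of $k$ would creep in if one is not careful.
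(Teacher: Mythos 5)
Your proof is correct, and it follows the same overall skeleton as the paper's: discard the diagonal $b=c$ terms, pair the $(i,j)$ and $(j,i)$ factors, evaluate each pair, and then tally the global sign $(-1)^{(k-1)|\db|}=(-1)^{(k+1)|\db|}$ (your count via ``each index lies in $k-1$ pairs'' agrees with the paper's $\prod_{i<j}(-1)^{d_i-d_j}=(-1)^{(k+1)|\db|}$, since $(-1)^{d_i-d_j}=(-1)^{d_i+d_j}$). The one genuine difference is the evaluation of each pair: the paper iterates the functional equation $\Gamma(t+1)=t\,\Gamma(t)$ a total of $|d_i-d_j|$ times, which telescopes to $(-1)^{d_i-d_j}\,\frac{(y_i+d_i)-(y_j+d_j)}{y_i-y_j}$ purely algebraically, whereas you invoke the reflection formula $\Gamma(1+t)\Gamma(1-t)=\pi t/\sin(\pi t)$ together with the $\pi$-antiperiodicity of sine. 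Both give the same answer; the reflection route is arguably slicker because the integer shift is absorbed in one step rather than by induction, but it formally requires $\,(x_i-x_j)/\epsilon\notin\Z$ so that the sines do not vanish (you correctly dispose of this by noting both sides are rational in the $x_\ell/\epsilon$, so the identity extends from generic values), while the paper's functional-equation route is valid wherever the Gamma factors themselves are defined. Your parenthetical remark that a Pochhammer/telescoping computation gives the same pair identity is in fact exactly the paper's argument.
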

\begin{proof}
First note that in the left side's double product, the $b=c$ terms are 1, so we may write 
\begin{align}
     \prod \limits_{b=1}^k \prod \limits_{c=1} ^k &\frac{\Gamma((\frac{x_c}{\epsilon}+d_c)-(\frac{x_b}{\epsilon}+d_b)+1)}{\Gamma(\frac{x_c}{\epsilon}-\frac{x_b}{\epsilon}+1)} = \nonumber \\
     &=\left(\prod \limits_{b<c} \frac{\Gamma((\frac{x_c}{\epsilon}+d_c)-(\frac{x_b}{\epsilon}+d_b)+1)}{\Gamma(\frac{x_c}{\epsilon}-\frac{x_b}{\epsilon}+1)}\right)\left(\prod \limits_{b>c} \frac{\Gamma((\frac{x_c}{\epsilon}+d_c)-(\frac{x_b}{\epsilon}+d_b)+1)}{\Gamma(\frac{x_c}{\epsilon}-\frac{x_b}{\epsilon}+1)}\right) \nonumber \\
     &=\prod \limits_{b<c} \left(\frac{\Gamma((\frac{x_c}{\epsilon}+d_c)-(\frac{x_b}{\epsilon}+d_b)+1)}{\Gamma(\frac{x_c}{\epsilon}-\frac{x_b}{\epsilon}+1)}\cdot \frac{\Gamma((\frac{x_b}{\epsilon}+d_b)-(\frac{x_c}{\epsilon}+d_c)+1)}{\Gamma(\frac{x_b}{\epsilon}-\frac{x_c}{\epsilon}+1)}\right). \label{cohlemmalastline}
\end{align}
Now pick $\xi,\eta \in [k]$. Note that using the identity $\Gamma(x+1)=x\Gamma(x)$ we find each term of the product can be written as 
\begin{equation*}
\frac{\Gamma((\frac{x_\xi}{\epsilon}+d_\xi)-(\frac{x_\eta}{\epsilon}+d_\eta)+1)}{\Gamma(\frac{x_\xi}{\epsilon}-\frac{x_\eta}{\epsilon}+1)}\frac{\Gamma((\frac{x_\eta}{\epsilon}+d_\eta)-(\frac{x_\xi}{\epsilon}+d_\xi)+1)}{\Gamma(\frac{x_\eta}{\epsilon}-\frac{x_\xi}{\epsilon}+1)} =\left(\frac{\frac{x_\xi-x_\eta}{\epsilon}+d_\xi-d_\eta}{\frac{x_\xi-x_\eta}{\epsilon}}\right) (-1)^{d_\xi-d_\eta}.
\end{equation*}
 Note that $\prod_{i<j} (-1)^{d_i-d_j}= (-1)^{(k+1)|\db|}$, so we use the this and last line to write (\ref{cohlemmalastline}) as
$$
(-1)^{(k+1)|\db|}\prod \limits_{1\leq i<j\leq k} \frac{ (\frac{x_i}{\epsilon}+d_i)-(\frac{x_j}{\epsilon}+d_j)}{\frac{x_i}{\epsilon}-\frac{x_j}{\epsilon}} ,
$$
completing the proof.

\end{proof}

\begin{proof}[Proof of Theorem \ref{qhtknthm}]
Using the Satake Correspondence from \cite{cotti2024satakecorrespondenceequivariantquantum}, the solution $\psi  (z)$ is a wedge product of solutions $\psi^{1n}$, with a shift of $z$ and dividing by the Vandermonde determinant. Written as a determinant we have 

\begin{equation}\label{ceqinitialdet}
    \psi_\lambda(\xb, \ub, z) = \frac{\det \left[ \psi^{1n}_{\lambda_i+k-i} \left(x_j, \ub, (-1)^{k-1} z\right) \right]_{1\leq i,j \leq k}}{\prod_{1\leq i<j \leq k} (x_i-x_j)}.
\end{equation}
We write this in our desired form via algebraic manipulation. First we rewrite the numerator. Using the definition of determinant, the numerator of (\ref{ceqinitialdet}) is
\begin{equation}\label{ceqline2}
    =\sum_{\sigma \in S_k} \mathrm{sgn}(\sigma) \prod_{j=1} ^k \psi^{1n}_{\lambda_{\sigma(j)}+k-{\sigma(j)}} \left(x_j, (-1)^{k-1} z\right).
\end{equation}
By Proposition \ref{qdesolnceq1n}, we write (\ref{ceqline2}) as 
\begin{equation}\label{ceqline3}
    =\sum_{\sigma \in S_k} \mathrm{sgn}(\sigma) \prod_{j=1} ^k \left(((-1)^{k-1}z)^{\frac{x_j}{\epsilon}}\sum_{d_j=0} \frac{s_{\lambda_{\sigma(j)}+k-{\sigma(j)}} (x_j+d_j \epsilon)}{ \prod_{\ell=1} ^n \Gamma \left(\frac{x_j}{\epsilon}+d_j- \frac{u_\ell}{\epsilon}+1\right)}  \left(\frac{(-1)^{k-1} z}{\epsilon^n}\right)^{d_j}  \right).
\end{equation}
Let $\widetilde{C} = \prod_{j=1} ^k ((-1)^{k-1} z)^{x_j/\epsilon}= ((-1)^{k-1}z)^{(x_1+\cdots + x_k)/\epsilon}$. Since it doesn't depend on either index, we will factor it out. For our sum we use the more general form of the Cauchy product to write the product of sums (\ref{ceqline3}) as a sum of products over $\db$:
\begin{equation}
    \widetilde{C}\sum_{\sigma \in S_k} \mathrm{sgn}(\sigma) \sum_{\db} \prod_{j=1} ^k \left(\frac{s_{\lambda_{\sigma(j)}+k-{\sigma(j)}} (x_j+d_j \epsilon)}{ \prod_{\ell=1} ^n \Gamma \left(\frac{x_j}{\epsilon}+d_j- \frac{u_\ell}{\epsilon}+1\right)}  \left(\frac{(-1)^{k-1} z}{\epsilon^n}\right)^{d_j}  \right).
\end{equation}

We'll switch the sums and distribute the product to write this as 
\begin{equation}
     \widetilde{C} \sum_{\db}  \sum_{\sigma \in S_k} \mathrm{sgn}(\sigma)\prod_{j=1} ^k \left(\frac{s_{\lambda_{\sigma(j)}+k-{\sigma(j)}} (x_j+d_j \epsilon)}{ \prod_{\ell=1} ^n \Gamma \left(\frac{x_j}{\epsilon}+d_j- \frac{u_\ell}{\epsilon}+1\right)}  \right) \left (\prod_{j=1} ^k\left(\frac{(-1)^{k-1} z}{\epsilon^n}\right)^{d_j}  \right).
\end{equation}
From $\prod_{j=1}^k \left({(-1)^{k-1} z}/{\epsilon^n}\right)^{d_j} = \left({(-1)^{k-1} z}/{\epsilon^n}\right)^{|\db|}$, we have
\begin{equation}\label{ceqline4}
    \widetilde{C} \sum_{\db}  \sum_{\sigma \in S_k} \mathrm{sgn}(\sigma)\prod_{j=1} ^k \left(\frac{s_{\lambda_{\sigma(j)}+k-{\sigma(j)}} (x_j+d_j \epsilon)}{ \prod_{\ell=1} ^n \Gamma \left(\frac{x_j}{\epsilon}+d_j- \frac{u_\ell}{\epsilon}+1\right)}  \right) \left(\frac{(-1)^{k-1} z}{\epsilon^n}\right)^{|\db|}.
\end{equation}
Since ${\lambda_{\sigma(j)}+k-{\sigma(j)}}$ is an integer, the factorial Schur polynomial formula gives $$
s_{\lambda_{\sigma(j)}+k-{\sigma(j)}}(x) = \prod_{m=1} ^ {\lambda_{\sigma(j)}+k-{\sigma(j)}} (x-u_m).
$$
With this, (\ref{ceqline4}) becomes
\begin{equation}\label{ceqline5}
    \widetilde{C}  \sum_{\db} \frac{\sum_{\sigma \in S_k} \mathrm{sgn}(\sigma)\prod_{j=1} ^k \prod_{m=1} ^ {\lambda_{\sigma(j)}+k-{\sigma(j)}} (x_j+d_j \epsilon-u_m)}{\prod_{j=1} ^k  \prod_{\ell=1} ^n \Gamma \left(\frac{x_j}{\epsilon}+d_j- \frac{u_\ell}{\epsilon}+1\right)} \left(\frac{(-1)^{k-1} z}{\epsilon^n}\right)^{|\db|}. 
\end{equation}
Observe by manipulating the numerator, we obtain
\begin{align*}
    \sum_{\sigma \in S_k} \mathrm{sgn}(\sigma)\prod_{j=1} ^k \prod_{m=1} ^ {\lambda_{\sigma(j)}+k-{\sigma(j)}} (x_j+d_j\epsilon-u_m) &= \sum_{\sigma \in S_k} \mathrm{sgn}(\sigma)\prod_{j=1} ^k  (x_j +d_j\epsilon \, | \, \ub)^{\lambda_{\sigma(j)}+k-{\sigma(j)}}\\
    &= \det \left[(x_j +d_j\epsilon \, | \, \ub)^{\lambda_{i}+k-{i}}\right]_{1\leq i,j \leq k}\\
    &= s_\lambda (\xb+\db \epsilon) \, \prod_{1\leq i <j \leq k} (x_i+d_i\epsilon - x_j - d_j \epsilon).
\end{align*}
With this manipulation, we write (\ref{ceqline5}) as 
\begin{equation}\label{lastofnum}
    \widetilde{C}  \sum_{\db} \frac{s_\lambda (\xb+\db \epsilon) \, \prod_{1\leq i <j \leq k} (x_i+d_i\epsilon - x_j - d_j \epsilon)}{\prod_{j=1} ^k  \prod_{\ell=1} ^n \Gamma \left(\frac{x_j}{\epsilon}+d_j- \frac{u_\ell}{\epsilon}+1\right)} \left(\frac{(-1)^{k-1} z}{\epsilon^n}\right)^{|\db|}. 
\end{equation}
Now that the numerator has been rewritten, we will divide back the denominator. Using (\ref{lastofnum}), we write (\ref{ceqinitialdet}) as
\begin{equation}\label{numden1}
     \widetilde{C}  \sum_{\db} \frac{s_\lambda (\xb+\db \epsilon) \, }{\prod_{j=1} ^k  \prod_{\ell=1} ^n \Gamma \left(\frac{x_j}{\epsilon}+d_j- \frac{u_\ell}{\epsilon}+1\right)}\prod_{1\leq i <j \leq k} \frac{(x_i+d_i\epsilon - x_j - d_j \epsilon)}{x_i-x_j} \left(\frac{(-1)^{k-1} z}{\epsilon^n}\right)^{|\db|}. 
\end{equation}
Note that Lemma \ref{lemmaratioceq} gives us
\begin{align*}
    \prod_{1\leq i <j \leq k} \frac{(x_i+d_i\epsilon - x_j - d_j \epsilon)}{x_i-x_j} &= \prod_{1\leq i <j \leq k} \frac{(\frac{x_i}{\epsilon}+d_i - \frac{x_j}{\epsilon} - d_j )}{\frac{x_i}{\epsilon}-\frac{x_j}{\epsilon}}\\
    &=(-1)^{(k-1)|\db|}\prod_{b,c=1} ^k \frac{\Gamma \left(\frac{x_c}{\epsilon}+d_c - \frac{x_b}{\epsilon} - d_b+1 \right)}{\Gamma \left(\frac{x_c}{\epsilon} - \frac{x_b}{\epsilon}  +1\right)}.
\end{align*}
Plugging this manipulation back into (\ref{numden1}) and simplifying we may cancel the $(-1)^{(k-1)|\db|}$ terms. In addition let $C=\widetilde{C} \prod_{b,c=1} ^k \left(\frac{x_c}{\epsilon} - \frac{x_b}{\epsilon}  +1\right)^{-1}$, Then (\ref{numden1}) is
\begin{equation}
    C  \sum_{\db} \frac{\prod_{b,c=1} ^k {\Gamma \left(\frac{x_c}{\epsilon}+d_c - \frac{x_b}{\epsilon} - d_b+1 \right)}  }{\prod_{j=1} ^k  \prod_{\ell=1} ^n \Gamma \left(\frac{x_j}{\epsilon}+d_j- \frac{u_\ell}{\epsilon}+1\right)}s_\lambda (\xb+\db \epsilon)\left(\frac{ z}{\epsilon^n}\right)^{|\db|}, 
\end{equation}
completing the proof.
\end{proof}

\subsection{Bethe equations and Bethe vectors}
Using the solution to the qde for $\mathrm{QH}^*_T(\mathrm{Gr}(k,n))$, we asymptotically can derive the Bethe equations. These coincide with those derived from the Bethe Ansatz (see Section 4.2 of \cite{gorbkorff2017}).

\begin{proposition}
    For $\mathrm{QH}^*_T(\mathrm{Gr}(k,n))$ there are $k$ Bethe equations: One for each $j\in [k]$ given by
    $$
    (x_j-u_1)\cdots (x_j-u_n) = (-1)^{k-1}z.
    $$
\end{proposition}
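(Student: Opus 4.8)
The plan is to extract the Bethe equations from the asymptotic (saddle-point) analysis of the series solution $\psi_\lambda(\xb,\ub,z)$ from Theorem~\ref{qhtknthm}. Write $\psi_\lambda = C\sum_{\db} A_{\db}(\xb,\ub)\, (z/\epsilon^n)^{|\db|}$ where $A_{\db}$ is the ratio of the Gamma-product numerator and denominator times $s_\lambda(\xb+\db\epsilon)$. As $\epsilon\to 0$ the summand behaves like $\exp\big(\tfrac1\epsilon\, \mathcal{S}(\db\epsilon) + \cdots\big)$ for a potential $\mathcal{S}$; the sum is dominated by the critical point of $\mathcal{S}$, and the critical-point equations are exactly the Bethe equations. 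Concretely, I would set $t_j = x_j + d_j\epsilon$, treat $t_j$ as continuous variables in the $\epsilon\to 0$ limit, use Stirling's formula $\log\Gamma(w/\epsilon)\sim \tfrac1\epsilon\big(w\log w - w\big) + \cdots$ on each Gamma factor, and collect the $z$-dependent term $|\db|\log z = \tfrac1\epsilon\sum_j(t_j-x_j)\log z$. The leading part of the exponent is then
\begin{equation*}
\mathcal{S}(\xb;\ub;z) = \frac1\epsilon\sum_{j=1}^k\Bigg( (t_j\!-\!x_j)\log z - \sum_{\ell=1}^n\big((t_j-u_\ell)\log(t_j-u_\ell)-(t_j-u_\ell)\big)\Bigg) + (\text{terms without }t_j\log t_j),
\end{equation*}
where the Vandermonde-type Gamma factors in the numerator contribute only subleading $\log\Gamma$-terms of the form $(t_c-t_b)\log(t_c-t_b)$ whose $t_j$-derivatives pair up and cancel (this uses Lemma~\ref{lemmaratioceq} to see the numerator as a genuine ratio of Gammas, so the $\log$-linear pieces annihilate in pairs). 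The factorial Schur factor $s_\lambda(\xb+\db\epsilon)$ is polynomial, hence contributes nothing at leading order.

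Next I would impose $\partial \mathcal{S}/\partial t_j = 0$. Differentiating, the $(t_j-u_\ell)$ terms give $-\sum_\ell \log(t_j-u_\ell)$, the $z$-term gives $\log z$, and (after the cancellation noted above) nothing else survives, so the critical-point condition is $\log z - \sum_{\ell=1}^n\log(t_j-u_\ell) = c$ for the appropriate branch constant $c$. Exponentiating and tracking the sign from the $z\mapsto (-1)^{k-1}z$ shift built into the Satake construction (equation~\eqref{ceqinitialdet}) yields
\begin{equation*}
(t_j-u_1)\cdots(t_j-u_n) = (-1)^{k-1}z, \qquad j\in[k],
\end{equation*}
and since in the limit $t_j$ is just the Chern root $x_j$ (the $d_j\epsilon$ shift vanishes), this is exactly the claimed system of $k$ Bethe equations. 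I would also remark that these agree with Section 4.2 of \cite{gorbkorff2017}, and that the $(-1)^{k-1}$ is forced by the same sign appearing in the $\mathrm{Gr}(k,n)$ solution.

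The main obstacle I anticipate is making the asymptotic step rigorous (or at least clean): justifying that the Gamma-quotient in the numerator really does contribute no leading-order $t_j$-dependence after the pairwise cancellation, and that the $\epsilon\to 0$ steepest-descent heuristic can be applied termwise to a formal power series in $z$. The cleanest route is probably to phrase it as: the coefficient of each power $z^{|\db|}$ has a well-defined $\epsilon\to 0$ leading exponential order, the saddle of the exponent is $\epsilon$-independent, and the saddle equation is what one calls the Bethe equation — i.e., present it as the definition/derivation of the Bethe equations from the qde solution rather than as a convergence theorem. An alternative, fully rigorous cross-check is to verify directly that the proposed Bethe equations are the characteristic equations of the operator $M^C(z)$ acting on $\mathrm{QH}^*_T(\mathrm{Gr}(k,n))$, using the quantum Pieri rule of Proposition~\ref{cpieri}: the eigenvalues of $\sigma_1\star$ should be $\sum_j x_j - \sum_i u_i$ evaluated at solutions of the above system, matching the known presentation of the quantum cohomology ring; this provides independent confirmation and can be cited if the asymptotic argument is only sketched.
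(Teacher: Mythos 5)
Your overall strategy is the same as the paper's: a saddle-point/steepest-descent analysis of the $\Gamma$-product summand of the solution from Theorem \ref{qhtknthm} as $\epsilon\to 0$, with the factorial Schur factor discarded as subleading. However, there is a concrete error in your bookkeeping of the sign $(-1)^{k-1}$, and as written your derivation either misses it or double-counts it. The claim that the Vandermonde-type numerator factors ``pair up and cancel'' in the $t_j$-derivative is false: for each pair one has
$\partial_{t_j}\ln\bigl[\Gamma(\tfrac{t_j-t_b}{\epsilon}+1)\,\Gamma(\tfrac{t_b-t_j}{\epsilon}+1)\bigr]\sim \tfrac{1}{\epsilon}\bigl[\ln\tfrac{t_j-t_b}{\epsilon}-\ln\tfrac{t_b-t_j}{\epsilon}\bigr]=\tfrac{1}{\epsilon}\ln(-1)$,
so the $k-1$ pairs involving $t_j$ leave a leading-order contribution $\tfrac{1}{\epsilon}\ln(-1)^{k-1}$. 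This residue is precisely where the paper gets the sign in the Bethe equation; equivalently, if you invoke Lemma \ref{lemmaratioceq} to replace the $\Gamma$-ratio by a rational function, the sign survives in the factor $(-1)^{(k+1)|\db|}$, whose logarithm is linear in $|\db|=\sum_j(t_j-x_j)/\epsilon$ and hence contributes $\tfrac{k-1}{\epsilon}\ln(-1)$ to the saddle equation. Your alternative source for the sign does not work in the coordinates you chose: in the form (\ref{psiCeq}) the Satake shift sits inside $C=((-1)^{k-1}z)^{(x_1+\cdots+x_k)/\epsilon}\cdots$, a prefactor independent of the summation indices $\db$ (equivalently, its $(-1)$-power involves $x_j$, not $t_j$), so it cannot enter the saddle equations over $t_j$. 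With your stated potential $\mathcal S$ the critical-point equation is $\prod_\ell(t_j-u_\ell)=z$, and the $(-1)^{k-1}$ is then inserted by hand.

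The fix is to commit to one of two consistent accountings. Either (i) work with the final form (\ref{psiCeq}) and keep the $\ln(-1)^{k-1}$ coming from the numerator $\Gamma$'s, with the $z$-term contributing only $\ln z$ (this is the paper's route); or (ii) work with the determinant form (\ref{ceqinitialdet}), where there are no numerator $\Gamma$'s at all and each $\mathbb P^{n-1}$ factor carries $((-1)^{k-1}z)^{t_j/\epsilon}$ with $t_j=x_j+d_j\epsilon$, so the sign genuinely enters through the $z$-term. Mixing the two, as your write-up does, yields either no sign or $(-1)^{2(k-1)}=1$. Your closing cross-check via the eigenvalues of $M^C(z)$ and Proposition \ref{cpieri} is a reasonable sanity check but is not a substitute for fixing the saddle-point computation, which is what the proposition asserts.
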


\begin{proof}
First note that as $\epsilon \to 0$ we have $\partial_x \ln \Gamma(x/\epsilon)= \Gamma'(x/\epsilon)/\Gamma(x/\epsilon) \approx \ln(x/\epsilon)=\ln(x)-\ln(\epsilon)$.

Let $\varphi$ denote the integrand solution except the factorial Schur polynomial and part of $C$:
$$\varphi :=\frac{\prod_{b,c=1} ^k {\Gamma \left(\frac{x_c}{\epsilon}+d_c - \frac{x_b}{\epsilon} - d_b+1 \right)}  }{\prod_{j=1} ^k  \prod_{\ell=1} ^n \Gamma \left(\frac{x_j}{\epsilon}+d_j- \frac{u_\ell}{\epsilon}+1\right)}\left(\frac{ z^{|\db|+(x_1+\cdots x_k)/\epsilon} }{\epsilon^{|\db| n}}\right). $$ The saddle point equations are determined by $x_j\partial_{x_j} \ln (\varphi)=0$ for $j\in [k]$.

For $j\in [k]$ the left side is 
\begin{align*}
    &x_j \partial_{x_j} \ln \left(\frac{\prod_{b\neq c} {\Gamma \left(\frac{x_c}{\epsilon}+d_c - \frac{x_b}{\epsilon} - d_b+1 \right)}  }{\prod_{b=1} ^k  \prod_{\ell=1} ^n \Gamma \left(\frac{x_b}{\epsilon}+d_b- \frac{u_\ell}{\epsilon}+1\right)}\left(\frac{ z^{|\db|+(x_1+\cdots x_k)/\epsilon} }{\epsilon^{|\db| n}}\right)\right)\\
    &= x_j \partial_{x_j}\sum_{b\neq c}   \ln \Gamma \left(\frac{x_c}{\epsilon}+d_c - \frac{x_b}{\epsilon} - d_b+1 \right) - x_j \partial_{x_j}\sum_{b=1} ^k \sum_{\ell=1}^n  \ln \Gamma \left(\frac{x_b}{\epsilon}+d_b- \frac{u_\ell}{\epsilon}+1\right)+\\
    &\hspace{4in}+ x_j \partial_{x_j} \ln \left(\frac{ z^{|\db|+(x_1+\cdots x_k)/\epsilon} }{\epsilon^{|\db| n}}\right)\\
    &=\sum_{b\neq j} ^k x_j \partial_{x_j}    \ln \Gamma \left(\frac{x_j}{\epsilon}+d_j - \frac{x_b}{\epsilon} - d_b+1 \right) + \sum_{c\neq j} ^k x_j \partial_{x_j}    \ln \Gamma \left(\frac{x_c}{\epsilon}+d_c - \frac{x_j}{\epsilon} - d_j+1 \right) \\
    &\hspace{1in}-  \sum_{\ell=1}^n   x_j \partial_{x_j} \ln \Gamma \left(\frac{x_j}{\epsilon}+d_j- \frac{u_\ell}{\epsilon}+1\right)+  x_j \partial_{x_j} \ln \left(\frac{ z^{|\db|+(x_1+\cdots x_k)/\epsilon} }{\epsilon^{|\db| n}}\right). 
\end{align*}
Note as $\epsilon \to 0$, we have $\partial_{x_j} \ln \Gamma \left(\frac{x_j}{\epsilon}+d_j - \frac{x_b}{\epsilon} - d_b+1 \right)\approx\partial_{x_j} \ln \Gamma \left(\frac{x_j}{\epsilon} - \frac{x_b}{\epsilon}  \right)\approx \frac{1}{\epsilon}\ln(\frac{x_j-x_b}{\epsilon})$. Similarly, $\partial_{x_j} \ln \Gamma \left(\frac{x_c}{\epsilon}+d_c - \frac{x_j}{\epsilon} - d_j+1 \right)\approx -\frac{1}{\epsilon}\ln(\frac{x_c-x_j}{\epsilon})$. For the third sum, we have that $\partial_{x_j} \ln \Gamma \left(\frac{x_j}{\epsilon}+d_j- \frac{u_\ell}{\epsilon}+1\right) \approx \partial_{x_j} \ln \Gamma \left(\frac{x_j-u_\ell}{\epsilon}\right) \approx \frac{1}{\epsilon}\ln\left(\frac{x_j-u_\ell}{\epsilon}\right)$. Then replacing each into above we have
\begin{align*}
    \approx &\frac{x_j}{\epsilon}\left(\sum_{b\neq j} ^k    \ln  \left(\frac{x_j-x_b}{\epsilon} \right) -\sum_{c\neq j} ^k    \ln  \left(\frac{x_c-x_j}{\epsilon} \right)   -  \sum_{\ell=1}^n   \ln  \left(\frac{x_j-u_\ell}{\epsilon}\right)\right)+ \\
    &x_j \partial_{x_j} \ln \left(\frac{ z^{|\db|+(x_1+\cdots x_k)/\epsilon} }{\epsilon^{|\db| n}}\right).
\end{align*}

We have the first two sums cancel to $\ln (-1)^{k-1}$, and we can condense the third sum as $\ln (\prod (x_j-u_\ell))$. For the last term we have 
$$ x_j \partial_{x_j} \ln \left(\frac{ z^{|\db|+(x_1+\cdots x_k)/\epsilon} }{\epsilon^{|\db| n}}\right)=  x_j \partial_{x_j} \sum_{b=1}^k \ln \frac{z^{(x_b/\epsilon+d_b)}}{\epsilon^{nd_b}}= x_j \partial_{x_j} \ln \frac{z^{(x_j/\epsilon+d_j)}}{\epsilon^{nd_j}},$$
which is asymptotically equivalent to $x_j\partial_{x_j} \ln {z^{(x_j/\epsilon)}}$ as $\epsilon \to 0$. Since $\partial_{x_j} \ln {z^{(x_j/\epsilon)}}=\frac{1}{\epsilon}\ln z$, our expression above becomes
\begin{equation}
    \frac{x_j}{\epsilon}\left(\ln (-1)^{k-1} - \ln \left(\prod_{\ell=1} ^n (x_j-u_\ell)\right)+\ln z\right).
\end{equation}
Setting this equal to 0, eliminating $\frac{x_j}{\epsilon}$, and taking the exponential of both sides we obtain 
$$\prod_{\ell=1} ^n (x_j-u_\ell) = (-1)^{k-1} z,$$
completing the proof.
\end{proof}

The roots of this equation plugged into the factorial Schur polynomial give the eigenvalues of $M(z)^{C}$. For the eigenvectors of $M(z)^{C}$ we construct the off shell Bethe Vector as 
$$\begin{pmatrix}
    s_{\lambda^1} (\xb \, | \, \ub) \\
    s_{\lambda^2} (\xb \, | \, \ub)\\
    \vdots\\
    s_{\lambda^{\binom{n}{k}}}(\xb \, | \, \ub).
\end{pmatrix}.$$

Plugging in solutions of the Bethe equations for $\xb$ give the eigenvectors of $M(z)$.

\subsection{Nonequivariant quantum cohomology solutions}

The equivariant case degenerates to the nonequivariant case with the limits of $u_i \to 0$. We will use superscript $N$ to denote the nonequivariant versions of our objects. Here the qde is
\begin{equation} \label{psiCNEQ1n}
    \epsilon z \frac{\partial }{\partial z} \psi(z)^{N} = M(z)^{CN} \psi(z)^{N},
\end{equation}
where $M(z)^{CN}$ is $M(z)^{C}$ with limits $u_i\to 0$.

Taking limits $u_i \to 0$ of (\ref{psiCeq}), we can write the solution as the representative
\begin{equation*}
\psi_\lambda (\xb,z)^N = C \sum_{\db} \frac{\prod \limits_{b,c=1}^k \left(\frac{x_c-x_b}{\epsilon}+d_c-d_b+1\right) }{\prod \limits_{b=1} ^k  \Gamma \left(\frac{x_b}{\epsilon}+d_b+1\right)^n} s_\lambda (\xb+\db \epsilon) \left(\frac{z}{\epsilon^n}\right)^{|\db|}
\end{equation*}
where $C$ here is the same as the equivariant case (notice that $C$ did not depend upon the $u_i$'s).

\section{Quantum K-theory} \label{quantumktheorysection}

The main focus of this section is the quantum difference equation (QDE) for $\textrm{QK}_T(\textrm{Gr}(k,n))$. This is the equation
\begin{equation}
    \Psi(q z) \OC(-1) = M(z) \Psi(z),
\end{equation}
where $\Psi(z)$ is the solution we aim to find an explicit form for. Here, the $\OC(-1)$ denotes the regular action of tensoring by $\OC(-1)$ in equivariant K-theory. As Okounkov and Aganagic did in \cite{aganagic2017quasimapcountsbetheeigenfunctions}, we construct the solutions to the QDE using partition functions. First we define pieces of the QDE.

\subsection{The operators \texorpdfstring{$\OC(-1)$}{O(-1)} and \texorpdfstring{$M(z)$}{M(z)} }
Recall that $x_1, \ldots, x_k$ denote the (equivariant) K-theoretic Chern roots of the tautological subbundle. For $\textrm{K}_T(\textrm{Gr}(k,n))$,
$\OC(-1)$ has the representative
$$ 
\OC(-1)= \prod_{b=1} ^k x_b. 
$$
Let $M(z)$ denote the operator of quantum multiplication by $\OC(-1)$ in $\textrm{QK}_T(\textrm{Gr}(k,n))$. That is, for $V \in \mathrm{QK}_T (\mathrm{Gr}(k,n))$ $M(z) V = \mathcal{O} (-1) \star V$. Note that $M(0) =\OC(-1)$, i.e. $M(0)$ acts as the usual tensor product as in $\mathrm{K}_T (\mathrm{Gr}(k,n)).$

\begin{proposition}[From \cite{Buch_2011}]\label{kpieri}
For some $\lambda=(\lambda_1,\dots, \lambda_k)$ in the $k\times (n-k)$ rectangle, let $r=\{r_1,\dots,r_k\}$ be the corresponding $k$-subset. $M(z)$ acts on $\OC_\lambda$ as
\begin{equation}  
M(z) \mathcal{O}_{\lambda} = \left(\prod_{i=1} ^k {u_{r_i}} \right) \sum_{\nu} (-1)^{|\nu/\lambda|} \left(\mathcal{O}_{\nu}-z\mathcal{O}_{\hat{\overline{\nu}}} \right). 
\end{equation}
Here $\hat{\overline{\nu}}$ denotes the result of removing the first row and first column from $\nu$. The sum is over all partitions $\nu \supset \lambda$ in the $k\times (n-k)$ rectangle for which $\nu / \lambda$ is a rook strip. 
$|\nu / \lambda |$ is the number of boxes in the skew diagram.
\end{proposition}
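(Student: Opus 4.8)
The plan is to obtain this from the polynomial identity of Proposition \ref{pierigkprop}, plus one input from quantum geometry: the quantum straightening law in $\mathrm{QK}_T(\mathrm{Gr}(k,n))$ that deforms the classical defining relations (equivalently, the quantum-to-classical principle of \cite{Buch_2011}). Since $\OC(-1)$ is represented by $x_1\cdots x_k$, Proposition \ref{pierigkprop} reads, as an identity of $S_k$-symmetric Laurent polynomials,
\[
x_1\cdots x_k\, G_\lambda \;=\; u_{r_1}\cdots u_{r_k}\sum_{\nu}(-1)^{|\nu/\lambda|}G_\nu ,
\]
the sum over all $\nu\supseteq\lambda$ with $\nu/\lambda$ a rook strip. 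The whole argument is to interpret the right-hand side in $\mathrm{QK}_T(\mathrm{Gr}(k,n))$ after sorting the $\nu$'s by shape.

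Because $\lambda\subseteq k\times(n-k)$ and a rook strip adds at most one box per row and per column, every $\nu$ that occurs is of exactly one type: (i) $\nu\subseteq k\times(n-k)$; (ii) $\ell(\nu)=k+1$; (iii) $\ell(\nu)\le k$ and $\nu_1=n-k+1$. Type-(ii) terms contribute nothing: $G_\nu=0$ in $k$ variables once $\ell(\nu)>k$. Type-(i) terms give precisely the classical ($z=0$) part $\prod_i u_{r_i}\sum_{\nu\subseteq\mathrm{rect}}(-1)^{|\nu/\lambda|}\OC_\nu$. For type (iii), $G_\nu$ lies in the ideal defining $\mathrm{K}_T(\mathrm{Gr}(k,n))$ and is $0$ there, but in the quantum ring one applies instead the quantum straightening law $G_\nu\equiv z\,G_{\hat{\overline\nu}}$; since $\hat{\overline\nu}\subseteq(k-1)\times(n-k-1)$ the replacement lands back in the rectangle and nothing of order $z^{\ge2}$ is produced. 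So the type-(iii) terms contribute $z\prod_i u_{r_i}\sum_{\nu_1=n-k+1}(-1)^{|\nu/\lambda|}\OC_{\hat{\overline\nu}}$.

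It remains to reorganize these into the stated symmetric form $\prod_i u_{r_i}\sum_{\nu\subseteq\mathrm{rect}}(-1)^{|\nu/\lambda|}\bigl(\OC_\nu-z\OC_{\hat{\overline\nu}}\bigr)$, using that $\hat{\overline\nu}$ is independent of $\nu_1$. A type-(iii) rook strip over $\lambda$ can exist only when $\lambda_1=n-k$; in that case $\nu\mapsto\nu\cup\{(1,n-k+1)\}$ is a sign-reversing bijection from type-(i) onto type-(iii) rook strips that preserves $\hat{\overline{(\,\cdot\,)}}$, and it rewrites the $z$-contribution above as $-z\prod_i u_{r_i}\sum_{\nu\subseteq\mathrm{rect}}(-1)^{|\nu/\lambda|}\OC_{\hat{\overline\nu}}$. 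When $\lambda_1<n-k$ there are no type-(iii) strips at all, and the alternating sum $\sum_{\nu\subseteq\mathrm{rect}}(-1)^{|\nu/\lambda|}\OC_{\hat{\overline\nu}}$ vanishes via the sign-reversing involution toggling the box $(1,\lambda_1+1)$; so both sides agree with no net $z$-term. Combined with the type-(i) part this is the claimed identity.

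The genuine obstacle is the quantum straightening law $G_\nu\equiv z\,G_{\hat{\overline\nu}}$ for $\nu_1=n-k+1$: this is not a formal manipulation, and proving it from scratch requires exactly the Gromov--Witten input of \cite{Buch_2011} --- the identification of three-point genus-$0$ K-theoretic invariants of $\mathrm{Gr}(k,n)$ with classical K-theoretic intersection numbers on a two-step flag variety, which in particular forces only degree $\le1$ contributions to a divisor-type quantum product. Granting it, the rest is routine: the rook-strip involution above, and checking that the equivariant weight $u_{r_1}\cdots u_{r_k}$ (the restriction of $\OC(-1)$ to the fixed point $p_r$) rides unchanged through both the classical and the quantum terms, with Proposition \ref{pierigkprop} serving as the $z=0$ check. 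Alternatively, one simply cites the quantum Pieri rule of \cite{Buch_2011} in its original form and only matches its normalization of $\OC_\lambda$ with the double-Grothendieck normalization used here --- for which, once more, the $z=0$ comparison suffices.
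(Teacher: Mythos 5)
The paper does not actually prove this proposition: it is imported from \cite{Buch_2011}, with the one-line remark that it follows from $M(z)=\mathcal{O}(-1)\star=(1-\mathcal{O}_{(1)})\star$ and the quantum Pieri rule there. Your fallback option --- cite \cite{Buch_2011} and match normalizations at $z=0$ --- is therefore exactly what the paper does, and is fine. Your main route is a genuinely different (and more informative) derivation, and its combinatorial content is sound: the trichotomy of rook strips, the vanishing of $G_\nu$ for $\ell(\nu)>k$, the observation that out-of-rectangle strips occur only when $\lambda_1=n-k$ and then biject sign-reversingly with in-rectangle ones via adding the box $(1,n-k+1)$, and the row-one toggling involution that kills the $z$-terms when $\lambda_1<n-k$ (since $\hat{\overline{\nu}}$ depends only on $\nu_2,\dots,\nu_k$) all check out against the paper's $\mathrm{Gr}(2,4)$ examples. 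In fact you have essentially rediscovered the paper's Lemma \ref{gkpieri}, which is proved later by the same bijection and is used in the proof of Theorem \ref{mainthm}.

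The one caveat worth making explicit: your pivotal ``quantum straightening law'' $G_\nu\equiv z\,G_{\hat{\overline{\nu}}}$ for $\nu_1=n-k+1$, used as a rule for reducing polynomial products inside $\mathrm{QK}_T$, presupposes a presentation of the quantum ring by quantum-deformed relations, and in the literature that presentation is typically \emph{derived from} the (quantum) Pieri rule rather than the other way around. So as a self-contained proof your main route is circular unless the straightening law is established independently; you flag this honestly, but the cleanest non-circular statement is that the geometric content lives entirely in \cite{Buch_2011} and your Proposition-\ref{pierigkprop}-plus-involution argument is the consistency check that the two normalizations agree, including the exact form of the $z$-correction.
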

 This formula is obtained from using $M(z)=\OC(-1)\star =(1-\OC_{1})\star$.
\begin{example}
 In $\textrm{QK}_T(\textrm{Gr}(2,4))$ we have the following:
\begin{align*}
    M(z)\OC_{(0,0)} &= \OC_{{0}}u_{{1}}u_{{2}}-\OC_{{1}}u_{{1}}u_{{2}} \\
    M(z)\OC_{(1,0)} &= \OC_{{1}}u_{{1}}u_{{3}}-\OC_{{2}}u_{{1}}u_{{3}}-\OC_{{1,1}}u_{{1}}u_{{3}}+\OC_
{{2,1}}u_{{1}}u_{{3}}
 \\
    M(z)\OC_{(1,1)} &= u_{{2}}u_{{3}}\OC_{{1,1}}-u_{{2}}u_{{3}}\OC_{{2,1}}\\
    M(z)\OC_{(2,0)} &= u_{{1}}u_{{4}}\OC_{{2}}-u_{{1}}u_{{4}}\OC_{{2,1}} \\
    M(z)\OC_{(2,1)} &= u_{{2}}u_{{4}}\OC_{{2,1}}-
\OC_{{2,2}}u_{{2}}u_{{4}}-z\OC_{{0}}u_{{2}}u_{{4}}+z\OC_{{1}}u_{{2}}u_{{4}}
 \\
    M(z)\OC_{(2,2)} &= \OC_{{2,2}}u_{{3}}u_{{4}} -z\OC_{{1}}u_{{3}}u_{{4}}.\\
\end{align*}
\end{example}


With these pieces defined, we derive the partition functions to construct the solution to the QDE.

\subsection{Partition functions for K-theory}
We replicate section \ref{cohpar}, passing the additive weights $u$ to multiplicative characters $t=e^{-u}$. For this K-theory case we use $1-t$ and $1-t^{-1}$ in place of $u$ and $-u$, respectively. Our diagram is:
\[
\tikz[baseline=(p1.base), thick, >=stealth]{
  \node[circle, fill, inner sep=1.5pt, label=below:$p_1$] (p1) at (0,0) {};
  \node[circle, fill, inner sep=1.5pt, label=below:$p_2$] (p2) at (4.5,0) {};

  \draw[-] (p1) -- node[pos=0.5, yshift=-0.01ex] {${<}$} (p2);

  \node[above=4mm of p1] {$1 - t^{-1}$};
  \node[above=4mm of p2] {$1 - t$};
}
\]
Then we have attracting sets $Attr_{+} (p_1)=\bb{P}^1$ and $Attr_{+}(p_2)=p_2$. From this we have the positive attracting matrix defined with restrictions as $T_{ij}^+ = Attr_+ (j) |_{i}$:
\begin{equation}\label{kpplus}
    T^+ = \begin{pmatrix}
        Attr_+ (p_1) |_{p_1} &Attr_+ (p_2) |_{p_1}\\
        Attr_+ (p_1) |_{p_2}&Attr_+ (p_2) |_{p_2}
    \end{pmatrix}= \begin{pmatrix}
        1&0 \\ 1 &1-t
    \end{pmatrix}
\end{equation}

Then consider the same diagram with the opposite order:
\[
\tikz[baseline=(p1.base), thick, >=stealth]{
  \node[circle, fill, inner sep=1.5pt, label=below:$p_1$] (p1) at (0,0) {};
  \node[circle, fill, inner sep=1.5pt, label=below:$p_2$] (p2) at (4.5,0) {};

  \draw[-] (p1) -- node[pos=0.5, yshift=-0.01ex] {${>}$} (p2);

  \node[above=4mm of p1] {$1 - t^{-1}$};
  \node[above=4mm of p2] {$1 - t$};
}
\]
Then we have $Attr_{-} (p_1)=p_1$ and $Attr_{-}(p_2)=\bb{P}^1$. From this we have
\begin{equation}\label{kppneg}
    T^- = \begin{pmatrix}
        Attr_- (p_1) |_{p_1} &Attr_- (p_2) |_{p_1}\\
        Attr_- (p_1) |_{p_2}&Attr_- (p_2) |_{p_2}
    \end{pmatrix}= = \begin{pmatrix}
        1-t^{-1}&1 \\ 0 &1
    \end{pmatrix}.
\end{equation}
Then we compute using (\ref{kpplus}) and (\ref{kppneg}) to get
\begin{equation}
    (T^-)^{-1} \cdot T^+ = \begin{pmatrix}
        0&t\\
        1& 1-t
    \end{pmatrix}.
\end{equation}

From this we have our $R$ matrix

\begin{equation}\label{ktheoryrmatrix}
R(t)=\begin{pmatrix}
    1&0&0&0\\
    0&0&t&0\\
    0&1&1-t&0\\
    0&0&0&1
\end{pmatrix},
\end{equation}
which satisfies the Yang-Baxter equation
$$R_{a 1}(u_1/x) R_{a 2}(u_2/x) R_{12} (u_2/u_1)=R_{1 2}(u_2/u_1) R_{a 2}(u_2/x) R_{a 1}(u_1/x). $$
Partition functions from this $R$ matrix can be found using vertex models \cite{wheelerG},\cite{cotti2024satakecorrespondenceequivariantquantum}.

\begin{proposition}
The partition function in K-theory coincides with the double Grothendieck polynomials
\begin{equation} 
G_{\lambda} (\xb;\ub) = \frac{\det \left(x_j ^{i-1} (x_j \,|\, \ub) ^{\lambda_i +k -i} \right)_{1\leq i,j \leq k}}{\prod_{1\leq i<j\leq k }(x_j -x_i)},
\end{equation}
where $(x\, | \,\ub)$ is the {K-theoretic falling factorial}, 
defined as 
$$
(x \,|\, \ub)^b = (1-x/u_1)\cdots(1-x/u_b).
$$
\end{proposition}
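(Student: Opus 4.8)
The plan is to realize the double Grothendieck polynomial $G_\lambda(\xb;\ub)$ as the partition function of a lattice (vertex) model built from the $R$-matrix (\ref{ktheoryrmatrix}), exactly parallel to the cohomological statement proved via the construction of \cite{aggaC}. First I would set up the lattice: take a rectangular grid with $k$ rows (the ``quantum'' lines carrying the variables $x_1,\dots,x_k$) and $n$ columns (the ``physical'' lines carrying the equivariant parameters $u_1,\dots,u_n$), assign boundary conditions determined by the $k$-subset $r$ corresponding to $\lambda$ (arrows in on the left/bottom prescribed so that the exit data on the right encodes the frame $r$), and let $Z_\lambda(\xb;\ub)$ be the sum over all admissible configurations of the products of the Boltzmann weights read off from $R(t)$ with $t = $ the appropriate ratio of spectral parameters. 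The Yang--Baxter equation stated after (\ref{ktheoryrmatrix}) guarantees that $Z_\lambda$ is symmetric in the $x_i$ (one slides an $R$-matrix through the lattice to transpose two adjacent rows), so $Z_\lambda \in \Z[\xb^\pm,\ub^\pm]^{S_k}$.

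The key computational step is to evaluate $Z_\lambda$ in closed form and match it to the bialternant (\ref{gdetformula}). I would do this by the standard Lindström--Gessel--Viennot / Izergin--Korepin style argument: either (i) fix the ordering of the columns and peel off the rows one at a time, showing $Z$ satisfies a recursion in $k$ whose solution is the given determinant; or (ii) identify $Z_\lambda$ as a signed sum over the $S_k$ of ``frozen'' one-row partition functions, each of which is an explicit monomial-times-$K$-theoretic-falling-factorial $x_j^{\,i-1}(x_j\,|\,\ub)^{\lambda_i+k-i}$, so that the full sum is precisely $\det\!\big(x_j^{\,i-1}(x_j\,|\,\ub)^{\lambda_i+k-i}\big)_{i,j}$, and then divide by the Vandermonde $\prod_{i<j}(x_j-x_i)$ coming from resolving the crossings. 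Concretely, each $R$-vertex contributes a factor $t = u_\bullet/x_\bullet$ or $1-t$ according to the configuration, and a column through which a ``particle'' passes without turning contributes the factor $(1-x_j/u_m)$, which assembles into the K-theoretic falling factorial; the turning vertices supply the monomial powers $x_j^{i-1}$ and the permutation signs. Since this is exactly the lattice-model derivation carried out in \cite{wheelerG} (and used in \cite{cotti2024satakecorrespondenceequivariantquantum}), I would cite those computations for the detailed weight bookkeeping rather than reproduce it.

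Finally I would check that the partition function, so normalized, is the class of the Schubert structure sheaf $\OC_\lambda$ in $\mathrm{K}_T(\mathrm{Gr}(k,n))$ — i.e. that it represents the correct element under the presentation displayed in the excerpt — by comparing restrictions to torus-fixed points: a fixed point is a $k$-subset $r'$, restriction sets $\{x_1,\dots,x_k\} = \{u_j : j \in r'\}$, and one verifies $G_\lambda(\ub_{r'};\ub)$ vanishes unless $r' \geq r$ in the Bruhat order and equals the known fixed-point value of $\OC_\lambda$ otherwise; this is again immediate from the determinant formula since a row of the matrix becomes proportional to a falling factorial that vanishes at the offending $u_j$. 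The main obstacle is the middle step: carefully choosing the boundary data so that the frame $r$ is reproduced and then doing the Boltzmann-weight bookkeeping so that the frozen configurations yield exactly the entries $x_j^{i-1}(x_j\,|\,\ub)^{\lambda_i+k-i}$ with the right signs — this is the only place where a genuine calculation (as opposed to a formal Yang--Baxter or localization argument) is needed, and it is precisely what \cite{wheelerG} supplies, so in the write-up I would lean on that reference and only indicate the dictionary between our conventions and theirs.
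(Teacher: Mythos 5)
Your proposal is correct and follows essentially the same route as the paper, which offers no proof at all beyond the remark that partition functions for this $R$-matrix are computed via vertex models in \cite{wheelerG} and \cite{cotti2024satakecorrespondenceequivariantquantum}; you simply spell out the standard lattice-model setup (Yang--Baxter symmetry, frozen-configuration expansion into the bialternant, fixed-point check) before deferring to the same reference for the weight bookkeeping.
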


These functions will be used to construct the solution to the QDE.

\subsection{\texorpdfstring{$\textrm{QK}_T(\textrm{Gr}(k,n))$}{QKT} case: solution to the QDE}

Denote $\xb q^{\boldsymbol{d}}=(x_1 q^{d_1},\dots, {x_k} q^{d_k})$. Taking a sum over $\db$ means to take the sum over $d_b\geq 0$, for $b\in [k]$. The $q-$Pochhammer symbol is defined as
$$
(x,q)_\infty = \prod_{a=1}^\infty (1-xq^a).
$$

\begin{theorem} \label{mainthm}
For any partition $\lambda$, the following class in $\mathrm{K}_T(\mathrm{Gr}(k,n))[[z]]$ solves the QDE $\Psi(qz) \OC(-1)= M(z) \Psi (z)$:
\begin{equation} \label{psiKEQkn}
    \Psi_\lambda(\xb,z)
    =\sum_{\db} \Phi(\xb q^{\boldsymbol{d}}, \ub)\, G_{\lambda} \left( \xb  q^{\boldsymbol{d}}; \ub\right)\, z^{|\boldsymbol{d}|}, 
\end{equation}
where  $\Phi$ is defined as
$$
\Phi(\xb,\ub)\coloneq\frac{\prod\limits_{b=1}^k \prod\limits_
{\ell=1}^n \left(\frac{x_b }{u_\ell}q,q\right)_\infty}{\prod\limits_{b,c=1}^k \left(\frac{x_b }{x_c }q,q\right)_\infty}.
$$
\end{theorem}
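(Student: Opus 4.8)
## Proof proposal for Theorem \ref{mainthm}

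The plan is to substitute the proposed series $\Psi_\lambda$ into the QDE $\Psi(qz)\,\OC(-1) = M(z)\,\Psi(z)$, expand both sides as power series in $z$, and show that the coefficient of each $z^{|\db|}$ agrees. The key structural observation is that $\OC(-1)$ acts by multiplication by $x_1\cdots x_k$ (on the left side with the argument $\xb q^{\db}$ since we are evaluating $\Psi$ at $qz$), while $M(z)$ acts by the K-theoretic Pieri rule of Proposition \ref{kpieri}, which has a ``classical'' part (the $\OC_\nu$ terms) and a ``quantum'' part (the $-z\,\OC_{\hat{\overline\nu}}$ terms). The term $z\,\OC_{\hat{\overline\nu}}$ shifts the power of $z$ by one, so matching coefficients of $z^{|\db|}$ will relate the $\db$-term on the left to the $\db$-term and certain $(\db - e_i)$-type terms on the right. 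I would first handle the prefactor: the multiplicative factor $\prod_{i=1}^k u_{r_i}$ appearing in Proposition \ref{kpieri} depends on $\lambda$ through its frame $r$, and the natural way to produce it is via Proposition \ref{pierigkprop}, the identity $x_1\cdots x_k\, G_\lambda = u_{r_1}\cdots u_{r_k}\sum_\nu (-1)^{|\lambda/\nu|} G_\nu$. This is precisely the combinatorial relation the introduction flagged as the key ingredient.

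Concretely, the left side of the QDE is $\sum_{\db}\Phi(\xb q^{\db})\,G_\lambda(\xb q^{\db})\,(x_1\cdots x_k)\,q^{|\db|}\,z^{|\db|}$, where I have absorbed the $q^{d_b}$ from each $x_b q^{d_b}$ — wait, more carefully, evaluating $\Psi_\lambda$ at $qz$ replaces $z^{|\db|}$ by $q^{|\db|}z^{|\db|}$ and then multiplying by $\OC(-1)=x_1\cdots x_k$ (acting on the class, i.e. on the Chern roots $\xb$, not $\xb q^{\db}$) gives the coefficient of $z^{|\db|}$ as $q^{|\db|}\,x_1\cdots x_k\,\Phi(\xb q^{\db})\,G_\lambda(\xb q^{\db})$. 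The main task is then to apply Proposition \ref{pierigkprop} at the shifted point $\xb q^{\db}$: since $G_\lambda$ times the product of $x_b q^{d_b}$ equals $u_{r_1}\cdots u_{r_k}\sum_\nu(-1)^{|\lambda/\nu|}G_\nu(\xb q^{\db})$, I can solve for $x_1\cdots x_k\, G_\lambda(\xb q^{\db}) = q^{-|\db|}u_{r_1}\cdots u_{r_k}\sum_\nu(-1)^{|\lambda/\nu|}G_\nu(\xb q^{\db})$ — the $q^{-|\db|}$ cancels the $q^{|\db|}$ above. So the coefficient of $z^{|\db|}$ on the left becomes $u_{r_1}\cdots u_{r_k}\,\Phi(\xb q^{\db})\sum_\nu(-1)^{|\lambda/\nu|}G_\nu(\xb q^{\db})$, where the sum is over $\nu\subset\lambda$ with $\lambda/\nu$ a rook strip. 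Meanwhile the right side's coefficient of $z^{|\db|}$ is $u_{r_1}\cdots u_{r_k}\big[\Phi(\xb q^{\db})\sum_{\mu}(-1)^{|\mu/\lambda|}G_\mu(\xb q^{\db}) - \text{(quantum terms from }z^{|\db|-1}\text{)}\big]$, where the classical sum is over $\mu\supset\lambda$ with $\mu/\lambda$ a rook strip. The classical parts do not obviously match (one sum is over subpartitions, the other over superpartitions), so the reconciliation must come from the functional equation satisfied by $\Phi$: I expect $\Phi(\xb q^{\db+e_i})/\Phi(\xb q^{\db})$ to be a ratio of $q$-Pochhammer factors that precisely converts between $G_\lambda$ evaluated at different shifts. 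This is where the identity $(x,q)_\infty = (1-x)(xq,q)_\infty$ enters, together with the bialternant determinant formula for $G_\lambda$.

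The real engine will be a ``telescoping'' argument, as the introduction explicitly says: one rewrites the difference of the two sides, term by term in $z^{|\db|}$, as a sum of differences $F(\db) - F(\db')$ that cancel in pairs when summed over all $\db$. To set this up I would (i) record the exact quasi-periodicity of $\Phi$ under $d_i\mapsto d_i+1$, namely compute $\Phi(\ldots,x_i q^{d_i+1},\ldots)/\Phi(\ldots,x_iq^{d_i},\ldots)$ as a finite rational function of the $x_b q^{d_b}$ and $u_\ell$; (ii) use the bialternant formula to express $G_\mu(\xb q^{\db})$ for $\mu\supset\lambda$ a rook strip in terms of $G_{\text{(something)}}(\xb q^{\db+e_i})$ — i.e. reinterpret ``adding a rook strip'' as ``shifting one Chern root by $q$'' up to a Pochhammer factor — and likewise match the quantum term $G_{\hat{\overline\nu}}$ with a $\db$-shift combined with the row-and-column removal; (iii) assemble everything and check the telescoping cancellation. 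I would also need to verify convergence of the series in $\Z[[z]]$ (each coefficient of $z^N$ is a finite sum since $|\db|=N$ has finitely many solutions with $d_b\ge 0$) and well-definedness in $\mathrm{K}_T(\mathrm{Gr}(k,n))$ — i.e. that $\Phi(\xb q^{\db})G_\lambda(\xb q^{\db})$ is genuinely a symmetric Laurent polynomial in $\xb$ modulo the ideal; here the infinite Pochhammer products in $\Phi$ need care, and I suspect one works formally or uses that the relevant combinations telescope to polynomial expressions.

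The main obstacle I anticipate is step (ii): translating the combinatorial Pieri data (rook strips, the operation $\nu\mapsto\hat{\overline\nu}$) into the analytic language of $q$-shifts of Chern roots. Proposition \ref{pierigkprop} gives exactly the classical ($z^0$) piece for free, but the quantum correction $-z\,\OC_{\hat{\overline\nu}}$ — removing the first row and column — has no evident counterpart in a one-variable $q$-shift, and reconciling it will likely require a second combinatorial identity for double Grothendieck polynomials relating $G_{\hat{\overline\nu}}$ to $G_\nu$ under the substitution that multiplies one variable by $q$ (or equivalently, sends some $x_b$ to a value where a Pochhammer factor vanishes). If such an identity is available (it should follow from the determinant formula by a row/column operation), the rest is bookkeeping; if not, that identity is the real content of the proof and would need to be isolated as a separate lemma.
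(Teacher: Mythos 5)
Your overall strategy (match coefficients of $z^{|\db|}$, use Proposition \ref{pierigkprop} to convert $x_1q^{d_1}\cdots x_kq^{d_k}\,G_\lambda(\xb q^{\db})$ into a rook-strip sum, and telescope the quantum corrections using the quasi-periodicity of $\Phi$) is the paper's strategy in outline, but there are two problems. First, a concrete misreading: the sum in Proposition \ref{pierigkprop} is over partitions $\nu\supset\lambda$ with $\nu/\lambda$ a rook strip, not over subpartitions $\nu\subset\lambda$. Consequently the ``classical'' parts of the two sides match \emph{identically}, term by term in $\db$, whenever $\lambda_1\neq n-k$: in that case every $\nu$ in the rook-strip sum stays inside the $k\times(n-k)$ rectangle, the quantum corrections $-z\,\OC_{\hat{\overline{\nu}}}$ in Proposition \ref{kpieri} are absent, and no functional equation for $\Phi$ is needed at all. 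The ``reconciliation via the functional equation of $\Phi$'' you propose for the classical part is chasing a mismatch that does not exist; the paper disposes of this case in a few lines.

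Second, the genuine content of the proof --- the case $\lambda_1=n-k$ --- is exactly the part you defer to a hoped-for ``second combinatorial identity,'' and your sketch does not supply it. What actually happens: when $\lambda_1=n-k$ the rook-strip sum in Proposition \ref{pierigkprop} produces out-of-rectangle partitions $\nu^a=(\nu_1+1,\nu_2,\dots,\nu_k)$ (Lemma \ref{gkpieri}), and the difference of the two sides reduces to showing $\sum_{\db}z^{|\db|}\Phi(\xb q^{\db},\ub)\bigl(-G_{\nu^a}(\xb q^{\db})+zG_{\hat{\overline{\nu}}}(\xb q^{\db})\bigr)=0$ for each $\nu$. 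The paper proves this by writing both Grothendieck polynomials via the bialternant formula, combining them into a single determinant, expanding along the first row, and telescoping each minor under $d_\eta\mapsto d_\eta+1$ using (i) the identity $(1-x)(xq,q)_\infty=(x,q)_\infty$ to absorb the factor $(y_\eta\,|\,\ub)^n$ into the Pochhammer products of $\Phi$, (ii) Lemma \ref{pochsplit} relating the Vandermonde factor to ratios of Pochhammer symbols, and crucially (iii) equivariant localization to kill the $d_\eta=0$ boundary terms (since $(x_\eta\,|\,\ub)^n$ restricts to zero at every fixed point), without which the re-indexed sums would not cancel. Your proposal names none of these three ingredients, and the boundary-term issue in particular is not ``bookkeeping'': without the localization step the telescoping simply fails to close.
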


For the proof we first need two Lemmas. 
\begin{lemma} \label{pochsplit}
Let $$\mathcal{W}(\boldsymbol{d})= \prod_{1\leq i<j\leq k}  (x_j-x_i)^{-1} \left(\frac{-x_j}{x_i}\right)^{d_j-d_i} q^{{((d_j-d_i)(d_j-d_i-1)/2)}-d_i}.$$
Then we have 
$$\prod_{1\leq i<j\leq k} \left(x_j q^{d_j} - x_i q^{d_i}\right)=\mathcal{W}(\boldsymbol{d})^{-1}\prod_{b,c=1}^k \frac{\left(\frac{x_b}{x_c}q, q \right)_\infty}{\left(\frac{x_bq^{d_b}}{x_cq^{d_c}}q,q\right)_\infty}$$

\end{lemma}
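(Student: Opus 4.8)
The plan is to verify this identity by a direct manipulation of $q$-Pochhammer symbols, reducing it to the telescoping observation that $(a q, q)_\infty / (a q^m, q)_\infty$ is a finite product of $m-1$ terms when $m \geq 1$ (and its reciprocal when $m \leq 0$). First I would isolate one factor of the product on the left, namely $x_j q^{d_j} - x_i q^{d_i}$ for a fixed pair $i < j$, and rewrite it as $-x_i q^{d_i}\bigl(1 - \tfrac{x_j}{x_i} q^{d_j - d_i}\bigr)$. The goal is then to express $1 - \tfrac{x_j}{x_i}q^{d_j-d_i}$ in terms of the ratio $\bigl(\tfrac{x_j}{x_i}q, q\bigr)_\infty \big/ \bigl(\tfrac{x_j q^{d_j}}{x_i q^{d_i}}q, q\bigr)_\infty$ and its ``mirror'' partner with $i,j$ swapped, absorbing all leftover powers of $q$, of $-1$, and of $x_j/x_i$ into $\mathcal{W}(\boldsymbol d)$.

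The key computation is the following. For any $a$ and any integer $m$, write $m = d_j - d_i$; then
\begin{equation*}
\frac{(aq, q)_\infty}{(a q^{m+1}, q)_\infty} = \prod_{s=1}^{m}(1 - a q^s) \quad (m \geq 0), \qquad
\frac{(aq, q)_\infty}{(a q^{m+1}, q)_\infty} = \prod_{s=m+1}^{0}(1 - a q^s)^{-1} \quad (m < 0).
\end{equation*}
Applying this with $a = x_j/x_i$ to the factor from the pair $(i,j)$ and with $a = x_i/x_j$ to the factor from the pair $(j,i)$, and noting that the left side of the lemma only ranges over $i<j$ while the right side's double product ranges over all $b,c$ (so each unordered pair contributes two Pochhammer ratios, plus the $b = c$ terms which are trivially $1$), one sees that the product of the two mirror ratios for a fixed pair equals, up to an explicit monomial in $q$ and a sign, the single binomial $\bigl(x_j q^{d_j} - x_i q^{d_i}\bigr)$ divided by $(x_j - x_i)$. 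Collecting these monomial corrections over all pairs $i<j$ is exactly the content of the formula for $\mathcal W(\boldsymbol d)$: the exponent $(d_j-d_i)(d_j-d_i-1)/2$ is the sum $\sum_{s=1}^{m-1} s$ adjusted for sign conventions, the factor $(-x_j/x_i)^{d_j-d_i}$ tracks the leading coefficients pulled out of each binomial, and the $q^{-d_i}$ accounts for the shift between $(aq,q)_\infty$ and $(a,q)_\infty$-type normalizations.

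I expect the main obstacle to be purely bookkeeping: getting the power of $q$ in $\mathcal W(\boldsymbol d)$ exactly right, since it requires carefully tracking (i) the arithmetic-series exponent from expanding $\prod_{s=1}^{m-1}(1 - aq^s)$ versus the telescoped Pochhammer ratio, (ii) the fact that the right-hand product over $b,c$ double-counts each unordered pair with the roles of $d_i$ and $d_j$ interchanged, and (iii) reconciling the $q^{d+1}$ versus $q^d$ index conventions in the definition $(x,q)_\infty = \prod_{a \geq 1}(1 - x q^a)$. A clean way to sidestep sign and exponent errors is to check the identity first for $k = 2$ (a single pair) where everything is explicit, then observe that both sides are multiplicative over pairs $i<j$ in the sense that the left side is literally such a product and the right side factors as $\prod_{i<j}$ of the two-mirror-ratio expression times the trivial $b=c$ factors; the general case then follows immediately from the $k=2$ case applied pairwise.
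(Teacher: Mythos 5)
Your proposal is correct and follows essentially the same route as the paper: cancel the trivial $b=c$ factors, pair each $b<c$ ratio with its $b>c$ mirror, telescope each Pochhammer quotient to a finite product, and absorb the resulting signs, powers of $x_j/x_i$, and the arithmetic-series exponent of $q$ into $\mathcal{W}(\boldsymbol d)$. The convention issue you flag is real (the identity as stated requires reading $(x,q)_\infty$ with the product starting at $a=0$ rather than $a=1$ as literally written), but with that reading your pairwise computation reproduces the paper's.
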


\begin{proof}
Consider the product
$$\prod_{b,c=1}^k\frac{\left(\frac{x_b}{x_c}q, q \right)_\infty}{\left(\frac{x_bq^{d_b}}{x_cq^{d_c}}q,q\right)_\infty}.$$
The $b=c$ terms are 1, so we may write it as 
\begin{equation}\label{doubletrouble}
    \left(\prod_{b<c} \frac{\left(\frac{x_b}{x_c}q, q \right)_\infty}{\left(\frac{x_bq^{d_b}}{x_cq^{d_c}}q,q\right)_\infty}\right)\left(\prod_{b>c} \frac{\left(\frac{x_b}{x_c}q, q \right)_\infty}{\left(\frac{x_bq^{d_b}}{x_cq^{d_c}}q,q\right)_\infty}\right)=\left(\prod_{b<c} \frac{\left(\frac{x_b}{x_c}q, q \right)_\infty}{\left(\frac{x_bq^{d_b}}{x_cq^{d_c}}q,q\right)_\infty}\frac{\left(\frac{x_c}{x_b}q, q \right)_\infty}{\left(\frac{x_cq^{d_c}}{x_bq^{d_b}}q,q\right)_\infty}\right)
    \end{equation}

Note that for $\xi,\eta \in [k]$ where $\xi \neq \eta$, by using the $q$-Pochammer definition and simplifying further we obtain
    
    $$ \frac{\left(\frac{x_\xi}{x_\eta}q, q \right)_\infty}{\left(\frac{x_\xi q^{d_\xi}}{x_\eta q^{d_\eta}}q,q\right)_\infty}\times \frac{\left(\frac{x_\eta}{x_\xi}q, q \right)_\infty}{\left(\frac{x_\eta q^{d_\eta}}{x_\xi q^{d_\xi}}q,q\right)_\infty}= \frac{x_\xi q^{d_\xi}-x_\eta q^{d_\eta}}{x_\xi-x_{\eta}} \left(\frac{- x_\xi}{x_\eta}\right)^{d_\xi-d_\eta} q^{((d_\xi - d_\eta -1)(d_\xi - d_\eta)/2)-d_\eta}.$$
    
Then (\ref{doubletrouble}) becomes
\begin{align*}
    \left(\prod_{b<c} \frac{\left(\frac{x_b}{x_c}q, q \right)_\infty}{\left(\frac{x_bq^{d_b}}{x_cq^{d_c}}q,q\right)_\infty}\frac{\left(\frac{x_c}{x_b}q, q \right)_\infty}{\left(\frac{x_cq^{d_c}}{x_bq^{d_b}}q,q\right)_\infty}\right) &= \prod_{b<c} \frac{x_cq^{d_c}-x_b q^{d_b}}{x_c-x_{b}} \left(\frac{- x_c}{x_b}\right)^{d_c-d_b} q^{((d_c- d_b -1)(d_c- d_b)/2)-d_b}\\
    &= \left(\prod_{b<c} x_cq^{d_c}-x_b q^{d_b}\right) \mathcal{W}(\boldsymbol{d}),
\end{align*}
and after rearranging to isolate $\mathcal{W}(\boldsymbol{d})$ we are done.
\end{proof}

The second Lemma we use follows from Proposition \ref{pierigkprop} :
\begin{lemma}\label{gkpieri}
    Let $\lambda=(\lambda_1,\dots, \lambda_k)$ be a partition with $\lambda_1 = n-k$. For any partition $\nu=(\nu_1,\dots,\nu_k)$, define $\nu^a := (\nu_1+1,\nu_2,\dots,\nu_k)$. Then we have
    \begin{equation*}
        x_1\cdots x_k G_\lambda (\xb;\ub)  = u_{r_1}\cdots u_{r_k}\sum \limits_{\nu} (-1)^{|\lambda / \nu|} \left( G_\nu(\xb;\ub) -G_{\nu^a}(\xb;\ub)\right),
    \end{equation*}
    where this sum is over all partitions $\nu$ in the $k\times (n-k)$ rectangle such that $\nu \supset \lambda$ and $\nu/\lambda$ is a rook strip. $r$ is the corresponding $k$-subset to $\lambda$.
\end{lemma}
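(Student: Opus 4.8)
The plan is to obtain Lemma \ref{gkpieri} directly from Proposition \ref{pierigkprop} by splitting and reindexing its right-hand sum. I would apply Proposition \ref{pierigkprop} to the given $\lambda$, which lies in the $k\times(n-k)$ rectangle because $\lambda_1=n-k$; the prefactor $u_{r_1}\cdots u_{r_k}$ is the same in both statements, so it suffices to prove the polynomial identity
\[
\sum_{\mu}(-1)^{|\mu/\lambda|}G_\mu \;=\; \sum_{\nu}(-1)^{|\nu/\lambda|}\bigl(G_\nu-G_{\nu^a}\bigr),
\]
where $\mu$ ranges over all partitions with $\mu\supset\lambda$ and $\mu/\lambda$ a rook strip (no rectangle constraint), while $\nu$ ranges over those also contained in the $k\times(n-k)$ rectangle.

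First I would record a dichotomy for the partitions $\mu$ on the left. Since $\mu/\lambda$ is a rook strip, $\mu_i\le\lambda_i+1\le\lambda_1+1=n-k+1$ for every $i$, so $\mu$ can overflow the rectangle only through a part equal to $n-k+1$, and since $\mu$ is a partition a single such part forces $\mu_1=n-k+1$. Hence $\mu$ lies in the rectangle exactly when $\mu_1=n-k$ and outside exactly when $\mu_1=n-k+1$. In the outside case the box $(1,n-k+1)$ belongs to $\mu/\lambda$, so the ``at most one box per column'' half of the rook-strip condition forbids any further box in column $n-k+1$; therefore $\mu_j\le n-k$ for all $j\ge 2$, the unique box of $\mu$ outside the rectangle is $(1,n-k+1)$, and deleting it yields a rectangle partition $\nu$ with $\nu\supset\lambda$, $\nu/\lambda$ still a rook strip, and $\mu=\nu^a$.

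Next I would promote this to a bijection $\nu\mapsto\nu^a$ from the set of rectangle partitions $\nu\supset\lambda$ with $\nu/\lambda$ a rook strip onto the set of partitions $\mu\supset\lambda$ with $\mu/\lambda$ a rook strip and $\mu_1=n-k+1$, the inverse being ``delete the box $(1,n-k+1)$.'' The check that $\nu^a/\lambda$ is again a rook strip is short: its first row contains only the new box, and column $n-k+1$ contains only that box because every part of $\nu$ is at most $n-k$; conversely removing a box from a rook strip keeps it one, and by the previous step removing $(1,n-k+1)$ from such a $\mu$ lands back in the rectangle. Because $|\nu^a|=|\nu|+1$, the sign flips: $(-1)^{|\nu^a/\lambda|}=-(-1)^{|\nu/\lambda|}$.

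Finally I would split the left-hand sum of the displayed identity into its in-rectangle part and its $\mu_1=n-k+1$ part, rewrite the second through the bijection, and combine term by term to get $\sum_\nu(-1)^{|\nu/\lambda|}(G_\nu-G_{\nu^a})$ over rectangle partitions $\nu$, which is Lemma \ref{gkpieri} once $u_{r_1}\cdots u_{r_k}$ is restored. The main obstacle, and really the only step that is not bookkeeping, is the column-disjointness argument: it is what guarantees both that a $\mu$ outside the rectangle escapes only through $(1,n-k+1)$ and that $\nu^a/\lambda$ stays a rook strip. I would also remark that this is an identity of symmetric Laurent polynomials, with $G_{\nu^a}$ understood via the bialternant formula even though $\nu^a$ overflows the rectangle, consistent with the ``holds for all inputs'' remark preceding the statement.
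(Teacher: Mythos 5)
Your proposal is correct and follows essentially the same route as the paper: apply Proposition \ref{pierigkprop}, split the sum into in-rectangle and out-of-rectangle terms, identify the latter with the $\nu^a$ via the bijection, and use the sign flip $(-1)^{|\nu^a/\lambda|}=-(-1)^{|\nu/\lambda|}$ to combine. You actually supply more detail than the paper does on the key point — the column-disjointness argument showing that an out-of-rectangle $\mu$ can only overflow through the single box $(1,n-k+1)$ and that $\nu^a/\lambda$ remains a rook strip — which the paper asserts without proof.
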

\begin{proof}
    First split the sum of Proposition \ref{pierigkprop} as
    \begin{equation*}
        x_1\cdots x_k G_\lambda(\xb;\ub)  = u_{r_1}\cdots u_{r_k}\left(\sum \limits_{\nu} (-1)^{|\lambda / \nu|} G_\nu (\xb;\ub)+\sum_\sigma (-1)^{|\sigma/\lambda|}G_{\sigma}(\xb ;\ub)\right),
    \end{equation*}
    where the first sum is over partitions $\nu$ that fit in $k\times (n-k)$ rectangle and the second sum is over partitions $\sigma=(\sigma_1,\dots,\sigma_k)$ that do not fit in this rectangle (all partitions still form rook strips when $\lambda$ is subtracted). However, note that for our case $\lambda_1=n-k$, for every partition $\nu\supset \lambda$ where $\nu/\lambda$ is a rook strip, $(\nu_1+1,\nu_2,\dots,\nu_k)$ is a partition (not fitting in the rectangle) that contains $\lambda$ and forms a rook strip when $\lambda$ is subtracted. Let $\nu^a=(\nu_1+1,\nu_2,\dots,\nu_k)$. Note we have each of these terms for every $\nu$, and for this case these are all the partitions outside the rectangle that form a rook strip when $\lambda$ is subtracted. Then our sum is
    \begin{align*}
        x_1\cdots x_k G_\lambda(\xb;\ub)  &= u_{r_1}\cdots u_{r_k}\left(\sum \limits_{\nu} (-1)^{|\lambda / \nu|} G_\nu(\xb;\ub) +\sum_\sigma (-1)^{|\sigma/\lambda|}G_{\sigma}(\xb;\ub)\right)\\
        &= u_{r_1}\cdots u_{r_k}\left(\sum \limits_{\nu} (-1)^{|\lambda / \nu|} G_\nu(\xb;\ub) +\sum_\nu (-1)^{|\nu^a/\lambda|}G_{\nu^a}(\xb;\ub)\right).
    \end{align*}
    Then since $(-1)^{|\nu^a/\lambda|}=-(-1)^{|\nu/\lambda|}$, we combine the sums and obtain the desired result.\end{proof}

With this information we now prove our main result.
\begin{proof}[Proof of Theorem \ref{mainthm}]
     The first case will consider the partitions $\lambda$ with $\lambda_1 \neq n-k$. The second case will cover partitions $\lambda$ with $\lambda_1 =n-k$. 
\vspace{0.5cm}

\noindent\emph{Case 1:}
Let $\lambda$ be a partition such that $\lambda_1 \neq n-k$. Observe

\begin{equation*}
    \Psi_\lambda(\xb, zq)  \OC(-1)=\sum  \limits_{\boldsymbol{d}} \Phi(\xb q^{\boldsymbol{d}},\ub) G_{\lambda} \left( \xb q^{\boldsymbol{d}}\right) (qz)^{|\boldsymbol{d}|} \cdot\left({x_1 \cdots  x_k}\right).
\end{equation*}
Since $q^{|\boldsymbol{d}|}=q^{d_1}\cdots q^{d_k}$, we can pair this to each of the $x_i$'s and the right hand side becomes
\begin{equation}\label{c1l2}
    \sum \limits_{\boldsymbol{d}} \Phi(\xb  q^{\boldsymbol{d}},\ub)\left({x_1q^{d_1}\cdots x_kq^{d_k}}\right)G_{\lambda} \left( \xb  q^{\boldsymbol{d}}\right)z^{|\boldsymbol{d}|}.
\end{equation}
Then using Lemma \ref{pierigkprop},  
$$
(x_1q^{d_1}\cdots x_k q^{d_k})G_\lambda (x_1q^{d_1}\cdots x_k q^{d_k}) = u_{r_1}\cdots u_{r_k} \sum_\mu  (-1)^{|\mu/\lambda|} G_\mu (x_1q^{d_1}\cdots x_k q^{d_k}).
$$
Note because $\lambda_1\neq n-k$, all the partitions $\mu$ indexing the sum are inside the $k\times (n-k)$ rectangle. We will use $\nu$ instead to indicate all the partitions $\nu$ are in the $k\times (n-k)$ rectangle. Then ($\ref{c1l2}$) becomes
\begin{equation} \label{rooky}
    \sum \limits_{\boldsymbol{d}} \Phi(\xb  q^{\boldsymbol{d}},\ub)\left(\prod_{b=1} ^k {u_{r _b}} \right) \left(\sum_{\nu} (-1)^{|\nu/\lambda|} G_{\nu}(\xb q^{\boldsymbol{d}})\right)z^{|\boldsymbol{d}|}.
\end{equation}
By Proposition \ref{kpieri}, we write (\ref{rooky}) as
\begin{equation*}
    \sum \limits_{\boldsymbol{d}} \Phi(\xb q^{\boldsymbol{d}},\ub)\left(M(z)G_\lambda\left(\xb q^{\boldsymbol{d}}\right)\right)z^{|\boldsymbol{d}|}.
\end{equation*}
Since $M(z)$ does not depend on $\boldsymbol{d}$, this is
\begin{align*}
    &=M(z)\sum \limits_{\boldsymbol{d}} \Phi(\xb q^{\boldsymbol{d}},\ub)G_\lambda\left(\xb q^{\boldsymbol{d}}\right)z^{|\boldsymbol{d}|}\\
    &=M(z)\Psi_\lambda (z).
\end{align*}
\vspace{0.2cm}

\noindent\emph{Case 2:} Choose $\lambda$ such that $\lambda_1 = n-k$. We show that $\Psi(qz) \OC(-1) - M(z) \Psi_\lambda(z)=0$. 

Observe for $\Psi_\lambda(qz)  \OC(-1)$ we may follow the first few steps of case 1 to obtain
\begin{equation}\label{c2l1}
    \Psi_\lambda(qz) \OC(-1) \!- \! M(z)\Psi_\lambda(z)
     \!=\! \sum \limits_{\boldsymbol{d}} \!\Phi(\xb q^{\boldsymbol{d}},\ub) \left( x_1q^{d_1}\cdots x_kq^{d_k}\right)G_{\lambda} \left( \xb q^{\boldsymbol{d}}\right)z^{|\boldsymbol{d}|}  -M(z)\Psi_\lambda(z).
\end{equation}
We consider each piece separately and then combine. For the first part of (\ref{c2l1}), since $\lambda_1=n-k$ we use Lemma \ref{gkpieri} to have that 
\begin{equation} \label{gkppf}
x_1q^{d_1}\cdots x_kq^{d_k} G_\lambda(\xb q^{\db};\ub)  = u_{r_1}\cdots u_{r_k}\sum \limits_{\nu} (-1)^{|\lambda / \nu|} \left( G_\nu(\xb q^{\db};\ub)  -G_{\nu^a}(\xb q^{\db};\ub) \right),
\end{equation}
where this sum is over partitions $\nu$ in the $k\times (n-k)$ rectangle such that $\nu/\lambda$ is a rook strip. Recall $\nu^a$ is defined by adding 1 to the first entry of $\nu$ (i.e. $\nu^a=(\nu_1+1,\nu_2,\dots, \nu_k))$.

For the second part of (\ref{c2l1}), we obtain
\begin{align*}
    M(z)\Psi_\lambda(z) &= M(z)\sum \limits_{\boldsymbol{d}} \Phi(\xb q^{\boldsymbol{d}},\ub)G_\lambda\left(\xb q^{\boldsymbol{d}}\right)z^{|\boldsymbol{d}|}\\
    &=\sum \limits_{\boldsymbol{d}} \Phi(\xb q^{\boldsymbol{d}},\ub)M(z)G_\lambda\left(\xb q^{\boldsymbol{d}}\right)z^{|\boldsymbol{d}|}.
\end{align*}
Recall that $\hat{\overline{\nu}}$ is obtained by removing the first row and column from $\nu$.
By Proposition \ref{kpieri}, we have $M(z)G_\lambda= u_{r_1}\cdots u_{r_k} \sum_\nu (-1)^{|\nu/\lambda|} (G_\nu (\xb q^{\boldsymbol{d}})-zG_{\hat{\overline{\nu}}}(\xb q^{\boldsymbol{d}})).$ Using this and (\ref{gkppf}), we write equation (\ref{c2l1}) as: 
\begin{align*}
    \Psi_\lambda(qz) \OC(-1) - M(z)\Psi_\lambda(z)
     =& \sum \limits_{\boldsymbol{d}} z^{|\boldsymbol{d}|}\Phi(\xb q^{\boldsymbol{d}},\ub) \left( x_1q^{d_1}\cdots x_kq^{d_k}\right)G_{\lambda} \left( \xb q^{\boldsymbol{d}}\right) -M(z)\Psi_\lambda(z)\\
     = \sum \limits_{\boldsymbol{d}}  z^{|\boldsymbol{d}|}\Phi(\xb q^{\boldsymbol{d}},\ub) &\bigg({u_{r_1}\cdots u_{r_k}}\sum \limits_\nu (-1)^{|\nu/\lambda|} (G_{\nu}(\xb q^{\boldsymbol{d}})- G_{\nu^a}(\xb q^{\boldsymbol{d}}) ) \\
     &     -{u_{r_1}\cdots u_{r_k}} \sum \limits_\nu (-1)^{|\nu/\lambda|} (G_\nu(\xb q^{\boldsymbol{d}}) - z G_{\hat{\overline{\nu}}}(\xb q^{\boldsymbol{d}}))\bigg)\\
    ={u_{r_1}\cdots u_{r_k}} \sum_\nu (-1)^{|\nu/\lambda|}  &\left(\sum_{\boldsymbol{d}} z^{|\boldsymbol{d}|}\Phi(\xb q^{\boldsymbol{d}},\ub)\left(-G_{\nu^a}(\xb q^{\boldsymbol{d}}) +z G_{\hat{\overline{\nu}}}(\xb q^{\boldsymbol{d}})\right)\right)
\end{align*}
We will show that the $ \sum_{\boldsymbol{d}} z^{|\boldsymbol{d}|}\Phi(\xb q^{\boldsymbol{d}},\ub)\left(-G_{\nu^a}(\xb q^{\boldsymbol{d}}) +z G_{\hat{\overline{\nu}}}(\xb q^{\boldsymbol{d}})\right)$ part of the sum to telescopes to 0 for each $\nu$. Choose a partition $\nu$. For convenience let $y_b=x_bq^{d_b}$ for each $b\in [k]$. Then using the determinant formula (\ref{gdetformula}) to unravel 
$$
\sum_{\boldsymbol{d}} z^{|\boldsymbol{d}|}\Phi(\xb q^{\boldsymbol{d}},\ub)\left(-G_{\nu^a}(\xb q^{\boldsymbol{d}}) +z G_{\hat{\overline{\nu}}}(\xb q^{\boldsymbol{d}})\right),
$$
we have:
\begin{equation} \label{detfirst}
\begin{split}
    &\sum_{\boldsymbol{d}} z^{|\boldsymbol{d}|}\Phi (\boldsymbol{y},\ub) \left(-G_{\nu^a}(\boldsymbol{y};\ub) +z G_{\hat{\overline{\nu}}}\right(\boldsymbol{y};\ub))=\\
    &\scalebox{0.93}{$ \displaystyle -\sum_{\boldsymbol{d}} z^{|\boldsymbol{d}|}\Phi \,V^{-1} \! \left(\begin{vmatrix}
(y_1\!\mid\!\ub)^{\nu_1 + k} &  \! \cdots \!& (y_k\!\mid\!\ub)^{\nu_1 + k} \\[6pt]
y_1\,(y_1\!\mid\!\ub)^{\nu_2 + k - 2} &   \! \cdots \!& y_k\,(y_k\!\mid\!\ub)^{\nu_2 + k - 2} \\
\vdots &  \! \ddots \! & \vdots \\
y_1^{\,k-1}\,(y_1\!\mid\!\ub)^{\nu_k} &   \! \cdots \!& y_k^{\,k-1}\,(y_k\!\mid\!\ub)^{\nu_k}
\end{vmatrix}\!- \! z\begin{vmatrix}
(y_1\!\mid\!\ub)^{\nu_1 + k - 2} &   \! \cdots \! & (y_k\!\mid\!\ub)^{\nu_1 + k - 2} \\
\vdots  &  \! \ddots \! &\vdots \\
y_1^{\,k-2}\,(y_1\!\mid\!\ub)^{\nu_k} &   \! \cdots \!& y_k^{\,k-2}\,(y_k\!\mid\!\ub)^{\nu_k} \\[6pt]
y_1^{\,k-1} &  \! \cdots \! & y_k^{\,k-1}
\end{vmatrix}\right)$},
\end{split}
\end{equation} 
where $V= \prod_{i<j} (y_j-y_i)$ is the Vandermonde determinant. We will rewrite each of these determinants. For the first matrix, note by factoring $y_j$ from each column we obtain
\begin{equation*}
    \begin{vmatrix}
(y_1\!\mid\!\ub)^{\nu_1 + k} &  \cdots & (y_k\!\mid\!\ub)^{\nu_1 + k} \\[6pt]
y_1\,(y_1\!\mid\!\ub)^{\nu_2 + k - 2} &  \cdots & y_k\,(y_k\!\mid\!\ub)^{\nu_2 + k - 2} \\
\vdots & \ddots & \vdots \\
y_1^{\,k-1}\,(y_1\!\mid\!\ub)^{\nu_k} &  \cdots & y_k^{\,k-1}\,(y_k\!\mid\!\ub)^{\nu_k}
\end{vmatrix}=y_1\cdots y_k \begin{vmatrix}
y_1^{-1}(y_1\!\mid\!\ub)^{\nu_1 + k} &  \cdots & y_k^{-1}(y_k\!\mid\!\ub)^{\nu_1 + k} \\[6pt]
\,(y_1\!\mid\!\ub)^{\nu_2 + k - 2} &  \cdots & \,(y_k\!\mid\!\ub)^{\nu_2 + k - 2} \\
\vdots & \ddots & \vdots \\
y_1^{\,k-2}\,(y_1\!\mid\!\ub)^{\nu_k} &  \cdots & y_k^{\,k-2}\,(y_k\!\mid\!\ub)^{\nu_k} 
\end{vmatrix}.
\end{equation*}
Note that $\nu_1=n-k$ so $\nu_1+k=n$. Rewriting this and then multiplying the factor in front to the first row turns this factor into
\begin{equation} \label{detl2}
\begin{vmatrix}
(y_1\!\mid\!\ub)^{n}\prod_{m\in [k]\setminus {1}} y_m  &  \cdots &(y_k\!\mid\!\ub)^{n}  \prod_{m\in [k]\setminus {k}}y_m \\[6pt]
\,(y_1\!\mid\!\ub)^{\nu_2 + k - 2} &  \cdots & \,(y_k\!\mid\!\ub)^{\nu_2 + k - 2} \\
\vdots & \ddots & \vdots \\
y_1^{\,k-2}\,(y_1\!\mid\!\ub)^{\nu_k} &  \cdots & y_k^{\,k-2}\,(y_k\!\mid\!\ub)^{\nu_k} 
\end{vmatrix}
\end{equation}
Now for the second term of (\ref{detfirst}) we switch the bottom row with the row above it $k-1$ times so that it is at the top, multiplying $(-1)$ with each switch:
\begin{equation}\label{detl3}
    z\begin{vmatrix}
(y_1\!\mid\!\ub)^{\nu_1 + k - 2} &  \cdots & (y_k\!\mid\!\ub)^{\nu_1 + k - 2} \\
\vdots  & \ddots &\vdots\\
y_1^{\,k-2}\,(y_1\!\mid\!\ub)^{\nu_k} &  \cdots & y_k^{\,k-2}\,(y_k\!\mid\!\ub)^{\nu_k} \\[6pt]
y_1^{\,k-1} &  \cdots & y_k^{\,k-1}
\end{vmatrix}=(-1)^{k-1}z\begin{vmatrix}
y_1^{\,k-1} &  \cdots & y_k^{\,k-1}\\
(y_1\!\mid\!\ub)^{\nu_1 + k - 2} &  \cdots & (y_k\!\mid\!\ub)^{\nu_1 + k - 2} \\
\vdots  & \ddots &\vdots\\
y_1^{\,k-2}\,(y_1\!\mid\!\ub)^{\nu_k} &  \cdots & y_k^{\,k-2}\,(y_k\!\mid\!\ub)^{\nu_k} 
\end{vmatrix}
\end{equation}
Attaching this negative factor to the first row, we use (\ref{detl2}) and (\ref{detl3}) to write (\ref{detfirst}) as:
\begin{equation}\label{combinedps}
    -\sum_{\boldsymbol{d}} \, z^{|\db|}\Phi(\boldsymbol{y},\ub) V^{-1}\begin{vmatrix}
    (y_j \, | \, \ub)^n \prod\limits_ {{m}\neq j}^k y_{m} - z (-y_j)^{k-1}\\
    (y_j \, | \, \ub)^{\nu_2+k-2}\\
    \vdots\\
    y_j ^{k-2} (y_j \, | \, \ub)^{\nu_k}
\end{vmatrix}_{j=1} ^k.
\end{equation}
We omit writing the negative on the outside of the sum. Expanding this determinant into minors across the first row will give us $k$ minors. We will show that each of these become a sum which telescopes to 0.
\vspace{0.4cm}

\noindent
Pick $\eta\in [k]$. Let
\begin{equation}
    \mathcal{M}_\eta = \begin{vmatrix}
(y_1\!\mid\!\ub)^{\nu_1 + k - 2} & (y_2\!\mid\!\ub)^{\nu_1 + k - 2} & \cdots &\widehat{ (y_\eta\!\mid\!\ub)}^{\nu_1 + k - 2}&\cdots&(y_k\!\mid\!\ub)^{\nu_1 + k - 2} \\
\vdots & \vdots & & \vdots \\
y_1^{\,k-2}\,(y_1\!\mid\!\ub)^{\nu_k} & y_2^{\,k-2}\,(y_2\!\mid\!\ub)^{\nu_k} & \cdots &\widehat{y_\eta ^{k-2} (y_\eta\!\mid\!\ub)}^{\nu_k}&\cdots& y_k^{\,k-2}\,(y_k\!\mid\!\ub)^{\nu_k} \\
\end{vmatrix},
\end{equation}
the $(k-1)\times (k-1)$ minor to the top row $\eta$ term. With this we can write the $\eta$ term of the expansion (\ref{combinedps}) as 
\begin{equation} \label{VMsum} 
\sum_{\boldsymbol{d}} z^{|\db|}\Phi(\boldsymbol{y},\ub) \,(-1)^{\eta+1}V^{-1} \mathcal{M}_\eta \left[  (y_\eta \, | \, \ub)^{\nu_1+k} \prod _{m\neq \eta} ^k y_m - z\, (-y_\eta) ^{k-1}\right].
\end{equation}

We can factor out the $(-1)^{\eta+1}$ term and we will omit writing it. We will split this into two sums:
\begin{align*}
    &\sum_{\boldsymbol{d}}  V ^{-1}\mathcal{M}_\eta \, \Phi(\xb q^{\boldsymbol{d}},\ub){z}^{|\boldsymbol{d}|}  \cdot \left(  (y_\eta \, | \, \ub)^{n} \prod _{m\neq \eta} ^k y_m\right) \\&\hspace{2.5in}
- \sum_{d_1,\dots,d_k \geq 0} V ^{-1}\mathcal{M}_\eta \, \Phi(\xb q^{\boldsymbol{d}},\ub){z}^{|\boldsymbol{d}|}\left(z\, (-y_\eta )^{k-1}\right).
\end{align*}
Note that when $d_\eta=0$, $(y_\eta \, | \, \ub)^{n} =(1-x_\eta /u_1)\cdots (1-x_\eta /u_n)$, meaning restricting $x_\eta$ to any fixed point would make $(y_\eta \, | \, \ub)^n=0$. From equivariant localization of this ring, the $d_\eta=0$ terms of the first sum are 0. Hence we have the above equivalent to:

\begin{align}
&\sum_{\substack{d_m\geq 0 \text{ for } m\neq\eta \\d_\eta \geq 1}}  V ^{-1}\mathcal{M}_\eta \, \Phi(\xb q^{\boldsymbol{d}},\ub){z}^{|\boldsymbol{d}|}  \cdot \left(  (y_\eta \, | \, \ub)^{n} \prod _{m\neq \eta} ^k y_m\right)\nonumber \\&\hspace{2.5in}
- \sum_{d_1,\dots,d_k \geq 0} V ^{-1}\mathcal{M}_\eta \, \Phi(\xb q^{\boldsymbol{d}},\ub){z}^{|\boldsymbol{d}|}\left(z\, (-y_\eta )^{k-1}\right)\label{splitlesseta}
\end{align}

We expand out $\Phi$ and use Lemma \ref{pochsplit} to simplify further. Recall that the Lemma states $$V=\mathcal{W}(\boldsymbol{d})^{-1}\prod_{b,c=1}^k \frac{\left(\frac{x_b}{x_c}q, q \right)_\infty}{\left(\frac{x_bq^{d_b}}{x_cq^{d_c}}q,q\right)_\infty}.$$\\
  Then (\ref{splitlesseta}) is 
\begin{align*}
=&\sum_{\substack{d_m\geq 0 \text{ for } m\neq\eta \\d_\eta \geq 1}} \mathcal{M}_\eta  \mathcal{W}(\boldsymbol{d})\prod_{b,c=1}^k \frac{\left(\frac{x_bq^{d_b}}{x_cq^{d_c}}q,q\right)_\infty}{\left(\frac{x_b}{x_c}q, q \right)_\infty}\cdot \frac{\prod_{b=1} ^k \prod_{l=1}^n \left(\frac{x_b q^{d_b}}{u_\ell}q,q \right)_\infty}{\prod\limits_{b,c=1}^k \left(\frac{x_b \,q^{d_b}}{x_c \,q^{d_c}}q,q\right)_\infty}{z}^{|\boldsymbol{d}|}  \left(  (y_\eta \, | \, \ub)^{n} \prod _{m\neq \eta} ^k y_m\right) \\
&
- \sum_{d_1,\dots,d_k \geq 0} \mathcal{M}_\eta \left(\mathcal{W}(\boldsymbol{d})\prod_{b,c=1}^k \frac{\left(\frac{x_bq^{d_b}}{x_cq^{d_c}}q,q\right)_\infty}{\left(\frac{x_b}{x_c}q, q \right)_\infty}\right) \frac{\prod_{b=1} ^k \prod_{l=1}^n \left(\frac{x_b q^{d_b}}{u_\ell}q,q \right)_\infty}{\prod\limits_{b,c=1}^k \left(\frac{x_b \,q^{d_b}}{x_c \,q^{d_c}}q,q\right)_\infty}{z}^{|\boldsymbol{d}|}\left(z\, (-y_\eta )^{k-1}\right).
\end{align*}
Simplify further and factor out the $\prod_{b,c=1} ^k \left( \frac{x_b}{x_c}q, q \right)^{-1}_\infty$ term from both sums. We will omit writing this factor and show the rest becomes 0.
\begin{align}\label{nearfinish}
=&\sum_{\substack{d_m\geq 0 \text{ for } m\neq\eta \\d_\eta \geq 1}} \mathcal{M}_\eta  \, \mathcal{W}(\boldsymbol{d})\cdot {\prod_{b=1} ^k \prod_{l=1}^n \left(\frac{x_b q^{d_b}}{u_\ell}q,q \right)_\infty}{z}^{|\boldsymbol{d}|}  \cdot \left(  (y_\eta \, | \, \ub)^{n} \prod _{m\neq \eta} ^k y_m\right) \nonumber \\
&- \sum_{d_1,\dots,d_k \geq 0} \mathcal{M}_\eta\, \mathcal{W}(\boldsymbol{d})\prod_{b,c=1}^k  {\prod_{b=1} ^k \prod_{l=1}^n \left(\frac{x_b q^{d_b}}{u_\ell}q,q \right)_\infty}{z}^{|\boldsymbol{d}|}\left(z\, (-y_\eta )^{k-1}\right).
\end{align}
Note that by algebraic manipulation we obtain
\begin{align*}
\prod_{b=1} ^k \prod_{l=1}^n \left(\frac{x_b q^{d_b}}{u_\ell}q,q \right)_\infty(y_\eta \, | \, \ub)^{n}&=\prod_{b=1} ^k \prod_{l=1}^n \left(\frac{x_b q^{d_b}}{u_\ell}q,q \right)_\infty \cdot \left(1-\frac{x_\eta q^{d_\eta}}{u_1}\right)\cdots\left(1-\frac{x_\eta q^{d_\eta}}{u_n}\right)\\
&= \prod_{b\in [k]\setminus  \{\eta\}}  \prod_{l=1}^n \left(\frac{x_b q^{d_b}}{u_\ell}q,q \right)_\infty \prod_{l=1}^n \left(\frac{x_\eta q^{d_\eta}}{u_\ell},q \right)_\infty.
\end{align*}
We also combine $\prod_{m\neq \eta} ^k y_m$ with $\mathcal{W}(\boldsymbol{d})$ to get
$$\mathcal{W}(\boldsymbol{d})\prod_{m\neq \eta} ^k y_m = \mathcal{W}(d_1\dots, d_{\eta-1},d_\eta-1,d_{\eta+1},\dots,d_k) \, (-x_\eta q^{d_\eta})^{k-1}.
$$
Combining these facts with (\ref{nearfinish}) changes the first sum of the expression and we have

\begin{align*}
=&\sum_{\substack{d_m\geq 0 \text{ for } m\neq\eta \\d_\eta \geq 1}} \mathcal{M}_\eta \, \mathcal{W}(d_{\eta}-1)\, {\prod_{b\in [k]\setminus  \{\eta\}}  \prod_{l=1}^n \left(\frac{x_b q^{d_b}}{u_\ell}q,q \right)_\infty \prod_{l=1}^n \left(\frac{x_\eta q^{d_\eta}}{u_\ell},q \right)_\infty}{z}^{|\boldsymbol{d}|}  \cdot \left( -x_\eta q^{d_\eta-1} \right) ^{k-1}\\
&- \sum_{d_1,\dots,d_k \geq 0} \mathcal{M}_\eta\mathcal{W}(\boldsymbol{d})\, {\prod_{b=1} ^k \prod_{l=1}^n \left(\frac{x_b q^{d_b}}{u_\ell}q,q \right)_\infty}{z}^{|\boldsymbol{d}|+1}\, (-x_\eta q^{d_\eta})^{k-1}
\end{align*}

Define a function $f_\eta (d_\eta) =\mathcal{M}_\eta\mathcal{W}(\boldsymbol{d})\, {\prod_{b=1} ^k \prod_{l=1}^n \left(\frac{x_b q^{d_b}}{u_\ell}q,q \right)_\infty}{z}^{|\boldsymbol{d}|+1}\, (-x_\eta q^{d_\eta})^{k-1}$. We write it only in terms of $d_\eta$ since that will be the only variable shifted. Then the above expression is
\begin{equation*}
    \sum_{\substack{d_m\geq 0 \text{ for } m\neq\eta \\d_\eta \geq 1}} f_\eta (d_\eta-1) - \sum_{d_1,\dots,d_k \geq 0}  f_\eta(d_\eta).
\end{equation*}
Shift the first sum $d_\eta \to d_\eta+1$. After reindexing we obtain
\begin{equation*}
    \sum_{d_1,\dots,d_k \geq 0}  f_\eta(d_\eta)-\sum_{d_1,\dots,d_k \geq 0}  f_\eta(d_\eta)=0,
\end{equation*}
showing that the minor of the $\eta$ term when expanding across the first row is 0. Since $\eta$ was picked arbitrarily, this proves the other $k-1$ terms become 0 as well, completing the proof.
\end{proof}

\subsection{Bethe equations via saddle point and Bethe vectors}

\begin{proposition}
    At $q\to 1$ the saddle point of the integral (\ref{psiKEQkn}) corresponds to the $k$ Bethe equations
    \begin{equation}
        z \prod \limits_{i\in [k]\setminus\  \{j\}}\frac{-x_j}{x_i} = \prod \limits_{\ell=1}^n \left(1-\frac{x_j}{u_\ell}\right), 
    \end{equation}
    for $j\in [k]$.
\end{proposition}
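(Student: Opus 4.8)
The plan is to transplant the saddle-point argument used above for the cohomological qde into the multiplicative setting. Set $q=e^{-\hbar}$ and let $\hbar\to 0^{+}$. In the summand $\Phi(\xb q^{\db},\ub)\,G_\lambda(\xb q^{\db};\ub)\,z^{|\db|}$ of (\ref{psiKEQkn}), the double Grothendieck polynomial is a Laurent polynomial evaluated at the bounded quantities $x_bq^{d_b}$, so $\ln G_\lambda=O(1)$ and it contributes only to the fluctuation prefactor, not to the exponential ``action''; thus the saddle is governed by $\Phi\cdot z^{|\db|}$ alone. To turn the sum over $\db$ into a saddle-point integral, put $\zeta_b:=x_bq^{d_b}$, so that $d_b=\hbar^{-1}\ln(x_b/\zeta_b)$ and, up to $O(1)$ factors, $\sum_{\db}$ becomes an integral over the $\zeta_b$; a nontrivial saddle forces $d_b\sim\hbar^{-1}$, i.e. $q^{d_b}=e^{-\hbar d_b}$ — and hence $\zeta_b$ — stays finite and bounded away from $0$.

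The analytic input is the $q\to 1$ asymptotics of the $q$-Pochhammer symbol. From $\ln(w;q)_\infty=-\sum_{n\ge 1}\frac{w^{n}}{n}\,\frac{q^{n}}{1-q^{n}}$ and $\frac{q^{n}}{1-q^{n}}=\frac{1}{n\hbar}+O(1)$ one gets $\ln(w;q)_\infty=-\hbar^{-1}\mathrm{Li}_2(w)+O(1)$, uniformly for $w$ in compact subsets of the unit disk; the extra ``$q$'' in arguments such as $\tfrac{x_bq^{d_b}}{u_\ell}q$ is harmless to leading order. Applying this termwise to $\Phi(\xb q^{\db},\ub)$ (the $b=c$ factors of the denominator contribute only the $\boldsymbol\zeta$-independent constant $\mathrm{Li}_2(1)$) together with $\ln z^{|\db|}=\hbar^{-1}\ln z\sum_{b}\ln(x_b/\zeta_b)$, the summand equals $\exp\!\big(\hbar^{-1}\mathcal{S}(\boldsymbol\zeta)+O(1)\big)$, where, up to an additive $\boldsymbol\zeta$-independent constant,
\[
\mathcal{S}(\boldsymbol\zeta)=-\sum_{b=1}^{k}\sum_{\ell=1}^{n}\mathrm{Li}_2\!\Big(\frac{\zeta_b}{u_\ell}\Big)+\sum_{b\ne c}\mathrm{Li}_2\!\Big(\frac{\zeta_b}{\zeta_c}\Big)-\ln z\sum_{b=1}^{k}\ln\zeta_b .
\]

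The saddle equations are $\zeta_j\,\partial_{\zeta_j}\mathcal{S}=0$ for $j\in[k]$. Using $\zeta\frac{d}{d\zeta}\mathrm{Li}_2(\zeta/a)=-\ln(1-\zeta/a)$ and $\zeta\frac{d}{d\zeta}\mathrm{Li}_2(a/\zeta)=\ln(1-a/\zeta)$, these read
\[
\sum_{\ell=1}^{n}\ln\!\Big(1-\frac{\zeta_j}{u_\ell}\Big)-\sum_{c\ne j}\ln\!\Big(1-\frac{\zeta_j}{\zeta_c}\Big)+\sum_{b\ne j}\ln\!\Big(1-\frac{\zeta_b}{\zeta_j}\Big)-\ln z=0 .
\]
Exponentiating and using the elementary identity $\dfrac{1-\zeta_b/\zeta_j}{1-\zeta_j/\zeta_b}=-\dfrac{\zeta_b}{\zeta_j}$ collapses the products over $b\ne j$ and $c\ne j$ into $\prod_{b\ne j}(-\zeta_b/\zeta_j)$, giving $\prod_{\ell}(1-\zeta_j/u_\ell)=z\prod_{b\ne j}(-\zeta_j/\zeta_b)$; renaming the saddle value $\zeta_j$ back to $x_j$ is exactly the asserted Bethe equation. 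The step needing the most care is the order bookkeeping in $\hbar$: one must confirm that everything placed in the prefactor ($G_\lambda$, the Gaussian fluctuation factor, and the $\mathrm{Li}_2(1)$ and $\sum_b\ln x_b$ pieces) is genuinely $o(\hbar^{-1})$ relative to $\mathcal{S}$, that replacing the lattice sum by an integral over the critical region is legitimate, and that the $q$-Pochhammer estimate holds uniformly where it is invoked — as in the cohomological argument, this is verified factor by factor.
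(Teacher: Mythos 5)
Your proposal is correct and follows essentially the same route as the paper: both arguments discard $G_\lambda$ from the exponential action, use the $q\to1$ asymptotics $\ln(w;q)_\infty\sim \mathrm{Li}_2(w)/\ln q$, absorb $z^{|\boldsymbol{d}|}$ into the action via $d_b=\ln(x_bq^{d_b}/x_b)/\ln q$, and exponentiate the resulting logarithmic saddle equation using the identity $(1-x_b/x_j)/(1-x_j/x_b)=-x_b/x_j$. Your version is somewhat more explicit about the change of variables $\zeta_b=x_bq^{d_b}$ and the $O(\hbar^{-1})$ bookkeeping, but the substance is identical.
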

\begin{proof}
   In the solution (\ref{psiKEQkn}), note $z^{d_i}$ can be written as
$$z^{d_i} = z^{\frac{\ln x_i q^{d_i}}{\ln q}}z^{\frac{-\ln x_i}{\ln q}}.$$
   Rewrite each $z^{d_i}$ of $z^{|\db|}$ as this.
   Let $\varphi$ denote the inside of the sum except for the double Grothendieck polynomials and each $z^{\frac{-\ln x_i}{\ln q}}$ (keep the numerator). The saddle point is defined by the equations $x_j \partial_j \ln(\varphi) =0$ for $j\in [k]$. First we use the following Lemma:

\begin{lemma}
    Take limit $q\to 1$. Asymptotically, we have
   $$x\frac{\partial \ln ((x,q)_\infty)}{\partial x} = \frac{-\ln(1-x)}{\ln(q)} +o(\ln(q)). $$
\end{lemma}
\begin{proof}
    Observe that 
    \begin{align*}
\ln (x,q)_\infty
&= \ln \prod_{m=0}^\infty (1 - x q^m) \\
&= \sum_{m=0}^\infty \ln(1 - x q^m) \\
&= - \sum_{m=0}^\infty \sum_{b=1}^\infty \frac{(x q^m)^b}{b} \\
&= - \sum_{b=1}^\infty \frac{x^b}{b} \sum_{m=0}^\infty q^{bm} \\
&= - \sum_{b=1}^\infty \frac{x^b}{b(1 - q^b)} \\
&\approx - \sum_{b=1}^\infty \frac{x^b}{-\,b^2 \ln q} \\
&= \frac{1}{\ln q} \sum_{b=1}^\infty \frac{x^b}{b^2} \\
&= \frac{\mathrm{Li}_2(x)}{\ln q},
\end{align*}
where the approximation $q^b=e^{b\ln q} \approx 1 +b\ln q$ was used since $q\to 1$. By taking derivatives we obtain our desired result.
\end{proof}
Also note that as $q\to 1$, we have $x_iq^{d_i} \approx x_i$. Then observe for $j\in [k]:$ 

\begin{align*}
x_j\frac{\partial}{\partial x_j}\ln(\varphi) &= 0\\
x_j\frac{\partial}{\partial x_j}\ln\left[\prod \limits_{b,c=1} ^k(x_j/x_c,q)_\infty\;\prod_{b=1}^k\prod_{\ell=1}^n(x_b/u_\ell ,q)_\infty^{-1}\; \prod \limits_{c=1} ^k z^{\frac{\ln x_c}{\ln q}}\right]
&=0
\\
\frac{1}{\ln q}\left(\sum_{c\neq j} -\ln\Bigl(1 - \tfrac{x_j}{x_c}\Bigr)
\;+\;
\sum_{b\neq j} \ln\Bigl(1 - \tfrac{x_b}{x_j}\Bigr)
\;+\;
\sum_{\ell=1}^n \ln\Bigl(1 - \tfrac{x_j}{u_\ell}\Bigr)
\;-\;\ln z
\right)&=0\\
\sum_{c\neq j} -\ln\Bigl(1 - \tfrac{x_j}{x_c}\Bigr)
\;+\;
\sum_{b\neq j} \ln\Bigl(1 - \tfrac{x_b}{x_j}\Bigr)
\;+\;
\sum_{\ell=1}^n \ln\Bigl(1 - \tfrac{x_j}{u_\ell}\Bigr)
\;-\;\ln z
&=0.
\end{align*}

Then by exponentiating both sides we obtain:
\begin{align*}
\exp\Biggl[
  -\sum_{c\neq j}\ln\Bigl(1 - \tfrac{x_j}{x_c}\Bigr)
  + \sum_{b\neq j}^k\ln\Bigl(1 - \tfrac{x_b}{x_j}\Bigr)
  + \sum_{\ell=1}^n\ln\Bigl(1 - \tfrac{x_j}{u_\ell}\Bigr)
\Biggr]
&=z
\\
\frac{
  \displaystyle\prod_{b\neq j}\Bigl(1 - \tfrac{x_b}{x_j}\Bigr)
  \;\;\displaystyle\prod_{\ell=1}^n\Bigl(1 - \tfrac{x_j}{u_\ell}\Bigr)
}{
  \displaystyle\prod_{c\neq j}\Bigl(1 - \tfrac{x_j}{x_c}\Bigr)
}
&=z
\\[6pt]
\prod \limits_{l=1}^n \left(1-\frac{x_j}{u_\ell}\right)&= z \prod \limits_{b\in \{1..k\} \setminus\  j}\frac{-x_j}{x_b} .
\end{align*}
\end{proof}

The solutions to the Bethe equations give eigenvalues of $M(z)$. Since the representative for $M(z)$ is $1-G_1(\xb ;\ub)=x_1\cdots x_k/{u_1\cdots u_k}$, the eigenvalues are formed by products of solutions to the Bethe equations.

The off shell Bethe Vector is
$$\begin{pmatrix}
    G_{\lambda^1} (\xb ;\ub) \\
    G_{\lambda^2} (\xb ;\ub)\\
    \vdots\\
    G_{\lambda^{\binom{n}{k}}}(\xb ;\ub).
\end{pmatrix}.$$

For each Bethe equation, substituting its roots for $\xb $ gives an eigenvector $M(z)$.

\subsection{Solution to QDE in \texorpdfstring{$\textrm{QK}(\textrm{Gr}(k,n))$}{QK}}

For the nonequivariant case, the QDE appears as 
\begin{equation}
    \Psi(qz)^{N} M(0) = M(z) \Psi(z)^{N}.
\end{equation}

Take the equivariant solution, and take the limit of the $u_\ell$ parameters to 1. We obtain:

\begin{equation}
     \Psi^N_\lambda(\xb ,z)
    =\sum_{\db} \Phi(\xb q^{\boldsymbol{d}})\, G_{\lambda} \left( \xb q^{\boldsymbol{d}}\right)\, z^{|\boldsymbol{d}|}, 
\end{equation}
where  $\Phi$ is defined as

$$\Phi(\xb ,\ub)\coloneq\frac{\prod\limits_{b=1}^k \prod\limits_
    {\ell=1}^n \left({x_b \, }q,q\right)_\infty}{\prod\limits_{b,c=1}^k \left(\frac{x_b }{x_c }q,q\right)_\infty} .$$

$G_\lambda$ is the stable Grothendieck polynomial   corresponding to $\lambda$ and\\ $G_\lambda(\xb q^{\boldsymbol{d}}  )=G_\lambda (x_1q^{d_1},\dots, x_kq^{d_k} )$.

\printbibliography

@article{Buch_2011,
 author = {Buch, Anders S. and Mihalcea, Leonardo C.},
 title = {Quantum {\(K\)}-theory of {Grassmannians}},
 fjournal = {Duke Mathematical Journal},
 journal = {Duke Math. J.},
 issn = {0012-7094},
 volume = {156},
 number = {3},
 pages = {501--538},
 year = {2011},
 language = {English},
 doi = {10.1215/00127094-2010-218},
 keywords = {14N35,19E08,14M15,14N15,14E08},
 zbMATH = {5868012},
 Zbl = {1213.14103}
}

@ARTICLE{givental2000wdvvequationquantumktheory,
  title     = {On the {WDVV} equation in quantum {K}-theory},
  author    = "Givental, Alexander",
  journal   = "Michigan Math. J.",
  publisher = "Michigan Mathematical Journal",
  volume    =  48,
  number    =  1,
  month     =  jan,
  year      =  2000
}

@misc{gorbounov2025quantumktheorygrassmanniansyangbaxter,
 author = {Gorbounov, Vassily and Korff, Christian and Mihalcea, Leonardo C.},
 title = {Quantum {K}--theory of {Grassmannians} from a {Yang}-{Baxter} algebra},
 year = {2025},
 howpublished = {Preprint, {arXiv}:2503.08602 [math.{AG}] (2025)},
 keywords = {14N35,82B23,14M15,14N15,17B80,55N20,81R12},
 url = {https://arxiv.org/abs/2503.08602},
 arXiv = {arXiv:2503.08602}
}

@misc{cotti2024satakecorrespondenceequivariantquantum,
 author = {Cotti, Giordano and Varchenko, Alexander},
 title = {On the {Satake} correspondence for the equivariant quantum differential equations and {qKZ} difference equations of {Grassmannians}},
 year = {2024},
 howpublished = {Preprint, {arXiv}:2409.09657 [math.{AG}] (2024)},
 keywords = {14F05,14N35,34M40,34M35},
 url = {https://arxiv.org/abs/2409.09657},
 arXiv = {arXiv:2409.09657}
}

@article{aganagic2017quasimapcountsbetheeigenfunctions,
 author = {Aganagic, Mina and Okounkov, Andrei},
 title = {Quasimap counts and {Bethe} eigenfunctions},
 fjournal = {Moscow Mathematical Journal},
 journal = {Mosc. Math. J.},
 issn = {1609-3321},
 volume = {17},
 number = {4},
 pages = {565--600},
 year = {2017},
 language = {English},
 keywords = {82B23,14N35,35Q40},
 url = {www.mathjournals.org/mmj/2017-017-004/2017-017-004-002.html},
 zbMATH = {7069273},
 Zbl = {1426.82018}
}

@article{givental1996equivariantgromovwitten,
 author = {Givental, Alexander B.},
 title = {Equivariant {Gromov}-{Witten} invariants},
 fjournal = {IMRN. International Mathematics Research Notices},
 journal = {Int. Math. Res. Not.},
 issn = {1073-7928},
 volume = {1996},
 number = {13},
 pages = {613--663},
 year = {1996},
 language = {English},
 doi = {10.1155/S1073792896000414},
 keywords = {55N91,53D45,53D40,53D42,81T70},
 zbMATH = {976024},
 Zbl = {0881.55006}
}

@article{buch2001quantumcohomologygrassmannians,
 author = {Buch, Anders Skovsted},
 title = {Quantum cohomology of {Grassmannians}},
 fjournal = {Compositio Mathematica},
 journal = {Compos. Math.},
 issn = {0010-437X},
 volume = {137},
 number = {2},
 pages = {227--235},
 year = {2003},
 language = {English},
 doi = {10.1023/A:1023908007545},
 keywords = {14N35,14M15,05E05},
 zbMATH = {1960067},
 Zbl = {1050.14053}
}

@article{mihalcea2004equivariantquantumschubertcalculus,
 author = {Mihalcea, Leonardo Constantin},
 title = {Equivariant quantum {Schubert} calculus},
 fjournal = {Advances in Mathematics},
 journal = {Adv. Math.},
 issn = {0001-8708},
 volume = {203},
 number = {1},
 pages = {1--33},
 year = {2006},
 language = {English},
 doi = {10.1016/j.aim.2005.04.002},
 keywords = {14N35,14M15,14N15},
 zbMATH = {5043530},
 Zbl = {1100.14045}
}

@article{Okounkov2022-tf,
 author = {Okounkov, A. and Smirnov, A.},
 title = {Quantum difference equation for {Nakajima} varieties},
 fjournal = {Inventiones Mathematicae},
 journal = {Invent. Math.},
 issn = {0020-9910},
 volume = {229},
 number = {3},
 pages = {1203--1299},
 year = {2022},
 language = {English},
 doi = {10.1007/s00222-022-01125-w},
 keywords = {14N35,14C35,19L47,17B37,14C05,17B67},
 zbMATH = {7569649},
 Zbl = {1504.14096}
}

@article{wheelerG,
 author = {Wheeler, Michael and Zinn-Justin, Paul},
 title = {Littlewood-Richardson coefficients for {Grothendieck} polynomials from integrability},
 fjournal = {Journal f{\"u}r die Reine und Angewandte Mathematik},
 journal = {J. Reine Angew. Math.},
 issn = {0075-4102},
 volume = {757},
 pages = {159--195},
 year = {2019},
 language = {English},
 doi = {10.1515/crelle-2017-0033},
 keywords = {05E14,05E10,14M15,14N15,57S25},
 zbMATH = {7139010},
 Zbl = {1428.05323}
}

@article{aggaC,
 author = {Aggarwal, Amol and Borodin, Alexei and Petrov, Leonid and Wheeler, Michael},
 title = {Free fermion six vertex model: symmetric functions and random domino tilings},
 fjournal = {Selecta Mathematica. New Series},
 journal = {Sel. Math., New Ser.},
 issn = {1022-1824},
 volume = {29},
 number = {3},
 pages = {138},
 note = {Id/No 36},
 year = {2023},
 language = {English},
 doi = {10.1007/s00029-023-00837-y},
 keywords = {05E05,60C05,82B20,81R12},
 zbMATH = {7679763},
 Zbl = {1512.05390}
}

@article{nekrasovq2009,
       author = {{Nekrasov}, N. and {Shatashvili}, S.},
        title = "{Quantum Integrability and Supersymmetric Vacua}",
      journal = {Progress of Theoretical Physics Supplement},
     keywords = {High Energy Physics - Theory},
         year = 2009,
        month = jan,
       volume = {177},
        pages = {105-119},
          doi = {10.1143/PTPS.177.105},
archivePrefix = {arXiv},
       eprint = {0901.4748},
 primaryClass = {hep-th},
       adsurl = {https://ui.adsabs.harvard.edu/abs/2009PThPS.177..105N},
      adsnote = {Provided by the SAO/NASA Astrophysics Data System}
}

@article {gorbkorff2017,
    AUTHOR = {Gorbounov, Vassily and Korff, Christian},
     TITLE = {Quantum integrability and generalised quantum {S}chubert
              calculus},
   JOURNAL = {Adv. Math.},
  FJOURNAL = {Advances in Mathematics},
    VOLUME = {313},
      YEAR = {2017},
     PAGES = {282--356},
      ISSN = {0001-8708,1090-2082},
   MRCLASS = {14M15 (05E05 14F43 19L47 55N20 55N22 82B23)},
  MRNUMBER = {3649227},
MRREVIEWER = {Xin\ Fang},
       DOI = {10.1016/j.aim.2017.03.030},
       URL = {https://doi.org/10.1016/j.aim.2017.03.030},
}

@article {pushkarsmirnovzeitlin,
    AUTHOR = {Pushkar, Petr P. and Smirnov, Andrey V. and Zeitlin, Anton M.},
     TITLE = {Baxter {$Q$}-operator from quantum {$K$}-theory},
   JOURNAL = {Adv. Math.},
  FJOURNAL = {Advances in Mathematics},
    VOLUME = {360},
      YEAR = {2020},
     PAGES = {106919, 63},
      ISSN = {0001-8708,1090-2082},
   MRCLASS = {19L47 (14M15 17B37 81R50)},
  MRNUMBER = {4035952},
MRREVIEWER = {Jun\ Pei},
       DOI = {10.1016/j.aim.2019.106919},
       URL = {https://doi.org/10.1016/j.aim.2019.106919},
}

@article {ueda3d,
    AUTHOR = {Ueda, Kazushi and Yoshida, Yutaka},
     TITLE = {3d {$\mathcal{N} = 2$} {C}hern-{S}imons-matter theory, {B}ethe
              ansatz, and quantum {$K$}-theory of {G}rassmannians},
   JOURNAL = {J. High Energy Phys.},
  FJOURNAL = {Journal of High Energy Physics},
      YEAR = {2020},
    NUMBER = {8},
     PAGES = {157, 43},
      ISSN = {1126-6708,1029-8479},
   MRCLASS = {81T45 (81T60 82B23)},
  MRNUMBER = {4176437},
       DOI = {10.1007/jhep08(2020)157},
       URL = {https://doi.org/10.1007/jhep08(2020)157},
}

@misc{mihalcealectures,
  title={Lectures on Quantum K Theory (Schubert Summer School, UIUC, Urbana-Champaign, June 2023)},
  author={Mihalcea, Leonardo Constantin},
url={https://personal.math.vt.edu/lmihalce/QKlectures(MSJ23).pdf}
}

@article{BERTRAM1997289,
title = {Quantum Schubert Calculus},
journal = {Advances in Mathematics},
volume = {128},
number = {2},
pages = {289-305},
year = {1997},
issn = {0001-8708},
doi = {https://doi.org/10.1006/aima.1997.1627},
url = {https://www.sciencedirect.com/science/article/pii/S0001870897916273},
author = {Aaron Bertram}
}

@article {LLSYseidel,
    AUTHOR = {Li, Changzheng and Liu, Zhaoyang and Song, Jiayu and Yang,
              Mingzhi},
     TITLE = {On {S}eidel representation in quantum {$K$}-theory of
              {G}rassmannians},
   JOURNAL = {Sci. China Math.},
  FJOURNAL = {Science China. Mathematics},
    VOLUME = {68},
      YEAR = {2025},
    NUMBER = {7},
     PAGES = {1523--1548},
      ISSN = {1674-7283,1869-1862},
   MRCLASS = {14N35 (14C35 19E08)},
  MRNUMBER = {4920795},
       DOI = {10.1007/s11425-024-2361-1},
       URL = {https://doi.org/10.1007/s11425-024-2361-1},
}

@article {rimanyitarasovvarchenko,
    AUTHOR = {Rim\'anyi, R. and Tarasov, V. and Varchenko, A.},
     TITLE = {Trigonometric weight functions as {$K$}-theoretic stable
              envelope maps for the cotangent bundle of a flag variety},
   JOURNAL = {J. Geom. Phys.},
  FJOURNAL = {Journal of Geometry and Physics},
    VOLUME = {94},
      YEAR = {2015},
     PAGES = {81--119},
      ISSN = {0393-0440,1879-1662},
   MRCLASS = {19L47 (14M15)},
  MRNUMBER = {3350271},
MRREVIEWER = {Jose\ Cantarero},
       DOI = {10.1016/j.geomphys.2015.04.002},
       URL = {https://doi.org/10.1016/j.geomphys.2015.04.002},
}

@article {maulikokounkov,
    AUTHOR = {Maulik, Davesh and Okounkov, Andrei},
     TITLE = {Quantum groups and quantum cohomology},
   JOURNAL = {Ast\'erisque},
  FJOURNAL = {Ast\'erisque},
    NUMBER = {408},
      YEAR = {2019},
     PAGES = {ix+209},
      ISSN = {0303-1179,2492-5926},
      ISBN = {978-2-85629-900-5},
   MRCLASS = {14D20 (14D21 16G20 17B37)},
  MRNUMBER = {3951025},
MRREVIEWER = {Leonardo\ Constantin\ Mihalcea},
       DOI = {10.24033/ast},
       URL = {https://doi.org/10.24033/ast},
}

@article {givental2001toda,
    AUTHOR = {Givental, Alexander and Lee, Yuan-Pin},
     TITLE = {Quantum {$K$}-theory on flag manifolds, finite-difference
              {T}oda lattices and quantum groups},
   JOURNAL = {Invent. Math.},
  FJOURNAL = {Inventiones Mathematicae},
    VOLUME = {151},
      YEAR = {2003},
    NUMBER = {1},
     PAGES = {193--219},
      ISSN = {0020-9910,1432-1297},
   MRCLASS = {14N35 (14C35 14M15 17B37 37K20 39A70 53D45)},
  MRNUMBER = {1943747},
MRREVIEWER = {Domenico\ Fiorenza},
       DOI = {10.1007/s00222-002-0250-y},
       URL = {https://doi.org/10.1007/s00222-002-0250-y},
}

@article {nekrasov2009,
    AUTHOR = {Nekrasov, Nikita A. and Shatashvili, Samson L.},
     TITLE = {Supersymmetric vacua and {B}ethe ansatz},
   JOURNAL = {Nuclear Phys. B Proc. Suppl.},
  FJOURNAL = {Nuclear Physics B. Proceedings Supplement},
    VOLUME = {192/193},
      YEAR = {2009},
     PAGES = {91--112},
      ISSN = {0920-5632},
   MRCLASS = {81T60 (81R12)},
  MRNUMBER = {2570974},
       DOI = {10.1016/j.nuclphysbps.2009.07.047},
       URL = {https://doi.org/10.1016/j.nuclphysbps.2009.07.047},
}

@incollection {nekrasov2010,
    AUTHOR = {Nekrasov, Nikita A. and Shatashvili, Samson L.},
     TITLE = {Quantization of integrable systems and four dimensional gauge
              theories},
 BOOKTITLE = {X{VI}th {I}nternational {C}ongress on {M}athematical
              {P}hysics},
     PAGES = {265--289},
 PUBLISHER = {World Sci. Publ., Hackensack, NJ},
      YEAR = {2010},
      ISBN = {978-981-4304-62-7; 981-4304-62-X},
   MRCLASS = {81R12 (37K10 81-02 81T13)},
  MRNUMBER = {2730782},
       DOI = {10.1142/9789814304634\_0015},
       URL = {https://doi.org/10.1142/9789814304634_0015},
}

\end{document}